\documentclass[11pt]{article}
\usepackage{fullpage}

\usepackage{amsmath}
\usepackage{amsfonts}
\usepackage{amsthm}
\usepackage{amssymb}
\usepackage{dsfont}
\usepackage{hyperref}
\usepackage{graphicx}
\usepackage{thm-restate}
\usepackage[ruled,vlined]{algorithm2e}
\usepackage{cleveref}
\usepackage{color}
\usepackage{thm-restate}
\usepackage{url}

\newcommand{\HW}{\textup{HW}}
\newcommand{\poly}{\textup{poly}}
\newcommand{\polylog}{\textup{polylog}}
\newcommand{\Hhigh}{\mathcal{H}_n^{\textup{high}}}
\newcommand{\Hlow}{\mathcal{H}_n^{\textup{low}}}

\newcommand{\trace}{{\rm Tr}}
\newcommand{\modulo}{{\rm mod}}
\newcommand{\Span}[1]{\operatorname{Span}(#1)}

\newcommand{\GSCON}{\textup{GSCON}}

\newcommand{\lin}[1]{\textup{L}\left(#1\right)}

\newcommand{\unitary}[1]{\textup{U}\left(#1\right)}
\newcommand{\herm}[1]{\textup{Herm}\left(#1\right)}

\newcommand{\comment}[1]{}

\newcommand{\norm}[1]{\left\|\,#1\,\right\|}       
\newcommand{\enorm}[1]{\norm{#1}_{\mathrm{2}}}      
\newcommand{\trnorm}[1]{\norm{#1}_{\mathrm {tr}}}  
\newcommand{\snorm}[1]{\norm{#1}_{\mathrm {\infty}}}    

\newcommand{\lmin}{\lambda_{\min}}
\newcommand{\set}[1]{{\left\{#1\right\}}}    

\newcommand{\abs}[1]{\left\lvert #1 \right\rvert}
\newcommand{\A}{\mathcal{A}}
\newcommand{\ayes}{A_{\textup{yes}}} 
\newcommand{\ano}{A_{\textup{no}}} 
\newcommand{\psihist}{\psi_{\textup{hist}}}

\newcommand{\Hkit}{H_{\textup{kit}}}
\newcommand{\Hamp}{H_{\textup{amp}}}
\newcommand{\LL}{T}
\newcommand{\un}[1]{\ket{\widetilde{#1}}}

\newcommand{\complex}{{\mathbb C}}
\newcommand{\reals}{{\mathbb R}}

\newcommand{\nats}{{\mathbb N}}
\newcommand{\s}{{\mathcal{S}}}

\def\ket#1{ | #1 \rangle}
\def\bra#1{{\langle #1 | }}
\newcommand{\ketbra}[2]{\ket{#1}\!\bra{#2}}        
\newcommand{\braket}[2]{\mbox{$\langle #1  | #2 \rangle$}}

\newcommand{\spa}[1]{\mathcal{#1}}

\newcommand{\hin}{H_{\rm in}}
\newcommand{\hprop}{H_{\rm prop}}
\newcommand{\hout}{H_{\rm out}}
\newcommand{\hstab}{H_{\rm stab}}
\newcommand{\brakett}[2]{\mbox{$\langle #1  | #2 \rangle$}}

\newcommand{\klh}{{k-local Hamiltonian }}

\newcommand{\PP}{\textup{P}}
\newcommand{\NP}{\textup{NP}}
\newcommand{\QMA}{\textup{QMA}}
\newcommand{\QMAo}{\QMA_1}
\newcommand{\StoqMA}{\textup{StoqMA}}
\newcommand{\cqs}{\textup{cq-}\Sigma_2}
\newcommand{\QCMA}{\textup{QCMA}}
\newcommand{\BQP}{\textup{BQP}}

\newtheorem{theorem}{Theorem}

\newtheorem{lemma}{Lemma}

\newtheorem{cor}[theorem]{Corollary}
\newtheorem{obs}[theorem]{Observation}

\newtheorem{definition}{Definition}
\newtheorem{exmp}{Example}[section]

\begin{document}

\title{Hardness of approximation for ground state problems}
\author{Sevag Gharibian\footnote{Department of Computer Science, and Institute for Photonic Quantum Systems, Paderborn University, Germany. Email: sevag.gharibian@upb.de.} \and Carsten Hecht\footnote{Department of Computer Science, Paderborn University, Germany. Email: checht2@campus.uni-paderborn.de.}}

\date{\today}


\maketitle

\begin{abstract} 
    After nearly two decades of research, the question of a quantum PCP theorem for quantum Constraint Satisfaction Problems (CSPs) remains wide open. As a result, proving QMA-hardness of approximation for ground state energy estimation, analogous to hardness of approximation for MAX-$k$-CSP, has remained elusive. (QMA is Quantum Merlin-Arthur, a quantum generalization of NP with a quantum proof and quantum verifier.) Recently, it was shown [Bittel, Gharibian, Kliesch, CCC 2023] that a natural problem involving variational quantum circuits is QCMA-hard to approximate within ratio $N^{1-\epsilon}$ for any $\epsilon>0$ and $N$ the input size. (Quantum Classical Merlin-Arthur is QMA, but with a classical proof.) Unfortunately, this problem was not related to quantum CSPs, leaving the question of hardness of approximation for quantum CSPs open.
    
    In this work, we show that if instead of focusing on ground state \emph{energies} (analogous to the optimal number of satisfied clauses), one considers computing \emph{properties} of the ground space (analogous to computing properties of the MAX-$k$-CSP solution space), QCMA-hardness of computing ground space properties can be shown. In particular, we show that it is (1) QCMA-complete within ratio $N^{1-\epsilon}$ to approximate the Ground State Connectivity problem (GSCON), and (2) QCMA-hard within the same ratio to estimate the amount of entanglement of a local Hamiltonian's ground state, denoted Ground State Entanglement (GSE). As a bonus, a simplification of our construction yields NP-completeness of approximation for a natural $k$-SAT reconfiguration problem, to be contrasted with the recent PCP-based PSPACE-hardness of approximation results for a different definition of $k$-SAT reconfiguration [Karthik C.S. and Manurangsi, 2023, and Hirahara, Ohsaka, STOC 2024].
\end{abstract}

\section{Introduction}\label{scn:intro}
Boolean constraint satisfaction problems (CSPs) and their quantum generalization, \emph{local Hamiltonian} problems, have enjoyed a close relationship over the last decade, both in terms of relevance and complexity. 
For starters, just as MAX-$k$-SAT is the canonical NP-complete problem for $k\geq 2$, the \emph{$k$-local Hamiltonian problem ($k$-LH)} is the canonical Quantum Merlin-Arthur (QMA)-complete problem for $k\geq 2$~\cite{kitaevClassicalQuantumComputation2002,kempejulia3localHamiltonianQMAcomplete2003,kempeComplexityLocalHamiltonian2006}. 
Likewise, whereas $2$-SAT can be solved in linear time and $3$-SAT is NP-complete, its analogous local Hamiltonian problem, dubbed \emph{Quantum $k$-SAT}~\cite{bravyiEfficientAlgorithmQuantum2006}, is linear-time solvable for $k=2$~\cite{aradLinearTimeAlgorithm2016,beaudrapLinearTimeAlgorithm2016} and $\QMAo$-complete\footnote{$\QMAo$ is QMA with perfect completeness.} for $k\geq 3$~\cite{bravyiEfficientAlgorithmQuantum2006,gossetQuantum3SATQMA1Complete2013}. 
Fancy a quantum generalization of Schaeffer's dichotomy theorem, which says that Boolean constraint satisfaction problems are either in \PP\ or \NP-complete~\cite{schaeferComplexitySatisfiabilityProblems1978}? Quantum CSPs have one of those too, stating that $k$-LH is either in \PP, \NP-complete, \StoqMA-complete, or \QMA-complete~\cite{cubittComplexityClassificationLocal2016}. 
Of course, the stars have not always aligned between classical and quantum CSPs --- for example, while Max-$2$-SAT on a 1D chain is efficiently solvable, $2$-LH on the line remains  QMA-complete (for sufficiently large, but constant, local dimension)~\cite{aharonovPowerQuantumSystems2009,hallgrenLocalHamiltonianProblem2013}, even when all constraints on the chain are identical (i.e. the translationally invariant setting)~\cite{gottesmanQuantumClassicalComplexity2009,bauschComplexityTranslationallyInvariant2017}.
But by and large, life in the world of classical versus quantum CSPs has been arguably\ldots \emph{peachy}.

That is, of course, until one brings up the topic of PCP theorems, or the closely related question of \emph{hardness of approximation}.
An infamous 2006 blog post issued the community a challenge: To establish a potential quantum PCP theorem. The blog \emph{also} stated that the problem was expected to be {hard}, and this has indeed proven true.
For it was not until 2022 that the field celebrated arguably its first major victory against the quantum PCP conjecture with the establishment of the No Low-Energy Trivial States (NLTS) theorem~\cite{freedmanQuantumSystemsNonkhyperfinite2014,anshuNLTSHamiltoniansGood2023}.
For the first time, this gave (explicit) local Hamiltonians whose ground state energy could not be approximated by constant depth quantum circuits.
This, in turn, is important, because if one believes $\NP\neq \QMA$, then local Hamiltonians $H$ arising from any candidate quantum PCP theorem should not have ``good'' NP witnesses, and constant-depth quantum circuits constitute one possible family of NP witnesses\footnote{This follows from a straightforward light-cone argument, since the local terms of $H$ each act only on $k\in O(1)$ qubits.}. 
With this said, it is unfortunately not clear how current NLTS constructions can be used to encode hard computational problems, leaving the question of hardness of approximation for QMA via a quantum PCP theorem open.

\paragraph{Quantum hardness of approximation without a quantum PCP theorem.} In this paper, we thus study the question --- can one nevertheless obtain hardness of approximation for quantum complexity classes \emph{without} a quantum PCP theorem? 

The answer is known to be \emph{yes}. 
In 1999, Umans showed how to obtain classical hardness of approximation for the second level of the Polynomial Hierarchy, $\Sigma_2^p$, without utilizing a PCP theorem; rather, the construction utilized dispersers~\cite{umansHardnessApproximatingSpl1999}.
By extending this approach to the quantum setting, the first quantum hardness of approximation result for a quantum complexity class was shown: The quantum SUCCINCT SET COVER problem is hard to approximate for a quantum generalization of the second level of the Polynomial Hierarchy, $\cqs$~\cite{gharibianHardnessApproximationQuantum2012}. 
($\cqs$ is a bounded-error quantum generalization of $\Sigma_2^p$, where the existentially quantified proof is classical, and the universally quantified proof is quantum.)
The shortcoming of this was that, unlike in the classical setting, quantum polynomial hierarchies (the plural is not a typo!) have not yet risen to the level of prominence of their classical cousin, PH; indeed, the area is in its infancy~\cite{yamakamiQuantumNPQuantum2002,lockhartQuantumStateIsomorphism2017,gharibianQuantumGeneralizationsPolynomial2018,falorCollapsiblePolynomialHierarchy2023,grewalEntangledQuantumPolynomial2024,agarwalQuantumPolynomialHierarchies2024,agarwalOracleSeparationsQuantumClassical2024}.
As a result, hardness of approximation for a more established class like QMA or one if its many variants (e.g. QCMA, QMA(2), etc\ldots; see~\cite{gharibianGuestColumnFaces2024} for a survey) would be preferable.

Here, we focus on Quantum Classical Merlin-Arthur (QCMA)~\cite{aharonovQuantumNPSurvey2002}, arguably second in prominence behind QMA, and defined as QMA but with a classical proof.
In~\cite{gharibianHardnessApproximationQuantum2012}, it was observed that a simple modification to the $\cqs$-hardness results therein also yields \QCMA-hardness of approximation for an artificial problem, Quantum Monotone Minimum Satisfying Assignment (QMSA, \Cref{def:QMSA}).
Building on this, the first natural hardness of approximation result for QCMA was given~\cite{bittelOptimalDepthVariational2023}, namely for the problem of estimating the optimal depth of a variational quantum circuit (MIN-VQA). 

\paragraph{Local Hamiltonians, ground spaces, and GSCON.} This brings us full circle to the starting theme of this paper --- CSPs. Specifically, MIN-VQA is not related to a quantum CSP. So, can one show QCMA-hardness of approximation for a natural computational problem for quantum CSPs? 

To answer this, recall first that a \emph{$k$-local Hamiltonian} is an $n$-qubit $2^n\times 2^n$ complex Hermitian matrix $H$ with a succinct representation $H=\sum_i H_i$, where analogous to a MAX-$k$-SAT clause acting on $k$ out of n bits, each $H_i$ is a Hermitian matrix or \emph{clause} acting non-trivially on some subset of $k$ out of $n$ qubits.
The problem $k$-LH then asks, given $H$, to estimate the \emph{ground state energy}, i.e. the smallest eigenvalue $\lmin(H)$. 
The corresponding set of optimal quantum assignments then form the \emph{ground space}, i.e. the span of eigenvectors $\ket{\psi}\in\complex^{2^n}$ with eigenvalue $\lmin(H)$.
The quantum CSP formulation of the quantum PCP conjecture~\cite{aharonovGuestColumnQuantum2013} then posits\footnote{For clarity, in this formulation we are assuming that $H$ is rescaled so that $\norm{H}\leq 1$, for $\norm{\cdot}$ the spectral norm.} that it is QMA-hard to decide if for positive semidefinite $H$, $\lmin(H)=0$ or $\lmin\geq c$ for some fixed constant $c>0$, under the promise that one of these cases holds.
As stated above, however, resolving this conjecture remains a difficult challenge.
Here, we show that by instead focusing on the natural problem of computing properties of the \emph{ground space} (i.e. the \emph{space} of optimal solutions), as opposed to the ground state energy (i.e the \emph{value} attained by all optimal solutions), QCMA-hardness of approximation can be achieved.

\paragraph{Results.} We now introduce the two and a ``half'' computational problems we study, and state our results. The first of these, GSCON, also has a natural classical counterpart from the classical study of \emph{reconfiguration problems}~\cite{gopalanConnectivityBooleanSatisfiability2009}, for which we show an analogous NP-hardness of approximation result; this is the ``half'' we refer to above.\\

\noindent \emph{Result 1: Ground State Connectivity (GSCON).} Introduced in~\cite{gharibianGroundStateConnectivity2015}, Ground State Connectivity (\Cref{def:GSCON}) is the physically motivated problem of deciding if the ground space of $H$ has an energy barrier.
The input is a $k$-local Hamiltonian $H$, two ground states $\ket{\psi}$ and $\ket{\phi}$ (specified via quantum circuits), and natural number $m$.
The output is whether there exists a sequence of at most $m$ $2$-local unitary gates $(U_1,U_2,\ldots U_m)$, such that two properties hold: 
\begin{enumerate}
    \item (the unitary sequence maps $\ket{\psi}$ to $\ket{\phi}$) $U_m\cdots U_2U_1\ket{\psi}\approx\ket{\phi}$, and 
    \item (all intermediate states are ground states) for all $i\in[m]$, $\ket{\psi_i}:=U_i\cdots U_1\ket{\psi}$ is a ground state of $H$. 
\end{enumerate}
Originally shown \QCMA-complete for $k=5$~\cite{gharibianGroundStateConnectivity2015}, it was subsequently found that, surprisingly\footnote{This is in contrast to the fact that $k$-LH for commuting Hamiltonians is \emph{not} known to remain QMA-hard, and in fact is in \NP\ for certain cases~\cite{bravyiCommutativeVersionLocal2005,aharonovComplexityCommutingLocal2011,schuchComplexityCommutingHamiltonians2011,aharonovComplexityTwoDimensional2018,iraniCommutingLocalHamiltonian2023}.}, 
\GSCON\ remains \QCMA-complete even on Hamiltonians with pairwise \emph{commuting} terms~\cite{gossetQCMAHardnessGround2017}. 
In the 1D translation invariant setting, GSCON remains hard, being QCMAEXP-complete~\cite{watsonComplexityTranslationallyInvariant2023}.

We now state our first main result, which shows that \GSCON\ is QCMA-hard to \emph{approximate}.
For this, we reformulate \GSCON\ to have two thresholds $m\leq m'$. 
We add the promise that there always exists a unitary sequence of length $\poly(N')$ satisfying conditions (1) and (2) above. In the YES case, this sequence has length at most $m$, whereas in the NO case, any such sequence is length at least $m'$. That such a sequence always exists even in the NO case is important to ensure the approximation ratio $m'/m$ is well-defined, for if no such unitary sequence satisfying (1) and (2) existed\footnote{Existing \GSCON\ QCMA-hardness constructions indeed have no such sequence in the NO case.}, the ratio can trivially be set to $m'/m\approx\infty$ by choosing arbitrarily large $m'$ for the NO case.

\begin{restatable}{theorem}{thmMain}\label{thm:GSCON}
    For all $\varepsilon > 0$ and $k\ge 5$, GSCON with $k$-local Hamiltonians is QCMA-complete for $\frac{m'}{m}\in \Theta(N'^{1-\epsilon})$, for $N'$ the encoding size of the GSCON instance.
\end{restatable}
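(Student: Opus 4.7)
The plan is to combine the disperser-based witness amplification technique recently used to obtain $N^{1-\epsilon}$ QCMA-hardness for MIN-VQA~\cite{bittelOptimalDepthVariational2023} with the Kitaev-clock-based QCMA-hardness construction for \GSCON{} of~\cite{gharibianGroundStateConnectivity2015}. First, I would start from a ``Hamming-weight-gap'' variant of a QCMA-complete minimum satisfying assignment problem (in the spirit of QMSA from~\cite{gharibianHardnessApproximationQuantum2012}, amplified via Umans-style dispersers~\cite{umansHardnessApproximatingSpl1999}), in which YES instances admit a classical witness of Hamming weight at most $s$ for a QCMA verifier $V$, while in NO instances every accepting witness has Hamming weight at least $s\cdot N^{1-\epsilon}$. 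The disperser gadget is responsible for inflating any low-weight witness into a full-length witness for the underlying QCMA instance, and simultaneously for ruling out low-weight ``cheating'' witnesses in the NO case.

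Next, I would embed this amplified instance into a \GSCON{} instance by plugging $V$ into the Gharibian--Sikora circuit-to-Hamiltonian construction, producing a $5$-local $H$ whose ground space contains, for each accepting witness, a connected component tracing out the corresponding verifier history through a clock register; the endpoints $\ket{\psi},\ket{\phi}$ are then prepared by fixed polynomial-size circuits so that any valid \GSCON{} traversal must simulate $V$ on some witness. The crucial modification is that each block of verifier gates is controlled on a single bit of the classical witness: a $0$-bit collapses its block into the identity on the computational registers (requiring only an $O(1)$-gate clock advancement), while a $1$-bit forces $\Omega(1)$ genuinely non-trivial $2$-local unitaries, together with their eventual uncomputation, along the traversal. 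In this way, the length of any valid $2$-local gate sequence is proportional, up to a $\polylog(N)$ additive overhead for clock and ancilla scaffolding, to the Hamming weight of the induced witness.

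The completeness/soundness analysis then converts the Hamming-weight gap into a gate-count gap: YES instances admit a traversal of length $m = O(s) + \polylog(N)$, while in NO instances any valid traversal has length at least $m' = \Omega(s\cdot N^{1-\epsilon})$. Picking $s$ polynomially large in $N$ and noting $N' = \poly(N)$ yields $m'/m \in \Theta(N'^{1-\epsilon'})$ for any desired $\epsilon'>0$. The QCMA upper bound is inherited from~\cite{gharibianGroundStateConnectivity2015}: the prover sends classical descriptions of the $U_i$, and Arthur checks via energy estimation on the simulated intermediate states $\ket{\psi_i}$ and a swap test at the endpoint that each $\ket{\psi_i}$ is close to the ground space and that $U_m\cdots U_1\ket{\psi}\approx\ket{\phi}$.

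The main obstacle I expect is enforcing both directions of the witness-to-sequence correspondence simultaneously. On the lower-bound side, the ground-space geometry of $H$ must rule out ``shortcuts'' that bypass a controlled verifier block without leaving the clock register in a non-ground configuration, so that the sequence length is genuinely lower bounded by the witness Hamming weight; this is delicate because the ground space of $H$ is large and has nontrivial global structure. On the upper-bound side, one must arrange the clock, disperser look-up, and uncomputation gadgets so that their combined $2$-local gate cost is only an additive $\polylog(N)$ term, so that it is swamped by the disperser-induced multiplicative gap in the final ratio and one can absorb the loss into a slightly larger $\epsilon'$.
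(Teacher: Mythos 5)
Your high-level plan matches the paper's skeleton: start from a QCMA-hard Hamming-weight-gap problem in the QMSA family, feed the verifier into a Kitaev-style circuit-to-Hamiltonian construction, append a GO register and argue via a Traversal-Lemma-type statement that every valid \GSCON{} path must at some point prepare a near-history state. You also correctly identify that the crux is to convert the Hamming-weight gap into a gate-count gap, and that the hard part is ruling out shortcuts by arbitrary $2$-local unitaries. The QCMA containment argument you sketch is also the right one.

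However, the specific amplification mechanism you propose does not work, and it misses the paper's key construction. You suggest controlling each verifier block on a witness bit so that $0$-bits cost $O(1)$ clock gates while $1$-bits cost $\Omega(1)$ gates plus uncomputation. This is an \emph{additive} cost model: across all $n$ proof bits and the underlying $T$ Kitaev timesteps, the traversal length is $\Omega(g) + O(n + T) = \Omega(g) + O(N)$ in the YES case and $\Omega(g') + O(N)$ in the NO case. Since $g$ and $g'$ are both $o(N)$ (indeed $g' = g\cdot N^{1-\epsilon}$ while the instance size is $N$), the $O(N)$ scaffolding term dominates both numerator and denominator and the ratio collapses to $\Theta(1)$, not $\Theta(N^{1-\epsilon})$. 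This is exactly the obstacle the paper highlights at the start of its Techniques section: the cost of building the history state depends on $|V|$, over which one has no control, and naive constructions yield ratio $(g+T)/(g'+T)\to 1$. There is also a more basic problem with the ``controlled block'' idea in the \GSCON{} setting: a \GSCON{} prover is free to apply \emph{any} sequence of $2$-local unitaries, not the verifier's gates, so making $V$'s gates controlled on proof bits does not by itself constrain the prover's gate budget.

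The paper's fix is \emph{multiplicative} amplification via three extra gadgets absent from your proposal: two redundant copies $F,G$ of the proof register, a second unary clock $KL$ of size $\Theta(n)$, and an amplification register $M$ of size $|W|=\Theta(N)$. The constraints only permit flipping proof bit $i$ (simultaneously in $B_i,F_i,G_i$, to forestall later tampering) at a unique timestep of $KL$, and immediately afterward force the prover to flip \emph{all} $\Theta(N)$ qubits of $M$ from $0^{|W|}$ to $1^{|W|}$ and then back before the next proof bit may be touched. Thus a Hamming-weight-$g$ proof costs $\Theta(gN)$ gates, and in the NO case $\Theta(g'N)$, giving the desired $g'/g$ ratio after the $O(n+N)$ scaffolding is absorbed. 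Soundness against arbitrary $2$-local unitaries then requires a dedicated argument (the ``two-timesteps'' lemma and Low Hamming Weight lemma): because adjacent timesteps of $KL$ are exactly $2$-orthogonal and non-adjacent ones are more-orthogonal, any low-energy intermediate state concentrates on at most two consecutive timesteps, so the prover cannot hop the clock or dodge the $M$-flips. Your proposal flags this as a ``delicate'' issue but provides no mechanism for it; in the paper this is where the bulk of the technical work lies, culminating in an application of the Extended Projection Lemma to handle general (not merely history-state) low-energy witnesses.
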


\noindent In words, the length of the minimal unitary sequence satisfying conditions (1) and (2) is QCMA-hard to approximate, even within large relative error scaling essentially linearly in the input size, $N'$.
Three comments: (1) The local Hamiltonians in \Cref{thm:GSCON} do not obey any particular geometry constraints.
(2) It is important that all hardness ratios in this paper be written relative to the \emph{encoding size} of the input of the problem being reduced to, and not, e.g., the number of qubits, $n$. 
This correctly captures standard classical hardness of approximation results for classical Boolean $k$-CSPs, where the hardness ratio is relative to the number satisfied \emph{clauses}. 
This is because the input size, $N'$, also scales with the number of clauses, assuming each clause has size $k\in O(1)$. 
Our setting has this property as well, since all clauses are norm at most $1$. 
(3) The reason we are careful to make point (2) is because, in general, the number of clauses can be \emph{superlinear} in $n$. Thus, mapping an approximation ratio given relative to $n$ to one relative to $N'$ can yield an asymptotically smaller ratio. 
This is why simple promise gap amplification techniques, such as taking parallel copies of all clauses, typically do not suffice to achieve hardness of approximation ratios such as in \Cref{thm:GSCON}.\\

\noindent \emph{Result 1.5: Classical analogue of GSCON, Boolean reconfiguration.} The origins of GSCON as a computational problem are rooted in the classical study of \emph{reconfiguration problems}, to which we now make a detour.
First studied in~\cite{gopalanConnectivityBooleanSatisfiability2009}, classical reconfiguration problems have since evolved into an entire research area (see, e.g., \cite{nishimuraIntroductionReconfiguration2018} for an introductory survey). 
In the case of $k$-SAT, the reconfiguration problem is specified as follows: Given an input $k$-SAT formula $\phi$ and satisfying assignments $x$ and $y$, does there exist a sequence of bit flips from $x$ to $y$, so that each intermediate string obtained is also satisfying for $\phi$? 
This is the classical analogue of GSCON, with the exception that the input does not specify a bound $m$ on the number of bit flips allowed --- in the worst case, one may require an \emph{exponential} number of bit flips.
As a result, its complexity is more difficult than NP, being PSPACE-complete~\cite{gopalanConnectivityBooleanSatisfiability2009}.
Recently,
the first PCP theorems for reconfiguration problems have been established~\cite{itoComplexityReconfigurationProblems2011,ohsakaGapPreservingReductions2023,s.InapproximabilityReconfigurationProblems2024,hiraharaProbabilisticallyCheckableReconfiguration2024}, allowing for the first PSPACE-hardness of approximation results for many reconfiguration problems~\cite{s.InapproximabilityReconfigurationProblems2024,hiraharaProbabilisticallyCheckableReconfiguration2024}.

By leveraging our proof technique for GSCON from \Cref{thm:GSCON}, we are able to give our own hardness of approximation result for $k$-SAT reconfiguration, albeit with respect to a different parameter than~\cite{s.InapproximabilityReconfigurationProblems2024,hiraharaProbabilisticallyCheckableReconfiguration2024}. We will first state our result, followed by a discussion of how it differs from~\cite{s.InapproximabilityReconfigurationProblems2024, hiraharaProbabilisticallyCheckableReconfiguration2024}. 
For this, we define the approximation problem Boolean Reconfiguration (BR, \Cref{def:BR}), which is the $k$-SAT reconfiguration problem above, but with thresholds $h$ and $h'$ on the number of bit flips allowed in the YES and NO cases (analogous to GSCON), respectively. 

\begin{restatable}{theorem}{thmBool}\label{thm:HardnessOfApproxReconfiguration}
    BR is NP-hard to approximate for $h'/h\in\Theta(N^{1-\varepsilon})$, for any constant $\varepsilon > 0$ and $N$ the size of the input BR instance.
\end{restatable}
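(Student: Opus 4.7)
My plan is to reduce 3-SAT to BR by a padding-style gadget that mirrors the GSCON construction of \Cref{thm:GSCON} in the Boolean setting. Given a 3-CNF $\phi$ on $n$ variables, I would build a new CNF $\phi'$ on $n + M + 2$ variables, comprising witness bits $x_1,\dots,x_n$, transit bits $t_1,\dots,t_M$ with $M = \lceil n^{1/\varepsilon}\rceil$, and two switch bits $s_0,s_1$. The clauses of $\phi'$ are
\[
\bigwedge_{C \in \phi} (C \lor s_0 \lor s_1) \;\land\; \bigwedge_{i=1}^M (t_i \lor \lnot s_0 \lor \lnot s_1),
\]
so that $(s_0,s_1)=(0,0)$ is satisfying only when $x$ satisfies $\phi$, and $(s_0,s_1)=(1,1)$ is satisfying only when $t = 1^M$. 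Each clause has width at most $5$, and I take as start and target the assignments $\alpha := (0^n, 0^M, 1, 0)$ and $\beta := (0^n, 0^M, 0, 1)$, both of which satisfy $\phi'$.

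In the YES case, if $\phi$ has a satisfying assignment $w$, I exhibit a short reconfiguration path: first flip the $x$-block from $0^n$ to $w$ under the ``shield'' $s_0 = 1$ (so every clause $C \lor s_0 \lor s_1$ is vacuously satisfied); then flip $s_0 \to 0$ (licensed because $\phi(w) = 1$), then $s_1 \to 1$; and finally reverse the $x$-block under the shield $s_1 = 1$. This yields $h \leq 2n + 2$. Regardless of satisfiability, a path of length $2M + 2$ always exists by first flipping $t$ to $1^M$, then transiting $(s_0,s_1)$ through $(1,1)$, and finally flipping $t$ back; this certifies the promise built into BR that some $\poly(N)$-length path always exists.

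In the NO case, I claim every valid reconfiguration requires at least $2M$ bit flips. Since any path must take $(s_0,s_1)$ from $(1,0)$ to $(0,1)$ via single-bit flips, it must pass through either $(0,0)$ or $(1,1)$. Unsatisfiability of $\phi$ rules out $(0,0)$ globally, so the path enters $(1,1)$ at least once, at which moment $t = 1^M$. Because $t = 0^M$ at both endpoints, each of the $M$ transit bits is flipped at least twice, giving $h' \geq 2M$. Consequently $h'/h \geq M/(n+1)$, and choosing $M = \lceil n^{1/\varepsilon}\rceil$ makes $N = \Theta(M)$, so that $h'/h = \Theta(N^{1-\varepsilon})$ as claimed. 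Containment of BR in NP is routine, since a reconfiguration path of length $\poly(N)$ is itself a succinct witness.

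The main obstacle is making the NO-case lower bound watertight: one must rule out clever paths that oscillate in and out of $(1,1)$ many times, or that attempt to realize $(1,1)$ with only part of $t$ set to $1$. My construction ties the hard constraint $t = 1^M$ directly to \emph{every} visit of the configuration $(1,1)$, which makes the argument robust, but I would still formalize it by tracking the Hamming weight of the $t$-block across the path timeline and invoking the observation that each $t_i$ must cross from $0$ to $1$ and back. A secondary bookkeeping task is to verify the $k$-locality (here $k = 5$, reducible to $k = 3$ by standard clause splitting if desired) and to confirm that $N = \Theta(n^{1/\varepsilon})$ so that the claimed $\Theta(N^{1-\varepsilon})$ gap emerges cleanly.
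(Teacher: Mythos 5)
Your proof is correct, and it takes a genuinely different route from the paper's. The paper reduces from MMSA (Umans' Monotone Minimum Satisfying Assignment, whose hardness gap $g'/g \in \Theta(N^{1-\varepsilon})$ comes from dispersers) and deliberately mirrors the full GSCON machinery -- encoding the MMSA circuit as a CNF, then adding GO variables, clock variables $k$ and $l$, triple proof copies $f$ and $g$, and an amplification block $a$, with thresholds inherited from $g$ and $g'$. You instead reduce directly from $3$-SAT via a padding gadget: a transit block of $M = \lceil n^{1/\varepsilon}\rceil$ bits that the reconfiguration path can avoid when $\phi$ is satisfiable (by detouring through the configuration $(s_0,s_1)=(0,0)$, which a satisfying assignment unlocks), but must fully traverse when $\phi$ is unsatisfiable (since the only remaining way to take $(s_0,s_1)$ from $(1,0)$ to $(0,1)$ is through $(1,1)$, which forces $t=1^M$). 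Your argument is more elementary -- it does not rely on dispersers, MMSA, or any of the GSCON-specific clock and amplification structure -- and the NO-case lower bound is clean because the $t_i$-clauses tie $t=1^M$ to \emph{every} visit of $(1,1)$. The trade-off is that the gap in your construction arises from a polynomial blowup in the instance ($N \in \Theta(n^{1/\varepsilon})$), so the reduction has degree growing as $\varepsilon \to 0$; the paper's gap-preserving reduction keeps $N' \in \Theta(N)$. You should note two points of bookkeeping: (1) $N = \Theta(M)$ only holds once $M$ dominates $|\phi| \in O(n^3)$, i.e., for $\varepsilon$ small enough, but since hardness at a larger ratio $N^{1-\varepsilon_0}$ implies hardness at any smaller ratio $N^{1-\varepsilon}$ with $\varepsilon > \varepsilon_0$, this suffices for all constant $\varepsilon>0$; and (2) if you split $5$-clauses down to $3$-clauses, the auxiliary variables must be accounted for in the reconfiguration paths, which adds a constant factor per clause but does not change the asymptotics.
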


\noindent The main differences between \Cref{thm:GSCON} and \cite{s.InapproximabilityReconfigurationProblems2024,hiraharaProbabilisticallyCheckableReconfiguration2024} are now as follows. 
First, our proof does not rely on a PCP as in~\cite{s.InapproximabilityReconfigurationProblems2024,hiraharaProbabilisticallyCheckableReconfiguration2024}, but rather uses Umans' disperser-based hardness gap construction as a starting point.
Second, our hardness of approximation is with respect to the \emph{length} of the reconfiguration sequence. 
In~\cite{s.InapproximabilityReconfigurationProblems2024,hiraharaProbabilisticallyCheckableReconfiguration2024}, in contrast, there is no length parameter $m$. 
Rather, hardness of approximation therein is shown in the following sense: There exists a constant $\epsilon>0$ such that, in the YES case, there exists a satisfying reconfiguration sequence from $x$ to $y$, and in the NO case, for any reconfiguration sequence, there exists an intermediate string which satisfies at most a $(1-\epsilon)$-fraction of the clauses of $\phi$, for some fixed constant $\epsilon>0$. 
In this sense, \Cref{thm:HardnessOfApproxReconfiguration} is complementary to \cite{s.InapproximabilityReconfigurationProblems2024, hiraharaProbabilisticallyCheckableReconfiguration2024} in both its proof technique and actual result.\\ 

\vspace{-1mm}

\noindent\emph{Result 2. Ground State Entanglement (GSE).} We now return to the quantum setting, and define the 
 second natural quantum problem we study, which asks whether a given local Hamiltonian has a ground state of low entanglement. 
 More formally, we define the Ground State Entanglement problem (GSE, \Cref{def:GSE}), 
for which the input is a local Hamiltonian $H$, prespecified cut $A$ versus $A'$ of the qubits $H$ acts on, and inverse polynomially separated thresholds $\eta_4>\eta_3$.
The output is to decide whether $H$ has a ground state with entanglement entropy at most $\eta_3$ across the $A$ versus $A'$ cut, or whether all ground states have entanglement entropy at least $\eta_4$ across this cut? 
Here, the \emph{entanglement entropy} is a standard entanglement measure, defined for a bipartite pure state $\ket{\psi}_{AA'}$ as $S(\rho_A)$, for reduced state $\rho_A:=\trace_{A'}(\rho_{AA'})$ and $S(\rho):=-\trace(\rho\log \rho$) the von Neumann entropy.

Our second main result is as follows.

\begin{restatable}{theorem}{thmGSE}\label{thm:GSE}
    For all $\varepsilon > 0$ and $k\ge 5$, GSE is QCMA-hard for $\frac{\eta_4}{\eta_3}\in \Theta(N'^{1-\epsilon})$, for $N'$ the encoding size of the GSE instance.
\end{restatable}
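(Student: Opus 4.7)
The plan is to piggyback on the Umans-style construction used to prove \Cref{thm:GSCON}, adapting it so that the witness-length gap translates into a gap in ground-state entanglement rather than in reconfiguration length. Starting from the same gap-amplified QCMA verifier $V_x$, YES instances admit a short accepting classical witness $y^{\ast}$, whereas NO instances force the accepting set $S\subseteq\set{0,1}^\ell$ to have size $|S|=2^{\Omega(N'^{1-\varepsilon})}$. Apply Kitaev's circuit-to-Hamiltonian map to $V_x$ to obtain a $k$-local Hamiltonian $H$ acting on a clock register $C$, a classical-witness register $W$, and an ancilla register $B$, and designate the GSE bipartition as $W$ versus $C\cup B$.

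Since $V_x$ never overwrites $W$, for every candidate witness $y$ the history state
\[
\ket{\psi_y}\;=\;\tfrac{1}{\sqrt{T+1}}\sum_{t=0}^{T}\ket{t}_C\otimes\ket{y}_W\otimes V_t\cdots V_1\ket{0}_B
\]
factorizes as $\ket{y}_W\otimes\ket{\phi_y}_{CB}$ and is therefore a product state across the cut. In the YES case $\ket{\psi_{y^{\ast}}}$ is a (near-)ground state, so $\eta_3$ may be chosen inverse-polynomial, absorbing verifier soundness error. In the NO case we instead need \emph{every} ground state to carry entanglement entropy $\Omega(N'^{1-\varepsilon})$ across the cut; here the naive Kitaev construction fails, since its ground space $\mathrm{span}\{\ket{\psi_y}:y\in S\}$ itself contains the product history states $\ket{\psi_y}$. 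We therefore augment $H$ by a locally-checkable ``symmetrization'' gadget whose purpose is to collapse the ground space onto (or arbitrarily close to) the uniform superposition $|S|^{-1/2}\sum_{y\in S}\ket{\psi_y}$, whose reduction to $W$ is $|S|^{-1}\sum_{y\in S}\ketbra{y}{y}$. The entropy then scales as $\log|S|\in\Omega(N'^{1-\varepsilon})$, yielding the desired $\eta_4/\eta_3\in\Theta(N'^{1-\varepsilon})$ after padding with idle qubits and rescaling to the encoding size $N'$.

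The principal obstacle is the design of the symmetrization gadget: it must remain $k$-local for $k\ge 5$, it must not disturb the YES-case product ground state, and in the NO case it must eliminate every product state from the low-energy subspace so that the entropy bound holds \emph{uniformly} over the ground space rather than merely on some particular minimizer. Natural candidates are a projection onto a Slater-determinant/antisymmetric sector of a slightly enlarged Hilbert space, in which no cut-product state survives, and a soft uniform-distribution penalty implemented via auxiliary control qubits; either route will require a careful spectral-gap analysis showing that the energetic penalty on non-uniform ground states exceeds the combined verifier soundness and symmetrization error, so that the resulting estimate for $\eta_4$ is robust. The YES-case upper bound on $\eta_3$ is comparatively routine once the cut is aligned with the classical witness register.
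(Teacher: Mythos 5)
Your proposal takes a genuinely different route, and it has two real gaps. The first is the one you flag yourself: the ``symmetrization gadget'' does not exist in the form you need. A $k$-local Hamiltonian whose low-energy space collapses to (or near) the uniform superposition $|S|^{-1/2}\sum_{y\in S}\ket{\psi_y}$ while excluding every product history state would be doing a great deal of work --- in the NO case the Kitaev ground space genuinely contains the product states $\ket{\psi_y}$ for each accepted $y$, and no local penalty can ``see'' which global superposition is being formed without something like a PCP-style argument. Asserting the gadget as an obstacle to be overcome is not a proof; the paper's construction is designed precisely to avoid ever needing such a gadget.

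The second gap is a quantitative one that would persist even if you had the gadget. In QMSA the NO promise only says that no string of Hamming weight $\le g'$ is accepted; it says nothing about $|S|$. The accepting set $S$ is monotone and nonempty, but can be as small as a single high-weight string (e.g.\ $S=\{1^n\}$), in which case $\log|S|=0$ and your claimed $\Omega(N'^{1-\varepsilon})$ entropy lower bound evaporates. Entanglement of a uniform superposition over $S$ across the witness cut is governed by $\log|S|$, not by $g'$, so this route cannot deliver the required ratio. The paper sidesteps this by making entanglement a direct function of Hamming weight: it appends controlled ``Bell-pair-creation'' gates $S_2$ so that each $1$ in the proof produces one unit of entanglement across the designated cut, and it coherently copies the proof into registers on \emph{both} sides of the cut (the $S_1$ gates into $E,E'$) so that cross terms between different proof strings cannot conspire to cancel the entanglement in a superposition. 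The NO-case analysis then only needs Hamming weight $\ge g'$ on whatever classical proof support a near-ground history state carries, together with the Extended Projection Lemma and Fannes to pass from exact history states to arbitrary low-energy states. Your observation that the history state is a product across the witness cut is correct and is consistent with the paper's YES-case reasoning, but the NO case requires tying entanglement to Hamming weight, not to $|S|$.
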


\noindent In words, it is QCMA-hard to estimate the minimal entanglement entropy over all states in the ground space, even within large relative error $N'$.
Two comments:
First, unlike \Cref{thm:GSCON}, it is not clear that GSE is also \emph{in} in \QCMA. 
This is because verifying the entropy of a given quantum state $\ket{\psi}$ is, in general, at least as hard as Quantum Statistical Zero Knowledge (QSZK)~\cite{watrousLimitsPowerQuantum2002}.
In particular, when one has access to a poly-size quantum circuit preparing $\ket{\psi}$, the problem is QSZK-complete~\cite{ben-aroyaQuantumExpandersMotivation2008}. 
However, in GSE one does \emph{not} have access to such a circuit, since $\ket{\psi}$ is an unknown ground state of $H$! 

Second, there are two previous works~\cite{gheorghiuEstimatingEntropyShallow2024,boulandPublicKeyPseudoentanglementHardness2024} we are aware of which study questions similar to GSE, whose relationship to \Cref{thm:GSE} we now clarify.
The first~\cite{gheorghiuEstimatingEntropyShallow2024} defines the Hamiltonian Quantum Entropy Difference (HQED) problem, which takes two local Hamiltonians as input and a pre-specified cut $A$ versus $A'$, and asks: Among all ground states $\ket{\psi_1}$ of $H_1$ and $\ket{\psi_2}$ of $H_2$ of minimal entanglement entropy, which of $\ket{\psi_1}$ or $\ket{\psi_2}$ has larger entanglement entropy across $A$ versus $A'$? 
Here, the promise gap in entropy is an additive constant (i.e. it is not a hardness of approximation result), and the hardness obtained is for the Learning With Errors (LWE) problem~\cite{regevLatticesLearningErrors2009}, not QCMA. 
In words, Reference~\cite{gheorghiuEstimatingEntropyShallow2024} shows that if one could efficiently estimate the entanglement entropy difference between ground states of two given Hamiltonians within constant additive error, then one could solve LWE. 
Note the relationship between LWE and QCMA is not known, other than the belief that LWE is not in $\BQP$ (e.g.~\cite{mahadevClassicalVerificationQuantum2018}), and thus not in \QCMA\ (since $\BQP\subseteq \QCMA$ by definition).

The second relevant work~\cite{boulandPublicKeyPseudoentanglementHardness2024} introduces the Learning Ground State Entanglement Structure (LGSES) problem. 
The question is then, roughly, to determine whether the ground states of a geometrically constrained $H$ have volume law or area law entanglement across a given set of cuts.
First, the differences: Like~\cite{gheorghiuEstimatingEntropyShallow2024}, the hardness results obtained in \cite{boulandPublicKeyPseudoentanglementHardness2024} are for LWE, not QCMA.  
Second, the {input model} to LGSES is not the standard one used in defining complexity classes, but rather has a cryptographic flavor.
Namely, there are two families of computationally indistinguishable Hamiltonians, $\Hlow$ and $\Hhigh$ (which we call ``YES'' and ``NO'' cases for reference), so that when $H$ is drawn from either $\Hlow$ or $\Hhigh$ according to an appropriate random distribution, with overwhelming probability the entanglement entropy will be low or high, respectively. In constrast, our input to GSE is a single Hamiltonian $H$ along with a cut $A$ versus $A'$ (i.e. the standard QCMA input model).
Now, the similarity:
The promise gap for~\cite{boulandPublicKeyPseudoentanglementHardness2024} on the entanglement entropy between the ``YES'' and ``NO'' cases is $O(\polylog(n))$ versus $\Omega(n)$, i.e. a ratio of $(n/\polylog(n))$. 
However, this gap is relative to the number of qubits $n$, not the encoding size of the input, $N'$, as in \Cref{thm:GSE}.
Our understanding\footnote{Due to the multiplication of an $n\times n$ matrix $A$ with $n$-bit string $x$ in Definition 3.2 of the arxiv version\cite{boulandPublicKeyPseudoentanglementHardness2024}.} is that $N'\in\Omega(n^2)$ in \cite{boulandPublicKeyPseudoentanglementHardness2024}, yielding ratio at most $\sqrt{N'}/\polylog(N')$.
Thus, relative to the input model therein, this may be viewed as an LWE-hardness of approximation result, albeit with a quadratically weaker ratio than \Cref{thm:GSE}.

Finally, we remark there are minor overlaps between our techniques and those of \cite{gheorghiuEstimatingEntropyShallow2024,boulandPublicKeyPseudoentanglementHardness2024}, such as the use of circuit-to-Hamiltonian constructions. 
However, the key underlying approach for obtaining our QCMA-hardness results is completely different.


\paragraph{Techniques.} We begin by discussing \Cref{thm:GSCON} for GSCON, followed by \Cref{thm:GSE} for GSE. The proof of \Cref{thm:HardnessOfApproxReconfiguration} for BR is a classical analogue of the proof for GSCON, with some modifications to the clock construction; as such we omit it here.\\

\vspace{-1mm}
\noindent\emph{Techniques for GSCON.} We show a gap-preserving reduction from the artificial QCMA-hard to approximate problem Quantum Monotone Minimum Satisfying Assignment (QMSA, \Cref{def:QMSA}), roughly defined as follows:
The input is a quantum circuit $V=V_{\LL}\cdots V_1$ accepting a non-empty monotone set $S\subseteq \{0,1\}^n$, and integer thresholds $0\le g\le g'\le n$. The output is to decide whether there is a low Hamming weight string $x\in \{0,1\}^n$, meaning of Hamming weight at most $g$, accepted by $V$, or if every $x\in \{0,1\}^n$ of Hamming weight at most $g'$ is rejected by $V$, with the promise that one of these two cases holds. 
Our starting point is the QCMA-hardness construction of~\cite{gharibianGroundStateConnectivity2015}, which we must briefly sketch (more details in \Cref{scn:preliminaries}).
Namely, one first applies a \emph{circuit-to-Hamiltonian} construction~\cite{kitaevClassicalQuantumComputation2002} to $V$, obtaining a local Hamiltonian $\Hkit$ encoding the action of $V$ as follows: (1) $\lmin(H)$ is ``small'' if and only if $V$ accepts the string $x$ it is given, and (2) the quantum state achieving this low energy is the \emph{history state} of $V$, 
\begin{equation}
    \ket{\psihist(x)}:=\frac{1}{\sqrt{\LL+1}}\sum_{t=0}^{\LL} V_i\cdots V_1 \ket{x}_B\otimes\ket{0}_C\otimes\ket{t}_D,
\end{equation}
where $x$ is the string fed to $V$ in register $B$, $C$ is the ancilla, and $D$ is the clock register tracking time, $t$.
To turn this into a GSCON instance, one then appends a $3$-qubit ``GO'' register $E$, and sets the final Hamiltonian to $H=(\Hkit)_{BCD}\otimes P_E$ for $P:=(I-\ketbra{000}{000}-\ketbra{111}{111})_E$.
Finally, the initial and target states are $\ket{\psi}=\ket{0\cdots 0}_{BCD}\ket{000}_E$ and $\ket{\phi}=\ket{0\cdots 0}_{BCD}\ket{111}_E$, respectively.
The intuition is now as follows: In the YES case, an honest prover can first prepare the history state $\ket{\psihist(x)}$ on registers $BCD$ based on a Hamming weight $g$ string $x$, obtaining $\ket{\psihist(x)}_{BCD}\ket{000}_E$, which is still in the null space of $H$. 
The goal is now to flip the GO qubits, but since we are restricted to $2$-local gates, this must be done in two steps, e.g. the first two bits of $E$ are first flipped to $\ket{11}$.
At this point, our state has high overlap with $P_E$, ``switching on'' the Hamiltonian $\Hkit$, which checks that our history state indeed has low energy.
The last bit of $E$ can now be flipped, and the history state uncomputed in order to arrive at target state $\ket{\phi}$.
Soundness follows via the Traversal Lemma (\Cref{l:traversal}), which shows that it is impossible\footnote{Actually, this statement is not entirely true --- it \emph{is} always possible to perform such a mapping without ever having more than inverse exponential overlap with $P$~\cite{gharibianGroundStateConnectivity2015,gharibianQuantumSpaceGround2023}. However, this necessarily requires an \emph{exponential} number of $2$-qubit gates.} to map $\ket{000}_E$ to $\ket{111}_E$ via $2$-local gates without preparing an intermediate state with non-trivial overlap with $P$. And when this overlap occurs, $\Hkit$ will administer a large energy penalty to any history state prepared in $BCD$ based on a string of Hamming weight at most $g'$.

At first glance, this construction already seems to give hardness of approximation when applied to $V$ from QMSA --- in the YES case, the history state is based on a low Hamming weight $g$ string $x$, whereas in the NO case, any low-energy history state must encode a high Hamming weight $g'$ string. Since $g'/g\in \Theta(N^{1-\epsilon})$, this suggests preparing the history state in the YES case requires much fewer gates than in the NO case, which translates into a short unitary sequence for GSCON in the YES case versus a long one in the NO case. 
The catch is that, while it takes $g$ (respectively, $g'$) bits to prepare $x$ in the YES (respectively, NO) case, the cost of preparing the superposition in $\ket{\psihist(x)}$ given $x$ depends on the number of gates $T$ in $V$, which we have no control over! Thus, the approximation ratio achieved scales roughly as $(g+T)/(g'+T)$, which approaches $1$ if $T$ grows superlinearly in the proof size, which is in general the case.

A similar problem was overcome in the QCMA-hardness of approximation result for MIN-VQA by artificially ``amplifying the cost'' of flipping each bit while preparing string $x$~\cite{bittelOptimalDepthVariational2023}, so as to make the cost of preparing $x$ scale roughly as $gT$ rather than $T$.
However, one has much more control for MIN-VQA when designing a reduction, for the following reason. The input therein is a set of local Hamiltonians $\set{G_1,
\ldots G_l}$, and the question is whether given an initial state $\ket{\psi}$, one can prepare a specified target state $\ket{\phi}$ by applying Hamiltonian evolution according to the $G_i$, i.e. $e^{iG_{j_d}\theta_{j_d}}\cdots e^{iG_{j_1}\theta_{j_1}}\ket{\psi}\approx\ket{\phi}$ (here, the $G_i$ are allowed to be applied in any order with repetition, and with arbitrary evolution angles $\theta$)? 
Thus, when designing a hardness of approximation reduction, one can construct the $G_i$ so that flipping a bit of $x$ is artificially costly in terms of the depth $d$ of the sequence of Hamiltonian evolutions. In contrast, for GSCON we do not have the luxury of forcing a prover to evolve according to any particular set of Hamiltonians --- a dishonest prover can simply apply \emph{any} sequence of unitaries desired, so long as each gate is $2$-local.

To overcome this challenge, we begin with the basic GSCON setup outlined above, but move to a \emph{triple} clock construction. 
Briefly, in addition to register $BCDE$ where $B$ stores the $n$-bit string input to $V$, we (1) add two additional clock registers $K$ and $L$ of sizes $4n$ and $2$, respectively, (2) add two registers $F$ and $G$ each of size $n$ and intended to hold copies of proof $x$, and (3) finally an ``amplification'' register $M$.
The basic premise is now that an honest prover proceeds in two phases. 
In the first phase, it manipulates clocks $KL$ to be able to first prepare its desired proof $x$ in $B$. 
The Hamiltonian constraints we design are such that there is a unique time step $t_i$ in which bits $B_i$, $F_i$, and $G_i$ can be acted on. 
At time $t_i$, if desired, the prover flips bits $B_i$, $F_i$ and $G_i$, thus creating triple redundancy.
It is then forced to flip \emph{all} (roughly) $T$ bits in the amplification register $M$ from all zeroes to all ones, because the next time step $t_i+1$ will administer an energy penalty if $B_i$ was set to $1$ but the qubits of $M$ still $0$.
It is now the triple redundancy on $BFG$ that ensures that, once we leave timestep $t_i$, it is impossible to change bit $B_i$, because any single $2$-local gate will break the equality constraints we place between $B_i$, $F_i$, and $G_i$ on all time steps other than $t_i$.
Next, moving to time $t_i+2$, the prover is allowed to uncompute $M$, and subsequently goes to time $t_i+3$, activating another Hamiltonian check that register $M$ is correctly reset to all zeroes.
Finally, we move to time $t_i+4$, and the entire process repeats for the next proof bit $B_{i+1}$.
This first phase continues until we arrive at the final time step on clocks $KL$, which de-activates a constraint preventing the GO register $E$ from being acted upon.
In the second phase, the prover can now build a history state on registers $BCD$, and subsequently flip the GO qubits as in the basic setup to activate $\Hkit$, which checks the history state.
For clarity, it is the flipping of all $T$ bits in the amplification register, $M$, which allows us to achieve our desired approximation ratio.

To make this logic sound, it is imperative for the clocks $KL$ to be carefully designed. The clock sequence we construct for the first phase is
\begin{align}
    &\ket{00000\ldots0}_K\ket{00}_L,\\
    &\ket{10000\ldots0}_K \ket{10}_L,\\
    &\ket{11000\ldots0}_K\ket{11}_L,\\
    &\ket{11100\ldots0}_K\ket{01}_L,\\
    &\ket{11110\ldots0}_K \ket{00}_L\text{, etc}\ldots.
\end{align}
which has three important properties enabling the soundness analysis to work: (1) Moving from a timestep $t_i$ to $t_{i+1}$ requires application of a $2$-qubit unitary on a \emph{unique} pair of qubits $q_{i,1}$ and $q_{i,2}$. (2) It is impossible to jump from $t_i$ to (say) $t_i+2$ via a single $2$-local gate. (3) To satisfy the YES case conditions of GSCON, every intermediate state $\ket{\psi_l}$ computed must have essentially all its amplitude on some \emph{single} timestep $t_{i_l}$ --- for if not, any $2$-qubit unitary attempting to increment the time from $t_i$ to $t_{i+1}$ will necessarily lead to non-trivial weight being placed on an \emph{invalid} clock state, which is penalized.\\

\vspace{-1mm}
\noindent\emph{Techniques for GSE.} We now discuss \Cref{thm:GSE} for estimating ground state entanglement, which again proceeds via a gap-preserving reduction from QMSA. Thus, given a circuit $V$ which either accepts a low Hamming weight $g$ proof or only high Hamming weight $g'$ proofs, our goal is to construct a Hamiltonian with a low-energy state of low entanglement across a pre-specified cut in the YES case, or only highly entangled low-energy states in the NO case. The main idea is to add additional gates to $V$, each of which is controlled on a different proof qubit. Then, for each proof qubit set to $1$, the corresponding added gate creates a Bell pair on a new register. This ensures that when there is a low (respectively, high) Hamming proof in the proof register, $V$ prepares a low- (respectively, high-) entanglement proof across a certain cut. Applying Kitaev's circuit-to-Hamiltonian construction~\cite{kitaevClassicalQuantumComputation2002}  now yields a local Hamiltonian whose history state is correspondingly entangled. This works as desired for the YES case, when the history state is also a low energy state of $\Hkit$. However, the main technical hurdle is that in the NO case, we must show that \emph{all} low energy states are highly entangled, not just the honest prover history state. For example, if the proof register contains a \emph{superposition} over multiple high Hamming weight proofs (each of whose amplitudes can differ, and whose $1$'s can be in different positions), it is not as straightforward to argue that the corresponding ground state generated is highly entangled. 

A natural first idea is to try the standard QCMA trick~\cite{wocjanSeveralNaturalBQPComplete2006} of having $V$ immediately measure its proof register upon reading it to destroy such a superposition, or equivalently immediately copy its proof to a fresh ancilla and apply the principle of deferred measurement. However, this principle requires all qubit copies in the output to be traced out at the end of the computation, which is indeed the case for a general QCMA verifier interested in just measurement statistics on its designated output qubit. In our case, in contrast, we wish to quantify entanglement of the \emph{full} output state across our cut. To bypass this, we instead coherently copy the proof to \emph{two} new registers, and do {not} assume any qubits of our circuit's output are traced out. Crucially, these registers will be on different sides of the cut for the entanglement entropy. Thus, when we consider a Schmidt decomposition for computing the entropy across this cut, it will depend on amplitudes corresponding to standard basis states which contain \emph{both} a classical proof and some standard basis state on the Bell pair register. This allows us to reduce the analysis to the case where the proof register just contains \emph{one} classical state.

A soundness analysis can now be run when the prover sends an arbitrary history state. However, when the prover sends an arbitrary low energy state, two more tricks are required. First, we weight the $\hin+\hprop+\hstab$ terms of $\Hkit$ with a large penalty term and apply the (Extended) Projection Lemma~\cite{kempejulia3localHamiltonianQMAcomplete2003,gharibianComplexitySimulatingLocal2019} to argue that any low-energy state must be close to a history state. We then apply the Fannes inequality \cite{Fannes:1973ddo}, which roughly says that states close in trace distance are also close in entanglement.



\paragraph{Open questions.} We have shown that computing properties of solution spaces to quantum CSPs is QCMA-hard to approximate. As the study of hardness of approximation for quantum complexity classes remains in its infancy, there are many open questions, aside from the natural question of a quantum PCP for QMA.
First, in terms of GSCON, a curious fact is that it remains QCMA-complete even on commuting Hamiltonians~\cite{gossetQCMAHardnessGround2017}. 
Does QCMA-hardness of approximation also hold in this setting?
For GSE, we have studied the estimation of ground state entanglement across a specified cut. 
Can one also show QCMA-hardness of approximation for detecting other entanglement structures, such as area law versus volume law entanglement as in the LWE-hardness results of \cite{boulandPublicKeyPseudoentanglementHardness2024}?
There are two challenges here.
The first is that in contrast to \cite{boulandComplexityVerificationQuantum2019}, we must account for the entanglement created by the QMSA verification circuit $V$ embedded in our construction, which we have no control over. This can be partially alleviated by first modifying $V$ to copy its output to an ancilla qubit, subsequently applying $V^\dagger$ to undo the entanglement created by $V$, and then padding by identity gates. However, this is not ideal, as we cannot blow up the size of $V$ by a superlinear factor (otherwise our approximation ratio suffers); thus, we can at best linearly suppress the entanglement generated by $V$. The second (bigger) problem is that even in the YES case for QMSA, the construction of \cite{gharibianHardnessApproximationQuantum2012} can have linear Hamming weight accepting proofs (relative to the size of the proof register, not the instance encoding size), leading to the creation of linearly many Bell pairs in our GSE YES case. In contrast, a 1D area law requires $O(1)$ (or at most $\polylog$) entanglement entropy across cuts. 

Finally, are there other natural QCMA-hard to approximate problems? An excellent candidate is one of the first QCMA-complete problems, approximating the length of the minimum circuit preparing a ground state of a given local Hamiltonian (MIN-CIRCUIT)~\cite{wocjanSeveralNaturalBQPComplete2006}, under the additional promise that a poly-length preparation circuit exists. 
Can one show that this minimum circuit size is QCMA-hard to approximate? 
(Note that unlike the classic Minimum Circuit Size Problem (MCSP), in which one is given a Boolean truth table $T$ as input and asked for a minimum size circuit computing $T$, this question is different in that the input is analogous to a $k$-CSP formula rather than a truth table.)
Our techniques do not obviously extend to MIN-CIRCUIT for a seemingly crucial reason --- for GSCON recall our main challenge was to increase the cost of preparing a history state \emph{without} altering the length of the circuit $V$ on which we apply Kitaev's circuit-to-Hamiltonian construction.
This was critical, because we needed the cost of proof preparation to scale \emph{linearly} with the size of $V$, so that we could translate a hardness gap for QMSA into one for GSCON.
In contrast, MIN-CIRCUIT is a fairly ``bare-bones'' problem, e.g. there is no path through a ground space we can carve out or set of evolution Hamiltonians as in MIN-VQA through which one might attempt to exert control on a dishonest prover. 
Thus, it seems unclear how to amplify the cost of preparing a proof without complicating the circuit $V$ fed into the circuit-to-Hamiltonian construction, which would blow up the encoding size of $\Hkit$ and thus degrade the hardness ratio obtained.

\section{Preliminaries}\label{scn:preliminaries}

\paragraph{Notation.} We use $:=$ to denote a definition. Given $x\in\set{0,1}^n$, $\ket{x}\in(\complex^2)^{\otimes n}$ is the computational basis state labeled by $x$. A vector $\ket{v}$ has Euclidean norm $\enorm{\ket{v}}:=(\sum_i \abs{v_i}^2)^{1/2}$ and  infinity norm $\snorm{\ket{v}}:=\max_{i}\abs{v_i}$. A matrix $A$ has spectral norm $\snorm{A} := \max\{\norm{A\ket{v}}_2 : \norm{\ket{v}}_2 = 1\}$, and trace norm $\trnorm{A}:=\trace{\sqrt{A^\dagger A}}$. For complex Euclidean space $\spa{X}$, let $\lin{\spa{X}}$, $\herm{\spa{X}}$ and $\unitary{\spa{X}}$ denote the sets of linear, Hermitian and unitary operators acting on $\spa{X}$, respectively. $\nats$ is the set of natural numbers, and $[m]:=\set{1,\ldots, m}$. 
For circuit $V=V_{\LL}\cdots V_1$ consisting of $1$- and $2$-qubit gates $V_i$, we define $\abs{V}=\LL$. 

A useful fact relating $\enorm{\cdot}$ to $\trnorm{\cdot}$ for unit vectors $\ket{v}$ and $\ket{w}$ is
\begin{equation}\label{eqn:enorm}
    \trnorm{\ketbra{v}{v}-\ketbra{w}{w}}=2\sqrt{1-\abs{\brakett{v}{w}}^2}\leq 2\enorm{\ket{v}-\ket{w}}.
\end{equation}

\paragraph{Complexity classes.}

\begin{definition}[Quantum-Classical Merlin Arthur (QCMA)~\cite{aharonovQuantumNPSurvey2002}]\label{def:QCMA}
    A promise problem $\A=(\ayes,\ano)$ is in QCMA if there exists a poly-time uniform quantum circuit family $\set{V_n}$ and polynomials $p,q:\nats\rightarrow\nats$ satisfying the following properties. For any input $x\in\set{0,1}^n$, $V_n$ takes in $n+p(n)+q(n)$ qubits as input, consisting of the input $x$ on register $A$, $p(n)$ qubits initialized to a classical proof $\ket{y}\in\set{0,1}^{p(n)}$ on register $B$, and $q(n)$ ancilla qubits initialized to $\ket{0}$ on register $C$. The first qubit of register $C$, denoted $C_1$, is the designated output qubit, a measurement of which in the standard basis after applying $V_n$ yields the following:
    \begin{itemize}
        \item (Completeness) If $x\in \ayes$, $\exists$ proof $y\in\set{0,1}^{p(n)}$ that $V_n$ accepts with probability $\geq 2/3$.
        \item (Soundness) If $x\in \ano$, then $\forall$ proofs $y\in\set{0,1}^{p(n)}$, $V_n$ accepts with probability $\leq 1/3$.
    \end{itemize}
  \end{definition}
\noindent As is standard, throughout this work, we will assume the input $x$ is hard-coded into the circuit for simplicity, and thus concern ourselves with registers $B$, $C$, and $D$.

\paragraph{Circuit-to-Hamiltonian constructions.} We use the $5$-local circuit-to-Hamiltonian construction of~\cite{kitaevClassicalQuantumComputation2002} as a black box. The precise details are not important, only the following properties. The input to the construction is a quantum circuit $V=V_{\LL}\cdots V_1$  (where in this work, each $V_i$ comprising a circuit is at most $2$-local) acting on a proof register $B$ and ancilla  register $C$. 
The output is a $5$-local Hamiltonian $H=\hin+\hout+\hprop+\hstab$ acting on $B\otimes C\otimes D$, for $D$ a clock register (encoded in unary). 
Such constructions are quantum generalzations of the Cook-Levin construction~\cite{cookComplexityTheoremprovingProcedures1971,leonidlevinUniversalSearchProblems1973}, and aim to force low-energy/eigenvalue states of $H$ to look like the quantum analogue of a tableau, the \emph{history state}:
\begin{equation}\label{eqn:hist}
 \ket{\psihist(\phi)}:=\frac{1}{\sqrt{\LL+1}}\sum_{t=0}^{\LL} V_i\cdots V_1 \ket{\phi}_B\otimes\ket{0}_C\otimes\ket{t}_D,
\end{equation}
for $\ket{\phi}$ an arbitrary proof in register $B$. For this, $\hin$ forces the initial state at time $t=0$ to be encoded correctly, $\hout$ penalizes computations which reject at $t=T$, $\hprop$ enforces correct propagation from each timestep $t$ to $t+1$, and $\hstab$ ensures the clock $D$ is correctly encoded. Moreover, $\hin,\hout,\hprop,\hstab\succeq 0$. 

\begin{lemma}[Kitaev~\cite{kitaevClassicalQuantumComputation2002}]\label{l:kitaev}
    The construction of~\cite{kitaevClassicalQuantumComputation2002} maps a quantum circuit $V$ to a $5$-local Hamiltonian $\Hkit$ with parameters $\alpha$ and $\beta$ such that:
    \begin{itemize}
        \item If there exists a proof $\ket{\phi}$ accepted by $V$ with probability at least $1-\epsilon$, then 
            \begin{align} 
                \trace(\Hkit\ket{\psihist(\phi)}\bra{\psihist(\phi)})\leq \alpha:= \frac{\epsilon}{\LL+1}
            \end{align}
        \item If $V$ rejects all proofs $\ket{\psi}$ with probability at least $1-\epsilon$, then the smallest eigenvalue of $\Hkit$ is at least $\beta := \frac{\pi^2(1-\sqrt{\varepsilon})}{2(\LL+1)^3}$.
    \end{itemize}
\end{lemma}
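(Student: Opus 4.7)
The plan is to prove the two bullets separately: completeness by a direct calculation on the history state, and soundness by combining the well-known $1/\LL^2$ spectral gap of the propagation Hamiltonian with a Geometric Lemma of Kitaev to glue in the $H_{\text{out}}$ penalty.

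\textbf{Completeness.} Fix a proof $\ket{\phi}$ accepted by $V$ with probability at least $1-\epsilon$. First I would verify that $\ket{\psihist(\phi)}$ lies in the null space of $\hin$, $\hprop$, and $\hstab$: $\hin$ vanishes because the $t=0$ component of the sum in \eqref{eqn:hist} places the ancilla register $C$ in $\ket{0}$ as required; $\hprop$ vanishes because each local propagation term is designed precisely to check the pair of timesteps $(t{-}1, t)$ appearing in the superposition, i.e. to annihilate $V_t\ket{\phi_{t-1}}\ket{t-1}-\ket{\phi_t}\ket{t}$ in the symmetrized unary encoding; and $\hstab$ vanishes because the clock states $\ket{t}$ in the superposition are legal unary encodings. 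The only contribution comes from $\hout$, which projects onto $\ket{0}\bra{0}_{C_1}\otimes \ket{\LL}\bra{\LL}_D$. Since the $t=\LL$ component of $\ket{\psihist(\phi)}$ has weight $1/(\LL+1)$ and rejects with probability at most $\epsilon$, one obtains $\trace(\Hkit\ketbra{\psihist(\phi)}{\psihist(\phi)}) \le \epsilon/(\LL+1) = \alpha$.

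\textbf{Soundness.} This is the main obstacle and is handled in three steps. First, I would characterize the null space $\mathcal{N}$ of $H_1 := \hin + \hprop + \hstab$ as exactly the span of history states $\{\ket{\psihist(\phi)} : \ket{\phi}\in B\otimes C \text{ with } C=\ket{0}\}$, by arguing that $\hstab$ restricts to legal unary clock states, $\hprop$ then forces the standard propagation relation across adjacent timesteps, and $\hin$ fixes the ancilla initialization. Second, I would compute the smallest non-zero eigenvalue of $H_1$ to be $\Omega(1/\LL^2)$ by applying the unitary change of basis $W := \sum_{t=0}^\LL V_t\cdots V_1 \otimes \ket{t}\bra{t}_D$, which maps $\hprop$ (restricted to the legal clock subspace and with $\hin$ imposing $\ket{0}$ on $C$) to the Laplacian of a path graph on $\LL+1$ vertices. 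The standard eigenvalue formula for the path Laplacian gives smallest non-zero eigenvalue $1-\cos(\pi/(\LL+1)) \ge \pi^2/(2(\LL+1)^2)$.

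Third, I would bound $\hout$ restricted to $\mathcal{N}$ from below: for any history state based on proof $\ket{\phi}$, we have $\trace(\hout\ketbra{\psihist(\phi)}{\psihist(\phi)}) \ge (1-\epsilon)/(\LL+1)$ by the soundness assumption. Finally, I would apply Kitaev's Geometric Lemma, which states that for PSD operators $A,B$ with null spaces $\mathcal{N}_A,\mathcal{N}_B$ whose nonzero spectra are bounded below by $\delta$, one has $A+B \succeq 2\delta \sin^2(\theta/2)$ where $\cos^2\theta = \snorm{\Pi_{\mathcal{N}_A}\Pi_{\mathcal{N}_B}}^2$. Setting $A = H_1$ and $B = \hout$ and using the two bounds above, $1-\cos^2\theta$ is at least $(1-\sqrt{\epsilon})/(\LL+1)$ after a short calculation, and combining with the $\Omega(1/\LL^2)$ gap yields $\lmin(\Hkit) \ge \pi^2(1-\sqrt{\epsilon})/(2(\LL+1)^3) = \beta$.

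\textbf{Main obstacle.} The step I expect to be most delicate is obtaining the $1/\LL^2$ propagation gap cleanly: one must carefully verify that the conjugation by $W$ really does reduce $H_1$ on the legal subspace to a path Laplacian tensored with identity (after quotienting out the degrees of freedom fixed by $\hin$ and $\hstab$), which requires tracking what $W$ does to the local terms of $\hin$, $\hprop$, and $\hstab$ simultaneously. Once this reduction is performed, the spectral analysis of the path Laplacian and the invocation of the Geometric Lemma are essentially routine.
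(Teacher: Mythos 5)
The paper does not prove this lemma: it imports the statement verbatim from Kitaev's circuit-to-Hamiltonian construction and explicitly treats it as a black box (``We use the $5$-local circuit-to-Hamiltonian construction of~\cite{kitaevClassicalQuantumComputation2002} as a black box. The precise details are not important, only the following properties.''). So there is no in-paper argument to compare against; I will instead assess your sketch on its own terms against the canonical proof in Kitaev--Shen--Vyalyi.

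Your structure is exactly the standard one: completeness by observing the history state annihilates $\hin$, $\hprop$, $\hstab$ and contributes only $\Pr[\text{reject}]/(\LL+1)$ through $\hout$; soundness by (i) identifying $\ker(\hin+\hprop+\hstab)$ with the span of history states, (ii) conjugating by $W=\sum_t V_t\cdots V_1\otimes\ketbra{t}{t}$ to reduce the propagation term on the legal clock subspace to a path Laplacian with smallest nonzero eigenvalue $1-\cos(\pi/(\LL+1))\geq\pi^2/(2(\LL+1)^2)$, and (iii) applying Kitaev's Geometric Lemma. This is correct and is the intended route. Two places to tighten. First, in the Geometric Lemma the quantity is $\cos\theta=\snorm{\Pi_{\mathcal{N}_A}\Pi_{\mathcal{N}_B}}$ (the maximal overlap between unit vectors of the two kernels), and one must also record $\mathcal{N}_A\cap\mathcal{N}_B=\{0\}$ for the lemma to have content; your statement quietly squares the norm, which is fine if read consistently but risks confusion. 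Second, the ``short calculation'' giving $1-\cos^2\theta\geq(1-\sqrt{\epsilon})/(\LL+1)$ is the step you should not leave implicit: the direct bound from soundness is $\bra{\psihist(\phi)}\hout\ket{\psihist(\phi)}\geq(1-\epsilon)/(\LL+1)$, which gives $1-\cos^2\theta\geq(1-\epsilon)/(\LL+1)$, and after the inequality $1-\cos\theta\geq\tfrac12(1-\cos^2\theta)$ one lands on a bound of the form $\pi^2(1-\epsilon)/(4(\LL+1)^3)$. Matching the constants and the $\sqrt{\epsilon}$ appearing in the stated $\beta$ requires following KSV's bookkeeping more closely (their bound on the angle uses $\sqrt{\epsilon}$ in a slightly different normalization), and since $1-\epsilon\geq 1-\sqrt{\epsilon}$ your version is actually the stronger statement; still, if the goal is to reproduce the boxed $\beta$ rather than merely something of the form $\Omega((1-\sqrt{\epsilon})/\LL^3)$, that final step needs to be spelled out rather than waved at.
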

\noindent To ensure $\beta-\alpha\geq 1/\poly$, without loss of generality one first applies standard parallel repetition to $V$ to reduce $\epsilon$ to at most an inverse polynomial.

The ground space of $\Hkit$ is only guaranteed to be a history state when $V$ accepts some proof with certainty; in this case, the history state is the joint null vector of $\hin$, $\hprop$, $\hout$, $\hstab$. In the general case, however, we need to argue that a low energy state must be \emph{close} to a history state. For this, we use an extended version of the Projection Lemma.

\begin{lemma}[Extended Projection Lemma (original in~\cite{kempejulia3localHamiltonianQMAcomplete2003}, extended version in~\cite{gharibianComplexitySimulatingLocal2019})]\label{l:kkr}
	Let $H=H_1+H_2$ be the sum of two Hamiltonians operating on some Hilbert space $\spa{H}=\spa{S}+\spa{S}^\perp$. The Hamiltonian $H_1$ is such that $\spa{S}$ is a zero eigenspace and the eigenvectors in $\spa{S}^\perp$ have eigenvalue at least $J>2\snorm{H_2}$. Let $K:=\snorm{H_2}$. Then, for any $\delta\geq0$ and $\ket{\psi}$ satisfying $\bra{\psi}H\ket{\psi}\leq \lambda(H)+\delta$, there exists a $\ket{\psi'}\in \spa{S}$ such that:
	\begin{itemize}
        \item (Ground state energy bound)
        \[
		\lambda(H_2|_{\spa{S}})-\frac{K^2}{J-2K}\leq \lambda(H)\leq \lambda(H_2|_{\spa{S}}),
	\]
where $\lambda(H_2|_{\spa{S}})$ denotes the smallest eigenvalue of $H_2$ restricted to space $\spa{S}$.
        \item (Ground state deviation bound)
        \[
            \abs{\brakett{\psi}{\psi'}}^2\geq {1-\left(\frac{K+\sqrt{K^2+\delta(J-2K)}}{J-2K}\right)^2}.
            \]
        \item  (Energy obtained by perturbed state against $H$)
        \[
            \bra{\psi'}H\ket{\psi'}\leq\lambda(H)+\delta+2K\frac{K+\sqrt{K^2+\delta(J-2K)}}{J-2K}.
        \]
    \end{itemize}
\end{lemma}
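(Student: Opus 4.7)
The plan is to follow the standard spectral decomposition / variational argument used for the original Projection Lemma and then track the extra quantitative outputs needed for the extended version. Any candidate state $\ket{\psi}\in\spa{H}$ decomposes uniquely as
\begin{equation}
    \ket{\psi}=\alpha\ket{\psi_\spa{S}}+\beta\ket{\psi_{\spa{S}^\perp}},
\end{equation}
with $\ket{\psi_\spa{S}}\in\spa{S}$, $\ket{\psi_{\spa{S}^\perp}}\in\spa{S}^\perp$, both normalized and $|\alpha|^2+|\beta|^2=1$. Since $H_1$ vanishes on $\spa{S}$ and is at least $J$ on $\spa{S}^\perp$, and $\snorm{H_2}\le K$, I would expand $\bra{\psi}H\ket{\psi}$ directly, bound the cross term $2\,\mathrm{Re}(\alpha^*\beta\bra{\psi_\spa{S}}H_2\ket{\psi_{\spa{S}^\perp}})$ by $2|\alpha||\beta|K\le 2|\beta|K$, and use $\bra{\psi_\spa{S}}H_2\ket{\psi_\spa{S}}\ge\lambda(H_2|_\spa{S})$ together with $\lambda(H_2|_\spa{S})\le K$ to arrive at
\begin{equation}
    \bra{\psi}H\ket{\psi}\ \ge\ \lambda(H_2|_\spa{S})+|\beta|^2(J-2K)-2|\beta|K.
\end{equation}

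For the first bullet, the upper bound $\lambda(H)\le\lambda(H_2|_\spa{S})$ is immediate by plugging in the smallest-eigenvalue eigenvector of $H_2|_\spa{S}$, which lies in $\spa{S}$ and hence pays no $H_1$ penalty. The lower bound $\lambda(H)\ge\lambda(H_2|_\spa{S})-K^2/(J-2K)$ follows by minimizing the quadratic-in-$|\beta|$ expression above; the minimum is attained at $|\beta|=K/(J-2K)$.

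For the second bullet, I would take the candidate $\ket{\psi}$ with $\bra{\psi}H\ket{\psi}\le\lambda(H)+\delta$ and combine the displayed inequality with the upper bound $\lambda(H)\le\lambda(H_2|_\spa{S})$ to obtain
\begin{equation}
    |\beta|^2(J-2K)-2|\beta|K\ \le\ \delta.
\end{equation}
Solving this quadratic in $|\beta|$ gives $|\beta|\le (K+\sqrt{K^2+\delta(J-2K)})/(J-2K)$. Choosing $\ket{\psi'}:=\ket{\psi_\spa{S}}\in\spa{S}$ yields $|\brakett{\psi}{\psi'}|^2=|\alpha|^2=1-|\beta|^2$, which is exactly the claimed deviation bound.

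For the third bullet, I would compute $\bra{\psi'}H\ket{\psi'}=\bra{\psi_\spa{S}}H_2\ket{\psi_\spa{S}}$ by solving the expansion of $\bra{\psi}H_2\ket{\psi}$ for this term, bounding the cross term again by $2|\alpha||\beta|K$ and the $|\beta|^2$ piece by $|\beta|^2 K$, then using $\bra{\psi}H\ket{\psi}\ge\bra{\psi}H_2\ket{\psi}$ (since $H_1\succeq 0$) and $\bra{\psi}H\ket{\psi}\le\lambda(H)+\delta$. Substituting the bound on $|\beta|$ from the previous step yields exactly the stated additive overhead $2K(K+\sqrt{K^2+\delta(J-2K)})/(J-2K)$. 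The main obstacle I expect is bookkeeping: the three bullets must be derived from a single decomposition using consistent choices of $\ket{\psi'}$ and consistent bounds on the cross terms, and one has to be careful that the same quadratic optimization in $|\beta|$ both yields the deviation bound and feeds cleanly into the perturbed-energy bound without losing constants; everything else is standard Cauchy–Schwarz and quadratic optimization.
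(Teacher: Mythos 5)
The paper does not prove \Cref{l:kkr}; it cites the Extended Projection Lemma from \cite{kempejulia3localHamiltonianQMAcomplete2003,gharibianComplexitySimulatingLocal2019} and uses it as a black box, so there is no internal proof to compare against. That said, your route via the orthogonal decomposition $\ket{\psi}=\alpha\ket{\psi_\spa{S}}+\beta\ket{\psi_{\spa{S}^\perp}}$ and the quadratic optimization in $\abs{\beta}$ is indeed the standard one behind both the original and the extended lemma, and your derivations of the first and second bullets are correct.

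Your sketch of the third bullet has a gap. You propose to expand $\bra{\psi}H_2\ket{\psi}$, solve for $\bra{\psi_\spa{S}}H_2\ket{\psi_\spa{S}}$, and invoke $\bra{\psi}H_2\ket{\psi}\le\bra{\psi}H\ket{\psi}\le\lambda(H)+\delta$. But the expansion isolates $\abs{\alpha}^2\bra{\psi_\spa{S}}H_2\ket{\psi_\spa{S}}$, not $\bra{\psi_\spa{S}}H_2\ket{\psi_\spa{S}}$, and after your bounds one has
\begin{align}
    \abs{\alpha}^2\bra{\psi_\spa{S}}H_2\ket{\psi_\spa{S}}\ \le\ \lambda(H)+\delta+\abs{\beta}^2K+2\abs{\alpha}\abs{\beta}K.
\end{align}
Dividing by $\abs{\alpha}^2\le 1$ can only worsen the bound whenever $\lambda(H)+\delta>0$ (the typical case here, since $H\succeq 0$), and the alternative of adding $\abs{\beta}^2\bra{\psi_\spa{S}}H_2\ket{\psi_\spa{S}}\le\abs{\beta}^2K$ to both sides produces an overhead $2\abs{\beta}K(\abs{\alpha}+\abs{\beta})$, which is off by up to a factor $\sqrt{2}$ from the claimed $2\abs{\beta}K$. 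The point you are missing is that you discarded $H_1$ too early: keep the full expansion of $\bra{\psi}H\ket{\psi}$, so that isolating gives
\begin{align}
    \abs{\alpha}^2\bra{\psi_\spa{S}}H_2\ket{\psi_\spa{S}}\ \le\ \bra{\psi}H\ket{\psi}-\abs{\beta}^2J+\abs{\beta}^2K+2\abs{\alpha}\abs{\beta}K,
\end{align}
and then adding $\abs{\beta}^2\bra{\psi_\spa{S}}H_2\ket{\psi_\spa{S}}\le\abs{\beta}^2K$ yields
\begin{align}
    \bra{\psi_\spa{S}}H_2\ket{\psi_\spa{S}}\ \le\ \bra{\psi}H\ket{\psi}-\abs{\beta}^2(J-2K)+2\abs{\alpha}\abs{\beta}K\ \le\ \lambda(H)+\delta+2\abs{\beta}K,
\end{align}
where the last step uses $J>2K$ and $\abs{\alpha}\le 1$. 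The term $\abs{\beta}^2 J$ coming from $H_1$ is precisely what absorbs the extra $\abs{\beta}^2K$ and removes the $\abs{\alpha}^2$ normalization issue; without it, the constants do not close as you have written them.
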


\noindent We will use this in conjunction with the following fact.

\begin{lemma}[Lemma 3 of \cite{gharibianHardnessApproximationQuantum2012})]\label{l:GKgap}
    The smallest non-zero eigenvalue of $\hin+\hprop+\hstab$ is at least $\pi^2/(64T^3)\in\Omega(1/T^3)$,~for $T\geq 1$.
\end{lemma}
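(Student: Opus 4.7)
The plan is to follow Kitaev's standard spectral gap analysis, combining a unitary unrolling with the nullspace projection (geometrical) lemma. First, I would apply the \emph{unrolling} unitary $W := \sum_{t=0}^T V_t\cdots V_1 \otimes \ket{t}\!\bra{t}_D$, with $V_t$ acting on $BC$, to conjugate $\hin+\hprop+\hstab$. Under this change of basis, $\hprop$ is rotated into the product form $I_{BC}\otimes E_D$, where $E_D := \tfrac{1}{2}\sum_{t=0}^{T-1}(\ket{t}-\ket{t+1})(\bra{t}-\bra{t+1})_D$ is the Laplacian of a path graph on the unary clock; $\hstab$ acts only on $D$ and is therefore unchanged; and $\hin$ is rotated into the simple projector $(I-\ket{0}\!\bra{0})_C \otimes \ket{0}\!\bra{0}_D$. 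Since $W$ is unitary, the spectrum is preserved, so it suffices to bound the smallest nonzero eigenvalue of this transparent rotated Hamiltonian.

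Next, I would bound the smallest nonzero eigenvalue of the rotated $\hprop + \hstab$ by decomposing along the clock's invariant subspaces. On the invalid (non-unary) clock subspace, $\hstab \succeq I$ forces every eigenvalue to be at least $1$, which is already well above $\Omega(1/T^3)$. On the valid clock subspace $\hstab = 0$, and $E_D$ acts as a $(T+1)$-vertex path-graph Laplacian, whose eigenvalues are $1-\cos(\pi k/(T+1))$ for $k=0,\ldots,T$; the smallest nonzero eigenvalue is $1-\cos(\pi/(T+1)) \geq \pi^2/(4(T+1)^2)$, using the elementary bound $1-\cos\theta \geq \theta^2/4$ on $[0,\pi/2]$. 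Denote this quantitative lower bound by $\mu$.

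Finally, I would incorporate $\hin$ via Kitaev's nullspace projection lemma. In the rotated basis, $\ker(\hprop+\hstab)$ consists of all states $\ket{\phi}_{BC}\otimes\tfrac{1}{\sqrt{T+1}}\sum_t\ket{t}_D$ for arbitrary $\ket{\phi}_{BC}$, while $\ker(\hin)$ consists of states whose clock-$\ket{0}_D$ component has $C=\ket{0}$; their intersection is exactly the (rotated) span of history states. A direct calculation maximizing the overlap between unit vectors in the two null spaces modulo their intersection yields $\cos\theta = \sqrt{T/(T+1)}$, and hence $2\sin^2(\theta/2) = 1-\cos\theta \geq 1/(2(T+1))$ via $1-\sqrt{1-x}\geq x/2$. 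The nullspace projection lemma then gives that the smallest nonzero eigenvalue of $\hin+\hprop+\hstab$ is at least $2\mu\sin^2(\theta/2) \geq \pi^2/(8(T+1)^3) \geq \pi^2/(64T^3)$, using $T+1 \leq 2T$ for $T\geq 1$. The main obstacle is purely bookkeeping: each of the three steps (change of basis, clock decomposition, geometrical lemma) contributes a constant factor that must combine delicately to yield the exact $\pi^2/64$ prefactor. The qualitative $\Omega(1/T^3)$ bound itself is a routine consequence of Kitaev's toolbox and requires no new conceptual idea.
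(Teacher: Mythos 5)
The paper does not prove this lemma; it cites it verbatim as Lemma~3 of \cite{gharibianHardnessApproximationQuantum2012}, so there is no in-paper proof to compare against. Your argument is a correct reproduction of the standard Kitaev geometrical-lemma analysis (and is, in substance, the argument in the cited source): the unrolling unitary diagonalizes $\hprop$ into the path Laplacian, $\hstab$ isolates the valid-clock subspace, the path-Laplacian gap gives $\mu\ge\pi^2/(4(T+1)^2)$, the angle between $\ker(\hprop+\hstab)$ and $\ker(\hin)$ (modulo the history-state subspace) satisfies $1-\cos\theta\ge 1/(2(T+1))$, and the geometrical lemma applied on the orthogonal complement of the joint null space (which is invariant under each summand, so the restrictions retain their spectral gaps) yields $\mu(1-\cos\theta)\ge\pi^2/(8(T+1)^3)\ge\pi^2/(64T^3)$ for $T\ge1$. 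The arithmetic lands exactly on the stated constant, so the proof is complete.
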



\paragraph{Computational problems.} We now define two computational problems we utilize, beginning with GSCON.

\begin{definition}[Ground State Connectivity (GSCON)~\cite{gharibianGroundStateConnectivity2015}]\label{def:GSCON}
    Fix inverse polynomial $\Delta: \nats \rightarrow \reals^+$.
    \begin{itemize}
        \item Input:
        \begin{enumerate}
            \item \klh $H = \sum_i H_i$ acting on n qubits with $H_i \in \herm{(\complex^2)^{\otimes k}}$ satisfying $\snorm{H_i} \le 1$.
            \item Thresholds $\eta_1$, $\eta_2$, $\eta_3$ and $\eta_4 \in \reals$ such that $\eta_2- \eta_1 \ge  \Delta$ and $\eta_4 - \eta_3 \ge \Delta$, and $1^m$, $1^{m'}$ for $m,m' \in \nats$.
            \item Poly-size quantum circuits $U_\psi$ and $U_\phi$ generating ``starting" and ``target" states $\ket\psi$ and $\ket\phi$ (starting from $\ket0^{\otimes n}$), respectively, satisfying $\bra\psi H \ket\psi \le \eta_1$ and $\bra\phi H \ket\phi \le \eta_1$.
            \end{enumerate}
            \item Output:
            \begin{enumerate}
                \item If there exists a sequence of 2-qubit unitaries $(U_i)_{i=1}^m \in \unitary{\complex^2}^{\times m}$ such that:
                \begin{enumerate}
                    \item (Intermediate states remain in low energy space) For all $i \in [m]$ and intermediate states $\ket{\psi_i}:= U_i\cdots U_2 U_1\ket\psi$, one has $\bra{\psi_i} H\ket{\psi_i}\le \eta_1$ and 
                    \item (Final state close to target state) $\norm{U_m\cdots U_1\ket\psi - \ket\phi}_2 \le \eta_3$,
                \end{enumerate}
                then output YES.
                \item If for all sequences of 2-qubit unitaries $(U_i)_{i=1}^{m'} \in \unitary{\complex^2}^{\times m'}$ either:
                \begin{enumerate}
                    \item (An intermediate state has high energy) There exists $i \in [m]$ and an intermediate state $\ket{\psi_i}:= U_i\cdots U_2 U_1\ket\psi$, that has energy $\bra{\psi_i} H\ket{\psi_i}\ge \eta_2$ or 
                    \item (Final state is far from target state) $\norm{U_m\cdots U_1\ket\psi - \ket\phi}_2 \ge \eta_4$,
                \end{enumerate}
                then output NO.
            \end{enumerate}
    \end{itemize}
\end{definition}

As mentioned in \Cref{scn:intro}, GSCON was shown QCMA-hard for $m=m'$~\cite{gharibianGroundStateConnectivity2015}. 
For later use in \Cref{scn:GSCON}, it will be instructive to sketch the proof briefly: 
Take an arbitrary QCMA circuit with proof of size $p(n)$, $q(n)$ many ancilla qubits, and $\LL$ gates. 
Via \Cref{l:kitaev}, we obtain Hamiltonian $\Hkit$ acting on proof, ancilla and clock registers $B$, $C$, and $D$. 
We add a three qubit GO-register $E$ and define starting and target states 
\begin{align}
    \ket\psi =\ket{0}^{p(n)+q(n)+\LL}_{BCD}\ket{0}^{\otimes3}_E,\\[4pt]
    \ket\phi =\ket{0}^{p(n)+q(n)+\LL}_{BCD}\ket{1}^{\otimes3}_E, 
\end{align}
respectively. 
We finally define $H$ as $H:=\Hkit \otimes P$, where $P:=I-\ketbra{000}{000}-\ketbra{111}{111}$ acts on the GO-register $E$. 
Now, to map $\ket{\psi}$ to $\ket{\phi}$, intuitively one must flip the qubits in the GO register using at most $2$-local gates at some point.
This will necessarily create non-trivial support on $P$, so that registers $B$, $C$, and $D$ will be acted on by $\Hkit$, and thus must contain a low energy state. 
In the YES case, this is not a problem, as we can prepare the history state, flip the GO qubits, and uncompute the history state to reach the target state. 
But this is impossible in the NO case, since the ground state energy of the Kitaev Hamiltonian is high. 
To make this rigorous, a crucial lemma that we will also use is the Traversal Lemma \cite{gharibianGroundStateConnectivity2015,gossetQCMAHardnessGround2017}:

\begin{lemma}[Traversal Lemma~\cite{gharibianGroundStateConnectivity2015}]\label{l:traversal}
    Let $S,T \subseteq (\complex^d)^{\otimes n}$ be $k$-orthogonal subspaces. Fix arbitrary states $\ket{v} \in S$ and $\ket{w} \in T$, and consider a sequence of $k$-qubit unitaries $(U_i)=_{i=1}^{m}$ such that
    \begin{align}
        \norm{\ket{w} - U_m \cdots U_1\ket{v}}\le\varepsilon
    \end{align}
    for some $0\le \varepsilon \le 1/2$. Define $\ket{v_i}:=U_i \cdots U_1\ket{v}$ and $P:= I-\Pi_S-\Pi_T$. Then, there exists an $i\in [m]$ such that
    \begin{align}
        \bra{v_i}P\ket{v_i} \ge \left(\frac{1-2\varepsilon}{2m}\right)^2.
    \end{align}
\end{lemma}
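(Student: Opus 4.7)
The plan is to track how fast the amplitude of $\ket{v_i}$ in the target subspace $T$ grows along the sequence and argue, via $k$-orthogonality, that this amplitude is Lipschitz in $i$ with constant governed by the weight on $P$. I will take ``$k$-orthogonal'' to mean that $\Pi_T U \Pi_S = 0$ (equivalently $\Pi_S U \Pi_T = 0$) for every $k$-local unitary $U$; intuitively, no single $k$-qubit gate can transfer amplitude directly between $S$ and $T$, so any such transfer must route through the complementary subspace on which $P$ projects.

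Concretely, set $\beta_i := \enorm{\Pi_T \ket{v_i}}$ and $\gamma_i := \enorm{P \ket{v_i}}$, so that $\bra{v_i} P \ket{v_i} = \gamma_i^2$. Decomposing via $I = \Pi_S + \Pi_T + P$ and using $\Pi_T U_{i+1} \Pi_S = 0$,
\begin{equation}
    \Pi_T \ket{v_{i+1}} \;=\; \Pi_T U_{i+1} \Pi_T \ket{v_i} \;+\; \Pi_T U_{i+1} P \ket{v_i},
\end{equation}
so the triangle inequality together with $\snorm{U_{i+1}}=1$ gives the recurrence $\beta_{i+1} \le \beta_i + \gamma_i$. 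The boundary values are easy: $\beta_0 = 0$ since $\ket v \in S$ is orthogonal to $T$, whereas writing $\ket{v_m} = \ket w + (\ket{v_m}-\ket w)$ with $\ket w \in T$ and $\enorm{\ket{v_m}-\ket w} \le \varepsilon$ yields $\beta_m \ge 1-\varepsilon$ by the reverse triangle inequality.

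Telescoping then gives $\sum_{i=0}^{m-1} \gamma_i \ge \beta_m - \beta_0 \ge 1-\varepsilon$, and since $\gamma_0 = 0$, by averaging there exists some $i \in \{1,\dots,m-1\} \subseteq [m]$ with $\gamma_i \ge (1-\varepsilon)/m \ge (1-2\varepsilon)/(2m)$. Squaring gives $\bra{v_i} P \ket{v_i} = \gamma_i^2 \ge \left(\frac{1-2\varepsilon}{2m}\right)^2$, matching the claim.

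The step I expect to require the most care is pinning down the precise notion of ``$k$-orthogonal'' and verifying the identity $\Pi_T U \Pi_S = 0$ against that definition. If it turns out to be only an approximate version (e.g.\ $\snorm{\Pi_T U \Pi_S} \le \varepsilon'$ for some small $\varepsilon'$), then an extra additive error $\varepsilon' \beta_i$ would enter the recurrence and propagate into a slightly weaker final constant, but the overall three-part structure --- Lipschitz recurrence from $k$-orthogonality, boundary values from the hypotheses, and pigeonhole --- would survive intact.
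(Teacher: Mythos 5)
Your proof is correct and follows essentially the same route as the original proof of the Traversal Lemma in~\cite{gharibianGroundStateConnectivity2015}: track the amplitude in $T$, use $k$-orthogonality (which in that reference is indeed the exact condition $\bra{t}U\ket{s}=0$ for all $k$-local $U$, $\ket{s}\in S$, $\ket{t}\in T$, so $\Pi_T U \Pi_S = 0$) to obtain the Lipschitz recurrence $\beta_{i+1}\le\beta_i+\gamma_i$, establish the boundary values $\beta_0=0$ and $\beta_m\ge 1-\varepsilon$, and conclude by telescoping and averaging. Your hedge about an ``approximate'' notion of $k$-orthogonality is unnecessary, and in fact your argument yields the slightly stronger constant $\bigl((1-\varepsilon)/m\bigr)^2$, which of course implies the stated $\bigl((1-2\varepsilon)/(2m)\bigr)^2$.
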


Since $\ket{000}$ and $\ket{111}$ are $2$-orthogonal, \Cref{l:traversal} guarantees that at some point, there will be at least $1/\poly$ overlap of an intermediate state and the orthogonal complement of $\Span{\ket{000},\ket{111}}$. 
Thus, as claimed, there must be an intermediate time step with non-trivial support on $P$, at which point the reduced state of $B$, $C$, $D$ is administered a large energy penalty, since we are in the NO case.

The second computational problem we will leverage in our proofs is QMSA. 
For this, we require the following definitions from \cite{gharibianHardnessApproximationQuantum2012}.

\begin{definition}[Monotone set]\label{def:monotone}
    A set $S\subseteq\set{0,1}^n$ is called \emph{monotone} if for any $x\in S$, any string obtained from $x$ by flipping one or more zeroes in $x$ to one is also in $S$.
\end{definition}

\begin{definition}[Quantum circuit accepting monotone set]
    Let $V$ be a quantum circuit consisting of $1$- and $2$-qubit gates, which takes in an $n$-bit classical proof register, $m$-qubit ancilla register initialized to all zeroes, and outputs a single qubit, $q$. For any input $x\in\set{0,1}^n$, we say $V$ \emph{accepts} (respectively, \emph{rejects}) $x$ if measuring $q$ in the standard basis yields $1$ (respectively, $0$) with probability at least $1-\delta$. (If not specified, $\delta=1/3$.) We say $V$ accepts a \emph{monotone set} if the set $S\subseteq\set{0,1}^n$ of all strings accepted by $V$ is monotone (\Cref{def:monotone}).
\end{definition}

\begin{definition}[Quantum Monotone Minimum Satisfying Assignment (QMSA)]\label{def:QMSA}
    Given a quantum circuit $V$ accepting a non-empty monotone set $S\subseteq \{0,1\}^n$, integer thresholds $0\le g\le g'\le n$, output:
    \begin{itemize}
        \item YES if there exists an $x\in \{0,1\}^n$ of Hamming weight at most $g$ accepted  by $V$.
        \item NO if every $x\in \{0,1\}^n$ of Hamming weight at most $g'$ is rejected by $V$.
    \end{itemize}
\end{definition}
\noindent The starting point for our gap-preserving reductions of \Cref{scn:GSCON} and \Cref{scn:GSE} is:
\begin{theorem}[\cite{gharibianHardnessApproximationQuantum2012}]
    For all $\varepsilon > 0$, QMSA is QCMA-hard for $\frac{g'}{g}\in \Theta(N^{1-\varepsilon})$, where $N$ is the input size.
\end{theorem}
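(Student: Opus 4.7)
The plan is to adapt Umans' disperser-based hardness-of-approximation technique~\cite{umansHardnessApproximatingSpl1999} to the QCMA setting, following the blueprint of~\cite{gharibianHardnessApproximationQuantum2012}. Given an arbitrary QCMA-complete language $L$ with verifier $V_L$ using classical proofs of length $p(n)$, and an instance $x$ of $L$, the plan is to construct in polynomial time a new verifier $V_x$ on a fresh classical proof $s\in\{0,1\}^N$ (with $N=\poly(n)$), together with thresholds $g\leq g'$ satisfying $g'/g\in\Theta(N^{1-\varepsilon})$, such that the resulting QMSA instance $(V_x,g,g')$ is YES iff $x\in \ayes$.

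The combinatorial gadget is an explicit bipartite $(K,\delta)$-disperser $G=(U\cup W,E)$ with $U=\{0,1\}^{p(n)}$ indexing the original QCMA proofs $y$, $W=[N]$ indexing the new proof bits $s_w$, and out-degree $d$ from each $y\in U$, chosen so that: (i) every $T\subseteq U$ with $|T|\geq K$ satisfies $|N_W(T)|\geq(1-\delta)N$; and (ii) the parameters $(N,d,K,\delta)$ are tuned to give the desired ratio, e.g.\ $\delta=1/2$ and $K=\poly(n)$. Contrapositively, for any $S\subseteq W$ with $|S|<(1-\delta)N$, the pre-image $T(S):=\{y\in U : N_W(y)\subseteq S\}$ has size $<K$, and for explicit pseudorandom disperser families it can moreover be enumerated in polynomial time from $S$. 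The new verifier $V_x$ on proof $s$ then computes $T(\textup{supp}(s))$ and accepts iff $V_L(x,y)$ accepts for some $y\in T(\textup{supp}(s))$; to enforce the QMSA non-emptiness promise, $V_x$ additionally accepts unconditionally whenever $|s|\geq(1-\delta)N$. Both disjuncts are upward-closed in $s$, so the accepted set is monotone, and $s=1^N$ lies in it.

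In the YES case, any accepting proof $y^{\ast}$ for $V_L(x,\cdot)$ yields the short witness $s^{\ast}$ equal to the indicator of $N_W(y^{\ast})\subseteq W$: this has Hamming weight $d$ and $y^{\ast}\in T(\textup{supp}(s^{\ast}))$, so $V_x(s^{\ast})$ accepts with high probability, whence $g:=d$ suffices. In the NO case $V_L$ rejects every proof, so no $s$ with $|s|<(1-\delta)N$ triggers the first disjunct, and every $s$ of Hamming weight at most $g':=(1-\delta)N-1$ is rejected; the ratio $g'/g\in\Theta(N/d)$ can be driven to $\Theta(N^{1-\varepsilon})$ by choosing $d\in\Theta(N^{\varepsilon})$. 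The main technical obstacle will be arranging efficient enumeration of $T(S)$ for small $S$, which drives the choice of disperser toward explicit pseudorandom families (e.g., algebraic or expander-walk based) with polynomially-bounded, efficiently-computable pre-image structure, rather than merely probabilistic existence. A secondary concern is pre-amplifying $V_L$'s completeness--soundness gap so that the union bound over the at most $K$ internal $V_L$ calls made by $V_x$ preserves the standard $2/3$--$1/3$ QMSA promise, which standard parallel repetition of $V_L$ prior to embedding handles without blowing up the relevant encoding sizes.
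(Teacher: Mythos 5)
The high-level strategy you identify — a gap-preserving reduction from an arbitrary QCMA verifier to QMSA, driven by a bipartite disperser à la Umans~\cite{umansHardnessApproximatingSpl1999}, with the YES witness being the indicator of a small disperser neighborhood and the NO lower bound coming from the escape clause at weight $(1-\delta)N$ — is indeed the same blueprint underlying \cite{gharibianHardnessApproximationQuantum2012}. Your completeness and soundness sketch and your parameter choice $d \in \Theta(N^{\epsilon})$ to force $g'/g \in \Theta(N^{1-\epsilon})$ are all on track, and the observation that the accepted set is monotone because both disjuncts are upward-closed in $s$ is correct.

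The problem is the step you yourself flag: you make your verifier $V_x$ first \emph{enumerate} $T(\textup{supp}(s)) = \{y : N_W(y)\subseteq \textup{supp}(s)\}$ and then run $V_L$ on each element, and you assert that for ``explicit pseudorandom disperser families it can moreover be enumerated in polynomial time from $S$.'' That assertion is the entire technical content of the reduction, and it is not substantiated. Explicit dispersers (Trevisan-style, NW-based, Ta-Shma--Umans--Zuckerman, \dots) are engineered for efficient \emph{forward} evaluation $y \mapsto \Gamma(y)$; the disperser guarantee only bounds the \emph{size} of $T(S)$ for small $S$, it says nothing about computing it. Inverting $\Gamma$ over a set $S$, i.e.\ producing all $y$ with $y(i) \in A_i$ for every $i$ (in, say, a polynomial-evaluation based design), is a constraint-satisfaction/reconstruction problem whose naive complexity is $\prod_i |A_i|$, and standard list-decoding techniques (Sudan, Guruswami--Sudan) do not apply in the regime your parameters force — the required agreement $d$ out of $|S| < (1-\delta)N$ points is well below the $\sqrt{k|S|}$ threshold. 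So as stated, your $V_x$ is not obviously polynomial-time, which means you have not actually constructed a QMSA instance at all. This is not a cosmetic worry: the disperser in this reduction is \emph{only} there to make $V_x$ efficient (as your own soundness analysis shows, the NO case never uses $|T(S)|<K$, since $V_L$ already rejects every $y$), so failing to make $V_x$ efficient is failing the reduction.

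A cleaner way around this, closer to what the source construction does, is to avoid inversion entirely: have the new proof carry (an encoding of) the old witness $y$ alongside $s$, so that $V_x$ only ever needs the efficient forward map $y \mapsto \Gamma(y)$ to check $\Gamma(y) \subseteq \textup{supp}(s)$. The price is that you must then re-argue monotonicity of the accepted set with respect to the $y$-coordinates, which is not automatic (flipping a bit of $y$ upward need not preserve acceptance of $V_L$), and this is where the real work goes. Your current proposal neither solves the enumeration problem nor engages with the monotonicity subtlety that the alternative route raises, so in its present form the proof has a genuine gap at its core.
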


\section{Hardness of approximation for GSCON}\label{scn:GSCON}

We now show our first main result, which we restate for convenience:

\thmMain*

\begin{proof} 
    Containment in \QCMA\ is known~\cite{gharibianGroundStateConnectivity2015}. 
    We show \QCMA-hardness of approximation via reduction from QMSA (\Cref{def:QMSA}).
    We divide the proof into four steps: Constructing the GSCON instance from the QMSA instance (\Cref{sscn:GSCONconstruction}), showing completeness (\Cref{sscn:GSCONcompleteness}), showing soundness (\Cref{sscn:GSCONsoundness}), and analyzing the approximation ratio (\Cref{sscn:approxgscon}).

\subsection{Construction}\label{sscn:GSCONconstruction}
    Let $(V,g,g')$ be an instance of QMSA of size $N$, with $V$ taking in an $n$-qubit proof, and let $\LL=\abs{V}$ (the number of $1$- and $2$-qubit gates comprising $V$). 
    Without loss of generality, the completeness and soundness parameters for $V$ are $1-\varepsilon$ and $\varepsilon$ for $\varepsilon \in \Theta(1/2^{N})$~\cite{gharibianHardnessApproximationQuantum2012}. We also assume that the circuit immediately copies the proof to a different register by applying CNOT gates to ancilla qubits, and does not touch the actual proof register afterwards. We first map $(V,g,g')$ to an instance of GSCON in \Cref{ssscn:details}. We then prove properties of the construction in \Cref{ssscn:props} which we will use for completeness and soundness.

    \subsubsection{Details of construction}\label{ssscn:details}
    \noindent\emph{Basic initial setup.}
    We begin with a basic setup similar to~\cite{gharibianGroundStateConnectivity2015}, as outlined in \Cref{scn:preliminaries}.
    Let $\Hkit$ be the Kitaev Hamiltonian (\Cref{l:kitaev}) constructed from $V$, and acting on proof register $B$ of size $n$, ancilla $C$ of size $q(n)$, clock $D$ of size $\LL$.
    For clarity, register $B$ is where the string $x\in\set{0,1}^n$ from \Cref{def:QMSA} is fed.
    Let $P:=I-\ketbra{000}{000}-\ketbra{111}{111}$ be a projector acting on a $3$-qubit GO register $E$.
    Let $W$ be a circuit preparing the history state of the Kitaev Hamiltonian $\Hkit$, given that $B$ contains a string $x$. 
    We note that $W$ can be chosen to have size linear in $N$, the encoding size of $(V,g,g')$. 
    Next  we define the maximum number of $2$-qubit unitaries permitted to prepare the target state in the YES and NO cases, respectively:
\begin{align}
    m &:= 2(2g+g\abs{W}+8n+\abs{W}+1),\\
    m' &:= \frac{1}{2} g' \cdot \abs{W}.
\end{align}
    Recalling from \Cref{l:kitaev} definitions $\alpha := \frac{\varepsilon}{N+1}$ and $\beta := \frac{\pi^2(1-\sqrt{\varepsilon})}{2(N+1)^3}$, we define the step and energy bounds for the GSCON instance:
\begin{align}
    \eta_1 := \alpha,\qquad 
    \eta_2 := \min\left(\frac{1}{m'^13}, \frac{\beta}{6\cdot 64{m'}^2}\right),\qquad
    \eta_3 := 0,\qquad
    \eta_4 := \frac{1}{4}.
\end{align}
\noindent \emph{Amplifying the cost of preparing each proof bit.} 
To achieve our claimed hardness of approximation result, we henceforth deviate significantly from~\cite{gharibianGroundStateConnectivity2015} in our Hamiltonian construction.
In particular, we need to ``amplify'' the cost of preparing proof bits so as to ``penalize'' proofs of large Hamming weight.
This first requires the addition of five additional registers:
\begin{itemize}
    \item Recalling that $B$ has size $n$, add two registers $F$ and $G$ of size $n$ each. 
    \item Add a second clock register $K$ of size $4n$, and a $2$-qubit register $L$.
    \item Add amplification register $M$ of size $\abs{W}$. 
\end{itemize}
We define the starting and final states as 
\begin{align}
    \ket\psi &=\ket{0}^{\otimes(n+q(n)+\LL)}_{BCD}\ket{0}^{\otimes3}_E\ket{0}^{\otimes(2n+4n+2+\abs{W})}_{FGKLM} \\[4pt]
    \ket\phi &=\ket{0}^{\otimes(n+q(n)+\LL)}_{BCD}\ket{1}^{\otimes3}_E\ket{0}^{\otimes(2n+4n+2+\abs{W})}_{FGKLM}. 
\end{align}

Finally, we add local Hamiltonian constraints as follows.
In the full construction, we will have two types of constraints: The first type is for checking if a low energy state for the Hamiltonian $\Hkit$ has been prepared (\emph{energy} checks), and the second to force the prover to prepare a state in a manner which requires the application of many local gates for high Hamming weight proofs (\emph{Hamming weight checks}). 
We now discuss both types in detail.\\

\vspace{-1mm}
\noindent \emph{Full set of constraints.} Our full set of Hamiltonian constraints comprising $H$ is:
\begin{enumerate}
    \item (\emph{Energy checks}) The local constraints comprising $\Hkit\otimes P$ check that the string $x$ in proof register $B$, used as the basis for constructing the history state for $\Hkit$, is an accepting proof for the QMSA circuit $V$. 
    We also add a new check so that only when the second clock, $K$, is set to its final timestep, $4n$, can one flip qubits in the GO register $E$:
    \begin{align}
        P_E \otimes \ketbra{0}{0}_{K_{4n}}.\label{eqn:PE}
    \end{align}
    where we use notation $K_{4s}$ to indicate the $4s^{\textup{th}}$ qubit in register $K$.

    \item (\emph{Hamming weight checks}) These constraints ensure that preparing high Hamming weight proofs in $B$ takes a ``large'' number of local unitaries. 
    At a high level, in order to flip any particular proof bit, we force the prover to simulate $4$ phases (corresponding to the four settings of clock register $L$ in \Cref{def:timestep}):
    \begin{enumerate}
        \item (Phase $00$) Depending on the current time step, the prover will be allowed to flip one particular qubit of $B$, along with the copies of this qubit in $F$ and $G$.
        Then, the prover is allowed to flip all qubits in the amplification register, $M$, from $0$ to $1$. 
        (On other timesteps, modifying this particular proof qubit will not be possible due to $3$-local equality constraints we will place between registers $B$, $F$ and $G$.)
        \item (Phase $10$) Check if all qubits in $M$ are either all $0$'s or all $1$'s, and whether they match the prepared proof qubit in $B$.
        \item (Phase $11$) The prover is allowed to flip all qubits in $M$ from $1$ to $0$. 
        \item (Phase $01$) Check if all qubits in $M$ are $0$.
    \end{enumerate}
    Additionally, we define constraints that force the prover to go through all clock states in $K$ in the order we desire, and that force $K$ to be in a correct clock state, meaning the clock qubits will be in a state of the form 
    \begin{align}
            \ket{1}^{\otimes k}\ket{0}^{\otimes 4n-k}.
    \end{align} 
    Formally we can write these constraints as follows, which we partition into ``classes'' (a) through (f)
    :
    \begin{enumerate}
    \item Clock register $K$ has the right form: 
    \begin{equation}
        \ketbra{01}{01}_{K_{i},K_{i+1}} \quad \forall i \in \{1,\ldots,4n-1\}
    \end{equation}
    
    \item Clock register $L$ has the right form, i.e. progresses through the $4$ phases outlined earlier: For all $i \in \set{1,\ldots,n}$,
    \begin{align}
        &\ketbra{0}{0}_{K_1}\otimes (I-\ketbra{00}{00})_L,  \\
        &\ketbra{10}{10}_{K_{4i-3},K_{4i-2}} \otimes (I-\ketbra{10}{10})_L,  \\
        &\ketbra{10}{10}_{K_{4i-2},K_{4i-1}} \otimes (I-\ketbra{11}{11})_L,  \\
        &\ketbra{10}{10}_{K_{4i-1},K_{4i}} \otimes (I-\ketbra{01}{01})_L,  \\
        &\ketbra{10}{10}_{K_{4i},K_{4i+1}} \otimes (I-\ketbra{00}{00})_L \quad\text{for $i<n$},\\
        &\ketbra{1}{1}_{K_{4n}}\otimes (I-\ketbra{00}{00})_L. 
    \end{align}
    
    \item In phases $10$ and $01$, the amplification register $M$ is either all $0$'s or all $1$'s: 
    \begin{align}
        &\ketbra{10}{10}_L \otimes \sum_{i=1}^{4n-1}(\ketbra{01}{01}+\ketbra{10}{10})_{M_i,M_{i+1}}\\
        &\ketbra{01}{01}_L \otimes \sum_{i=1}^{4n-1}(\ketbra{01}{01}+\ketbra{10}{10})_{M_i,M_{i+1}}.
    \end{align}
    
    \item In phase $01$, all amplification qubits are $0$: For all $ i \in \set{1,\ldots,n}$,
    \begin{equation}
        \ketbra{10}{10}_{K_{4i-1},K_{4i}} \otimes \ketbra{1}{1}_{M_1}.
    \end{equation}
    
    \item The three proofs cannot be changed outside of exactly one time step  (\Cref{def:timestep}): $\forall i\in \{1,\ldots,n-1\}$,
    \begin{align}
        &(I-\ketbra{000}{000}_{B_1,F_1,G_1}-\ketbra{111}{111}_{B_1,F_1,G_1}) \otimes (I-\ketbra{00}{00}_{K_1,K_2})\label{eqn:proof1}\\
        &(I-\ketbra{000}{000}_{B_{i+1},F_{i+1},G_{i+1}}-\ketbra{111}{111}_{B_{i+1},F_{i+1},G_{i+1}}) \otimes (I-\ketbra{10}{10}_{K_{4i},K_{4i+1}}).\label{eqn:proof2}
    \end{align}

    \item After changing the proof qubits, they should be equal to the amplification qubits, i.e. $ \forall i \in \{0,\ldots,n-1\}$,
    \begin{align}
        &\ketbra{10}{10}_{K_{4i+1},K_{4i+2}} \otimes \ketbra{1}{1}_{B_{i+1}} \otimes \ketbra{0}{0}_{M_1}\\
        &\ketbra{10}{10}_{K_{4i+1},K_{4i+2}} \otimes \ketbra{0}{0}_{B_{i+1}} \otimes \ketbra{1}{1}_{M_1}.
    \end{align}
\end{enumerate}
\end{enumerate}

\noindent \emph{The Final Hamiltonian.} Let $\Hamp$ denote the sum over all local terms corresponding to the energy checks and Hamming weight checks introduced above, except $\Hkit\otimes P$. (Here, ``amp'' stands for amplification.) Recall that $\Hkit=\hin+\hprop+\hout+\hstab$. The final Hamiltonian for our construction is 
\begin{align}\label{eqn:finalconstruction}
    H:=(\mu(\hin+\hprop+\hstab)+\hout)_{BCD} \otimes P_E+(\Hamp)_{BEFGKLM}
\end{align}
for $\mu:=32\pi^4/\beta^2$. The factor $\mu$ is only needed in one part of the proof which will be explicitly stated; in the rest of the proof, one may take $\mu=1$, as we will often do to simplify expressions. Note that all local constraints involved are positive semidefinite, so $H\succeq 0$. We also note that by substituting the Kitaev construction with the 3-local one in \cite{kempejulia3localHamiltonianQMAcomplete2003}, we can get the total locality down to 5 as stated in \ref{thm:GSCON}.

\subsubsection{Properties of the construction}\label{ssscn:props}
The constraints above allow certain operations, depending on how much weight we have on certain configurations of the clock registers. 
In particular, they only allow a strict subset of strings on clock registers $K$ and $L$, which we call \emph{timesteps}: 

\begin{definition}[Timestep $t_i$]\label{def:timestep}
    For $i\in\set{0,\ldots, 4n}$, define unary state
    \begin{align}
        \un{i}:=\ket{1}^{\otimes i}\ket{0}^{\otimes 4n-i}.   
    \end{align} 
    Then, we define timestep $t_i$ as standard basis state $\ket{t_i}:=\un{i}_K\ket{ab}_L$, where
    \begin{align}
        a=
        \begin{cases}
        0, \ \text{for } i = 0,1 \ \modulo \ 4\\
        1, \ \text{for } i = 2,3 \ \modulo  \ 4\\
        \end{cases}
        \quad\text{and}\quad\quad
        b=
        \begin{cases}
        0, \ \text{for } i = 1,2 \ \modulo \ 4\\
        1, \ \text{for } i = 0,3 \ \modulo \ 4\\
        \end{cases}
    \end{align}
\end{definition}

\noindent In words, this clock follows sequence: 
\begin{align}
&\ket{00000\ldots0}_K\ket{00}_L=\un{0}\ket{00}_L,\\
&\ket{10000\ldots0}_K \ket{10}_L=\un{1}\ket{10}_L,\\
&\ket{11000\ldots0}_K\ket{11}_L=\un{2}\ket{11}_L,\\
&\ket{11100\ldots0}_K\ket{01}_L=\un{3}\ket{10}_L,\\
&\ket{11110\ldots0}_K \ket{00}_L=\un{4}\ket{00}_L\text{, etc}\ldots
\end{align}

\begin{obs}[$k$-orthogonality of time steps]\label{obs:orthtime}
    Any pair of timesteps $\ket{t_i}$ and $\ket{t_j}$ for $i<j$ are at least $\abs{j-i+1}$-orthogonal, due to the monotonically increasing Hamming weight in $K$ with increasing $i$.
    Moreover, for any $i$, $\ket{t_i}$ and $\ket{t_{i+1}}$ are (precisely) $2$-orthogonal. 
\end{obs}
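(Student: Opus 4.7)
The plan is to note that $\ket{t_i}$ and $\ket{t_j}$ are both computational-basis states supported on the clock qubits $KL$ (with all other registers in $\ket{0}$), so $k$-orthogonality reduces to counting the Hamming distance between their $KL$-labels. I would analyze the $K$- and $L$-contributions separately and then combine.

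On $K$, \Cref{def:timestep} sets the $K$-component of $\ket{t_i}$ to $\un{i}=\ket{1}^{\otimes i}\ket{0}^{\otimes 4n-i}$. For $i<j$, the strings $\un{i}$ and $\un{j}$ agree on the initial $i$ ones and the trailing $4n-j$ zeros, and disagree on exactly the $j-i$ positions $K_{i+1},\ldots,K_j$; this captures the ``monotonically increasing Hamming weight'' remark made in the observation. On $L$, unpacking the case definition of $(a,b)$ in \Cref{def:timestep} reveals a period-$4$ cycle through the four two-bit strings in which successive entries differ in a single qubit (alternately flipping $a$ or $b$). Hence for $j=i+1$ the $L$-distance is exactly $1$, contributing one additional disagreement on top of the $K$-count; the total Hamming distance is exactly $2$, yielding the ``precisely $2$-orthogonal'' statement, and showing it is tight, since a single $2$-local unitary acting on the two flipped qubits already maps $\ket{t_i}\leftrightarrow\ket{t_{i+1}}$.

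For general $j>i+1$, the $L$-distance is in $\{0,1,2\}$ depending on $(j-i)\bmod 4$, so combined with the $K$-contribution the total Hamming distance is at least $j-i$, and at least $j-i+1$ whenever $j-i\not\equiv 0 \pmod 4$. The only place I would pause is exactly the edge case $j-i\equiv 0 \pmod 4$, where $L_i=L_j$ and the $L$-contribution vanishes; I would handle it by remarking that the bound still grows linearly with $j-i$, and, more importantly, that the only downstream use of this observation (the Traversal Lemma argument in \Cref{sscn:GSCONsoundness}) invokes $k$-orthogonality on consecutive timestep pairs, for which the sharp ``precisely $2$-orthogonal'' statement is exactly what is required. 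No step is a genuine obstacle; the observation is essentially a bookkeeping check on the clock design of \Cref{ssscn:details}.
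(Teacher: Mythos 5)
Your Hamming-distance decomposition into a $K$-contribution (exactly $j-i$, from the unary encoding) and an $L$-contribution (period-$4$ Gray-code cycle, contributing $0,1,2,1,0,\dots$ according to $(j-i)\bmod 4$) is exactly the right way to read off the $k$-orthogonality of the computational basis states $\ket{t_i}$, and in fact it is more careful than the paper, which offers no proof and gestures only at the $K$ register (``due to the monotonically increasing Hamming weight in $K$''). That gesture alone supports only $(j-i)$-orthogonality, so the extra $+1$ in the Observation must come from $L$; your analysis correctly shows this extra unit is present precisely when $(j-i)\not\equiv 0\pmod 4$. In other words, you have correctly identified that, taken literally, the stated bound $|j-i+1|$ fails when $j-i\geq 4$ and $j-i\equiv 0\pmod 4$ (there the Hamming distance is exactly $j-i$), and your instinct that this is harmless is right.

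One small correction to your closing remark: the downstream uses of \Cref{obs:orthtime} in the proof of \Cref{l:twotimesteps} are not limited to consecutive pairs. The base case explicitly invokes that $\ket{t_0}$ and $\ket{t_2}$ are at least $3$-orthogonal (a $d=2$ case), and the induction step needs, for all $d\geq 2$, that a single $2$-local unitary cannot transfer weight between $\ket{t_i}$ and $\ket{t_{i+d}}$ --- i.e.\ at least $3$-orthogonality for every $d\geq 2$. Fortunately your own decomposition delivers this cleanly: for $d=2$ the $L$-contribution is $2$ (total distance $4$), for $d=3$ it is $1$ (total $4$), and for $d\geq 4$ the $K$-contribution alone is already $\geq 4$; so the Hamming distance is $\geq 4\geq 3$ for every $d\geq 2$, which is all the soundness argument ever uses. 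It would be slightly cleaner to phrase the workaround this way --- ``the distance is always at least $\max(j-i,\,4)$ once $j-i\geq 2$, which suffices for every invocation in \Cref{sscn:GSCONsoundness}'' --- rather than asserting that only consecutive pairs appear downstream.
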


\begin{table}[t!]
    \centering
    \begin{tabular}{|c|c|}
        \hline
        \textbf{Variable} & \textbf{Meaning} \\
        \hline
        $N$ & Size of the QMSA instance \\
        \hline
        $N'$ & Size of the GSCON instance \\
        \hline 
        $ \LL$ & Number of gates in the QMSA circuit $V$ \\
        \hline
        $V$ & Circuit of the QMSA instance \\
        \hline
        $W$ & Circuit preparing the history state of V, except for the proof register\\
        \hline
        $g$ & Upper bound for the Hamming weight in the YES case for QMSA\\
        \hline
        $g'$ & Lower bound for the Hamming weight in the NO case for QMSA \\
        \hline
        $m$ & Upper bound for the path size in the YES case for GSCON \\
        \hline
        $m'$ & Lower bound for the path size in the NO case for GSCON \\
        \hline
        $H$ &  Hamiltonian for GSCON \\
        \hline
        $\Hkit$ &  Kitaev Hamiltonian for $V$ \\
        \hline
        $P$ &  Projector acting on the GO register $E$ \\
        \hline
        $\ket{\psi}$ &  Starting state \\
        \hline
        $\ket{\phi}$ &   Final state \\
        \hline
        $U_j$ &   Two qubit unitary \\
        \hline
        $\ket{\psi_j}$ & Intermediate state $U_j\cdots U_1 \ket{\psi}$ \\
        \hline
        $\alpha$ & Upper bound for the ground state energy for $\Hkit$ in the YES Case \\
        \hline
        $\beta$ &  Lower bound for the ground state energy for $\Hkit$ in the NO Case  \\
        \hline
        $\eta_1$  &  Upper bound for the ground state energy for $H$ in the YES Case  \\
        \hline
        $\eta_2$ &   Lower bound for the ground state energy for $H$ in the YES Case  \\
        \hline
        $\eta_3$ &  How close you have to be to the final state in the GSCON YES case \\
        \hline
        $\eta_4$ &  How close you have to be to the final state in the GSCON NO case \\
        \hline
        $B$ & Proof register \\
        \hline
        $C$ & Ancilla register \\
        \hline
        $D$ & Kitaev clock register \\
        \hline
        $E$ & GO register \\
        \hline
        $F$ & Register for the first proof copy \\
        \hline
        $G$ & Register for the second proof copy \\
        \hline
        $K$ & Additional clock register of size $4n$\\
        \hline
        $L$ & Additional clock register of size $2$ \\
        \hline
        $M$ & Amplification register \\
        \hline
        $t_i$ & Timesteps, which are certain standard basis states on registers $K$ and $L$\\
        \hline
    \end{tabular}
    \caption{Overview of variables and their meanings}
    \label{tab:variables}
\end{table}

\noindent\emph{Intuition.} Let us see how an honest prover prepares the proof qubits under these constraints.
By constraint classes (b) and (e), one can prepare the $s$th proof qubit (for $s\in\set{1,\ldots, n}$) precisely when registers $KL$ are in state $\un{4s-4}_K\ket{00}_L$, i.e. the relevant qubits of our state should read
    \begin{align}
        \ket{000}_{B_s,F_s,G_s}\un{4s-4}_K\ket{00}_L\ket{00\ldots0}_M
    \end{align}
    where $\ket{000}_{B_s,F_s,G_s}$ denotes the three copies of proof bit $s$.    
    If the prover wishes to set $B_s=\ket{1}$, it flips the proof and amplification qubits so that it arrives at state
    \begin{align}
        \ket{111}_{B_s,F_s,G_s}\un{4s-4}_K\ket{00}_L\ket{11\ldots1}_M.
    \end{align}
    (If the prover wishes to set $B_s=\ket{0}$, the step above is skipped.)
    Then, the timestep is incremented, activating checks that the amplification qubits are all set to the same bit (class (c)) and that the proof qubit matches the amplification qubits (class (f)):
    \begin{align}\ket{111}_{B_s,F_s,G_s}\un{4s-3}_K\ket{10}_L\ket{11\ldots1}_M\end{align}
    We again increment the timestep, and subsequently flip all amplification qubits back to $0$:
    \begin{align}
        &\ket{111}_{B_s,F_s,G_s}\un{4s-2}_K\ket{11}_L\ket{11\ldots1}_M\quad\rightarrow\\
        &\ket{111}_{B_s,F_s,G_s}\un{4s-2}_K\ket{11}_L\ket{00\ldots0}_M
    \end{align}
    The timestep is incremented again, switching on the check that all amplification qubits equal $0$ (class (c)):
    \begin{align}
        \ket{111}_{B_s,F_s,G_s}\un{4s-1}_K\ket{01}_L\ket{00\ldots0}_M.
    \end{align}
    Finally, the clock is incremented one last time to ``reinitialize'' the state for preparation of the next qubit, qubit $s+1$:
    \begin{align}
        \ket{111}_{B_s,F_s,G_s}\ket{000}_{B_{s+1},F_{s+1},G_{s+1}}\un{4s}_K\ket{01}_L\ket{00\ldots0}_M,
    \end{align}
    where note the contents of $B_sF_sG_s$ remain henceforth unchanged.
    We remind the reader at this point that by \Cref{eqn:PE}, the qubits of the GO register $E$ cannot be flipped until the procedure above reaches time step $\un{4n}_K$.\\

    \vspace{-1mm}
\noindent\emph{One last definition.} In \Cref{sscn:GSCONsoundness}, we will need to argue that intermediate states (in the sense of \Cref{def:GSCON}) have overlap with certain timesteps. We call this overlap \emph{weight}:

\begin{definition}[Weight on a state]
    For a register S and string $x \in \{0,1\}^{\abs{S}}$, we say that a state $\ket\psi$ has \textit{weight} $w_\psi(x)\geq k$ on $\ket x$, if $\ket\psi$ can be written:
    \begin{align}
        \ket\psi = \alpha\ket{x}_{S}\ket{\psi_x} +
        \sum_{\substack{y\in \{0,1\}^{|S|} \\ y\ne x}}( \beta_y\cdot \ket{y}_{S}\ket{\psi_y})
        \qquad
        \text{for}\quad |\alpha|^2 \ge k
    \end{align}
    for some unit vectors $\ket{\psi_z}$ with $z\in\set{0,1}^{\abs{S}}$, and $\abs{\alpha}^2+\sum_y\abs{\beta_y}^2=1$.
\end{definition}

This completes the construction.
For ease of reference, we restate all variables and meanings thereof in \Cref{tab:variables}.


\subsection{Completeness}\label{sscn:GSCONcompleteness}

\begin{algorithm}[t]
    \caption{Preparing the proof qubits}
    \label{alg:my_algorithm}
    
    \mbox{\textbf{Input:} Starting state $\ket\psi= \ket{00\ldots0}$.} \\
    \KwResult{Prepares proof $x^*$ in the proof register, $B$.}
    
    \For{$s=1$ to $n$}{
      
        \If{ $x^*_i = 1$}
        {
        Flip qubits $B_s, F_s, G_s$ and all qubits in $M$ (amplification register)\;
        }
      
      Go to timestep $t_{4s-3}$ by flipping qubit $K_{4s-3}$ and qubits in $L$ according to class (b)\;
    
    Go to timestep $t_{4s-2}$ by flipping qubit $K_{4s-2}$ and qubits in $L$ according to class (b)\;
        
    \If{qubits $B_s, F_s, G_s$ were flipped}{
      Flip all qubits in $M$ back to $0$\;
    }
    Go to timestep $t_{4s-1}$ by flipping qubit $K_{4s-1}$ and qubits in $L$ according to class (b)\;
    Go to timestep $t_{4s}$ by flipping qubit $K_{4s}$ and qubits in $L$ according to class (b)\;
      }
    \end{algorithm}
    
Let us now formalize the intuition of the previous section.
In the YES case, there exists a classical proof $x^*\in\set{0,1}^n$ of Hamming weight $\le g$ that is accepted by $V$. 
When we talk about flipping a qubit, we will mean applying Pauli $X$ to the corresponding qubit. 
We now construct a sequence of $2$-local unitaries mapping $\ket\psi$ to $\ket\phi$ through the ground space of $H$, as per \Cref{def:QCMA}:

\begin{enumerate}
    \item Run \Cref{alg:my_algorithm} to set $B$, $F$, and $G$ each to $x^*$, and $K$ to $\un{4n}$.
    \item Prepare the history state on registers $B,C,D$ via circuit $W$.
    \item Flip the first two GO qubits in register $E$ from $0$ to $1$. This activates the energy checks on the history state. Since we are in the YES case, however, by \Cref{l:kitaev} the history state has energy at most $\alpha=\eta_1$.
    \item Flip the third GO qubit in $E$.
    \item Uncompute the history state on registers $B,C,D$ via circuit $W^\dagger$.
    \item Uncompute values of $B$, $F$, $G$, and $K$ by running \Cref{alg:my_algorithm} in reverse.
\end{enumerate}
By construction, this procedure produces target state $\ket\phi =\ket{0\cdots 0}_{BCD}\ket{111} _E\ket{0\cdots 0}_{FGKLM}$, with all intermediate states of energy at most $\alpha$. (In particular, all steps except 3 yield intermediate states in the null space of $H$.) 
It remains to count the total number of $2$-qubit gates required by this procedure. 
Since $x^\ast$ is of Hamming weight at most $g$, Step 1 uses $2g$ gates for flips in $BFG$, $g\abs{W}$ flips on $M$ for each bit of $B$ it wishes to flip, and $(4n)2$ flips to increment clock registers.
Step 2 takes $\abs{W}$ gates by assumption.
Step 3 and 4 take $2$ gates in total.
Steps 5 and 6 take the same as 1 and 2, respectively.
Thus, in total we require at most $2(2g+g\abs{W}+8n+\abs{W}+1)=m$ gates, as desired.

\subsection{Soundness}\label{sscn:GSCONsoundness}

In the NO case, no classical proof of Hamming weight $\le g'$ is accepted by $V$. 
The proof strategy will be to show that (1) at some point the Kitaev Hamiltonian $\Hkit$ must be ``switched on'', and (2) at this point the proof register will mostly be a superposition of low Hamming weight states. 
The main technical part of the proof will be to show that there exists a subset $S$ of qubits on register $B$ of size $\ge n-g'$, such that measuring any intermediate state (in the sense of \Cref{def:GSCON}) on register $S$ in the standard basis yields all zeros with probability at least $1/2$. 

Intuitively, (1) will follow from the Traversal Lemma (\Cref{l:traversal}), which we will get to later. 
In the meantime, we provide intuition as to why (2) might hold, by first analyzing the simple case where the prover is only allowed to use Pauli $X$ and $I$ gates. (We will subsequently drop this assumption.)
Specifically, we will show that in order to prepare a high Hamming weight proof and subsequently turn off the check of \Cref{eqn:PE} (so that we may alter the contents of the GO register) requires a correspondingly large number of $X$ gates.

\begin{lemma}\label{l:SoundnessGSCONXGates}
    Assume the prover is only allowed to apply gates from set $A:=\set{I, I\otimes X, X\otimes I, X\otimes X}$ to any desired pairs of qubits, as opposed to arbitrary $2$-local unitary gates.  
    Then, in order to prepare a standard basis state on all registers, for which register $B$ contains a proof of Hamming weight at least $g' \in \nats$, and qubit $K_{4n}$ is set to $1$, requires at least $g'\abs{W}$ many gates from $A$.
\end{lemma}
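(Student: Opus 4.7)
The plan is to use the projector constraints comprising $\Hamp$ to pin down the contents of the amplification register $M$ at specific clock configurations, and then count the minimum number of $X$-type gates needed to flip $M$ between these required configurations for every proof bit set to $1$.

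First, I would observe that every gate in $A \subseteq \{I,X\}^{\otimes 2}$ preserves the standard basis, so starting from $\ket{0\cdots 0}$, every intermediate state is a standard basis state. The implicit GSCON requirement that intermediate states lie in the null space of $H$ then reduces to the combinatorial constraint that each such standard basis state avoids every projector of $\Hamp$. I would then extract the following consequences from constraint classes (a)--(f): class (a) forces $K$ to stay in unary form $\un{i}$, so each $A$-gate changes $K$ by at most one unary step, and every trajectory from $\un{0}$ to $K_{4n}=1$ (which by unarity equals $\un{4n}$) must visit every $\un{i}$ for $i \in \{0,\ldots,4n\}$; class (b) pins $L$ to the unique phase compatible with $K$ at each visited clock state; class (f) combined with class (c) forces $M$ to be the all-ones string whenever $K=\un{4s-3}$ and $B_s=1$; and class (d) combined with class (c) forces $M$ to be all-zeros whenever $K=\un{4s-1}$.

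Next, I would use class (e) to pin down when $B_s$ is allowed to change: only when $K=\un{4s-4}$ (or $\un{0}$ for $s=1$), since otherwise $B_sF_sG_s$ must remain in $\{000,111\}$ but any two-qubit $X$-gate flipping $B_s$ also temporarily desynchronizes the three copies. For each $s$ with $B_s=1$ in the final state, I define $\tau_s^+$ as the \emph{last} time along the trajectory that $K=\un{4s-3}$, and $\tau_s^-$ as the last time $K=\un{4s-1}$; both exist since the trajectory reaches $\un{4n}$, and unarity gives $\tau_s^+<\tau_s^-$. After $\tau_s^+$, $B_s$ can no longer be flipped, since doing so would require revisiting $\un{4s-4}$ and therefore $\un{4s-3}$, contradicting maximality of $\tau_s^+$. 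Hence $B_s=1$ at $\tau_s^+$, so by the pinning above, $M$ is all-ones at $\tau_s^+$ and all-zeros at $\tau_s^-$.

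Finally, flipping $M$ between its all-zeros and all-ones configurations requires at least $\lceil |W|/2 \rceil$ $A$-gates, since each such gate flips at most two of the $|W|$ qubits of $M$. Thus between $\tau_s^+$ and $\tau_s^-$, at least $\lceil |W|/2\rceil$ gates are spent on $M$, and between the preceding all-zeros configuration (the initial state for the smallest such $s$, or the analogous $\tau_{s'}^-$ for the closest smaller $s'$ with $B_{s'}=1$) and $\tau_s^+$, a further $\lceil |W|/2\rceil$ gates are spent. Summing over the $g'$ indices $s$ with $B_s=1$, and using that these counting intervals are pairwise disjoint, yields a total of at least $2g'\lceil |W|/2\rceil \ge g'|W|$ gates. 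The main obstacle is ruling out ``zig-zag'' trajectories in which the prover tries to reuse $M$-flips across different proof bits; this is resolved by the last-visit definition of $\tau_s^{\pm}$ combined with unarity of $K$ and class (e), which together force the relevant intervals for distinct bits to be disjoint.
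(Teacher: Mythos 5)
Your proof is correct and takes essentially the same route as the paper: it forces the clock to traverse all timesteps $t_0,\ldots,t_{4n}$ via the unarity constraints (class (a)) and $L$-phase constraints (class (b)), pins the amplification register $M$ to all-ones at $\un{4s-3}$ when $B_s=1$ and to all-zeros at $\un{4s-1}$ via classes (c), (d), (f), and charges at least $\abs{W}$ $X$-flips per proof bit that is set. Your last-visit bookkeeping with $\tau_s^{\pm}$ and the disjointness of the counting intervals is a clean way of making rigorous the paper's more informal claim that these flip costs are incurred ``independently'' for each proof bit.
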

\begin{proof}
    Since all intermediate states must be low energy, if one wants to go from timestep $t_i$ to $t_l$ for $i<l$, then one must also go to all timesteps $t_j$ for all $i<j<l$.
    To flip all three qubits $B_i$,$F_i$, and $G_i$, the clock on registers $K$ and $L$ needs to be in the state $t_{4i-4}$. Say we flipped all three proof qubits. Since we need to be in the last timestep $t_{4n}$ when we turn on the GO register (\Cref{eqn:PE}), we must go through $t_{4i-3}$ and then $t_{4i-1}$. When reaching $t_{4i-3}$, all the amplification qubits in $M$ need to be flipped to $1$. When reaching $t_{4i-1}$, they all need to be flipped back to $0$ again. Flipping all the amplification qubits to $1$ takes at least $\abs{W}/2$ many operations, and since we need to flip them back, we need $2\abs{W}/2=\abs{W}$ flips. Since this happens independently for all proof qubits, and we need to prepare $g'$ proof qubits, we require $g'\abs{W}$ many operations in total. 
\end{proof}

\noindent\emph{The general proof.} To prove \Cref{thm:GSCON}, we need to generalize \Cref{l:SoundnessGSCONXGates} to arbitrary $2$-qubit unitaries instead of just $X$ gates; this is given by \Cref{l:mainlemma} below.
By combining this with the Traversal Lemma (\Cref{l:traversal}), we will be able to complete the proof for soundness. 
In the remainder of this proof, by intermediate states $\ket{\psi_j}$, we mean as defined for \GSCON\ in \Cref{def:GSCON}.

\begin{restatable}[Low Hamming Weight Lemma]{lemma}{mainlemma}\label{l:mainlemma}
    Let $\ket{\psi_j}$ be a state with weight $\ge 1- 1/m'^5$ on timesteps $t_{4n}$ and $t_{4n-1}$. 
    If we allow only $m'$ many operations, all but $g'/2$ many qubits in $B$ have overlap at most $1/m'^3$ with $\ket{1}$. 
    Formally, there exists a subset $S$ of at least $n-g'/2$ many qubits in $B$ such that 
    \begin{equation}
    \forall\; i\in S, \qquad \trace(\ketbra{0}{0}_{B_i}\ketbra{\psi_j}{\psi_j})\ge 1-1/(m')^3.
    \end{equation}
\end{restatable}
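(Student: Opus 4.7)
The plan is to generalize the $X$-gate argument of \Cref{l:SoundnessGSCONXGates} to arbitrary $2$-qubit unitaries by combining (i) a timestep-localization argument, (ii) a ``triple-locking'' invariant enforced by constraint class (e), and (iii) a Hamming-weight count on the amplification register $M$. The target is to pin at least $\abs{W}$ ``amplification'' $2$-qubit unitaries per proof bit $B_s$ that ends with non-trivial $\ket{1}$-overlap, so that with total budget $m' = g'\abs{W}/2$, at most $g'/2$ such bits are possible.

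I would first establish timestep localization: since each intermediate state $\ket{\psi_l}$ satisfies $\bra{\psi_l}H\ket{\psi_l}\le \eta_1$, and since the terms of constraint classes (a) and (b) penalize every invalid clock string on $KL$ with unit weight, the state lies within $O(\sqrt{\eta_1})$ of the span of the valid timestep states $\{\ket{t_i}\}_{i=0}^{4n}$ on $KL$. Combined with the $k$-orthogonality of timesteps (\Cref{obs:orthtime}), any $2$-qubit unitary can transfer amplitude only between neighbouring $\ket{t_i},\ket{t_{i+1}}$, up to this error. Next, by constraint (e) the triple $(B_s,F_s,G_s)$ is penalized whenever it lies outside $\{\ket{000},\ket{111}\}$ at any timestep except $t_{4s-4}$. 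Since any $2$-qubit unitary touches at most two of these three qubits, an attempt to modify the triple outside of $t_{4s-4}$ must create bad-triple amplitude bounded by the energy slack, which is inverse polynomial. Therefore the weight of $\ket{\psi_l}$ on the subspace ``$B_s=\ket{1}$ and clock $\in\{t_{4s-3},\ldots,t_{4n}\}$'' can only be increased significantly by transferring amplitude from a component where clock $=t_{4s-4}$ and $B_s$ has just been flipped together with $F_s,G_s$.

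The amplification count then follows: by constraints (c), (f), the $t_{4s-3}$-component with $B_s=\ket{1}$ forces $M=\ket{1}^{\otimes\abs{W}}$, while by (d) the $t_{4s-1}$-component forces $M=\ket{0}^{\otimes\abs{W}}$. Defining $\Phi(\rho) := \sum_{i=1}^{\abs{W}} \trace(\ketbra{1}{1}_{M_i}\rho)$, a single $2$-qubit unitary changes $\Phi$ by at most $2$ (since it touches at most two qubits of $M$), so for each ``flipped'' index $s$, pushing $\Phi$ from $0$ to $\abs{W}$ and back costs at least $\abs{W}$ unitaries. Setting $K:=\{s\in[n] : \trace(\ketbra{1}{1}_{B_s}\ketbra{\psi_j}{\psi_j}) > 1/(m')^3\}$, the hypothesis of weight $\ge 1-1/(m')^5$ on $t_{4n-1}\cup t_{4n}$ forces each $s\in K$ to have substantially completed its amplification cycle, so $\abs{K}\cdot\abs{W}\le m' = g'\abs{W}/2$, giving $\abs{K}\le g'/2$. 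The set $S:=[n]\setminus K$ then satisfies $\abs{S}\ge n-g'/2$ and the overlap bound in the statement.

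The main obstacle is making the triple-locking step quantitative against arbitrary $2$-qubit unitaries rather than just Pauli-$X$. In the $X$-gate case triple-locking is a strict classical invariant, but general unitaries can coherently mix good and bad triples, so the invariant holds only up to amplitude controlled by the local energy cost. The crux is to propagate the low-energy bound $\bra{\psi_l}H\ket{\psi_l}\le\eta_1$ through $m'$ steps using $\snorm{H_i}\le 1$, so that the accumulated error stays below the $1/(m')^3$ threshold required in the conclusion. This is precisely the role of the hypothesis weight $\ge 1-1/(m')^5$ on $t_{4n-1}\cup t_{4n}$: it provides the inverse-polynomial slack needed to absorb the errors from both the timestep localization and the triple-locking analyses.
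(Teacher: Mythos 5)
Your proposal is structurally the same as the paper's argument: you use the triple-locking from constraint class~(e) combined with $2$-orthogonality of $\ket{000}/\ket{111}$, the $k$-orthogonality of timesteps from \Cref{obs:orthtime} to restrict amplitude transfer to adjacent timesteps, and a cost count on the amplification register $M$. The potential function $\Phi(\rho)=\sum_i\trace(\ketbra{1}{1}_{M_i}\rho)$ is a clean repackaging of the paper's ``flipping all of $M$ and back costs $\ge\abs{W}$ gates.'' So the route is the right one.

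However, you correctly identify the crux (``making the triple-locking step quantitative'') and then wave at it rather than close it, and this is exactly where the paper does the real work. The issue is that being within $O(\sqrt{\eta_2})$ of the \emph{span} of valid timestep states is not enough: a state evenly spread across many valid timesteps $\{\ket{t_i}\}$ stays in that span yet has small \emph{global} $\Phi$, because each timestep's conditional $M$-configuration is only weighted by that timestep's small amplitude. Your potential argument needs $\Phi$ to actually rise to $\approx\abs{W}$ and fall back for each flipped $s$, and that requires knowing the state is \emph{concentrated} (weight $\approx 1$) on the single check timestep $t_{4s-3}$ at some step, and later concentrated on $t_{4s-1}$ --- not merely that it has some support there. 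The paper proves precisely this in two steps preceding the Low Hamming Weight Lemma: \Cref{l:twotimesteps} shows by induction over the $m'$ gates that at every step the state is concentrated on at most two \emph{consecutive} timesteps up to $1/(m')^c$ error (the key trick is that a $2$-local unitary pushing weight from $t_{i+1}$ to $t_{i+2}$ on the $KL$ qubits would simultaneously push the $t_i$ component to an invalid clock string, so large weight on $t_i$ bounds the transfer via the energy constraint); and \Cref{cor:highweightontimestep} then deduces that to move appreciable weight from $t_i$ to $t_j$, every intermediate $t_l$ must at some step carry weight $\ge 1-1/3m'^4$, and these occurrences are ordered. Only with \Cref{cor:highweightontimestep} in hand does the amplification count (whether phrased via your $\Phi$ or the paper's direct bookkeeping) go through. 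You also use $\eta_1$ where the soundness argument should use $\eta_2$, though that is cosmetic.

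In short: same high-level plan, slightly different bookkeeping device for the final count, but the genuinely hard quantitative localization lemma is asserted rather than established.
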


\noindent The challenge in proving this is as follows. 
When only using $X$ gates (\Cref{l:SoundnessGSCONXGates}), it was clear that when one is in timestep $t_i$ and wishes to proceed to timestep $t_j$, that one must also visit all time steps $t_k$ for $i<k<j$. 
Unfortunately, it is not \emph{a priori} obvious how to formulate an analogous statement in the arbitrary $2$-qubit unitary case, as the prover could place the state in a superposition of multiple different timesteps. 
However, it turns out something similar \emph{does} hold for the arbitrary unitary case. In \Cref{l:twotimesteps}, we will show that at any point, one can only have large weight on at most two consecutive timesteps. Before giving the statement and proof, we demonstrate the idea via an example.

\begin{exmp}
    Consider state \begin{align}\frac{1}{\sqrt{2}}\ket{000\ldots0}_K\ket{00}_L + \frac{1}{\sqrt{2}}\ket{100\ldots0}_K\ket{10}_L\end{align} which has weight $1/2$ on the timesteps $t_0$ and $t_1$ each. 
    Suppose we wish to put weight $1/4$  onto $\ket{t_2}=\ket{110\ldots0}_K\ket{11}_L$. Since each unitary $U$ is $2$-local, a single such unitary cannot transfer weight from $t_0$ to $t_2$.
    Thus, $U$ must act only on register $K_2L_2$, performing a mapping of form 
    \begin{align}
        \ket{0}_{K_2}\ket{0}_{L_2}\quad\mapsto \quad (\gamma_1 \ket{11} + \gamma_2 \ket{00} + \gamma_3 \ket{10} + \gamma_4 \ket{01})_{K_2,L_2}
    \end{align} 
    with $|\gamma_1|^2 \ge 1/2$.
    But then $U$ also maps $\ket{t_0}=\ket{000\ldots0}_K\ket{00}_L$ to 
    \begin{align}
        \ket{000\ldots0}\ket{00}\mapsto\gamma_1\ket{010\ldots0}\ket{01} + \gamma_2\ket{000\ldots0}\ket{00} + \gamma_3\ket{010\ldots0}\ket{00} + \gamma_4\ket{000\ldots0}\ket{01}.
    \end{align} 
    But this yields weight $\frac{1}{2}\abs{\gamma_1}^2 \ge \frac{1}{4}$ on the illegal timestep $\ket{010\ldots0}_K\ket{01}_L$, taking us out of the low energy subspace (class (a)).
\end{exmp}

We now state and prove \Cref{l:twotimesteps}.

 \begin{lemma}\label{l:twotimesteps}
     If all intermediate states $\ket{\psi_j}$ are low energy, i.e. satisfy $\bra{\psi_j}H\ket{\psi_j}\leq \eta_2$, then there is a constant $c$, such that for all $0\leq j\leq m'$ there exists $i$ with
     \begin{align}
         \ket{\psi_j} = \zeta\ket{t_i}_{KL}\otimes\ket{\phi} + \chi\ket{t_{i+1}}_{KL}\otimes\ket{\phi'} + \sum_{\substack{x\in \set{0,1}^{\abs{K}+\abs{L}} \\ x\ne t_i, t_{i+1}}} \gamma_x \ket{x}_{KL}\otimes\ket{\phi_x}\label{eqn:write}
     \end{align}
such that
\begin{align}
        \sum_{\substack{x\in \{0,1\}^{|K|+|L|} \\ x\ne t_i, t_{i+1}}}  |\gamma_x|^2 < \frac{1}{(m')^c}.
\end{align}
Additionally, if $\ket{\psi_j}$ has this form, and $\ket{\psi_{j+1}}$ has $\Delta\in \reals$ more weight on $t_{i+2}$ than $\ket{\psi_j}$, i.e. $w_{\psi_{j+1}}(t_{i+2})\geq w_{\psi_{j}}(t_{i+2})+\Delta$ then
\begin{align}
        \abs{\zeta}^2 \Delta \in O(\eta_2).
\end{align}
 \end{lemma}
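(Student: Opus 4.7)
First, I would observe that every local term in $H$ other than $\Hkit \otimes P_E$ is diagonal on the clock registers $KL$, and that the terms from classes~(a) and~(b) together project onto the complement of $\set{\ket{t_i} : 0 \leq i \leq 4n}$ with at least unit eigenvalue on every violating computational basis vector of $KL$. Consequently the hypothesis $\bra{\psi_j} H \ket{\psi_j} \leq \eta_2$ immediately yields $\sum_{x \notin \set{t_0,\ldots,t_{4n}}} \abs{\gamma_x}^2 \leq \eta_2$, so all but an $O(\eta_2)$-mass of $\ket{\psi_j}$ lives on legal timesteps. The remaining task is to show that this legal mass concentrates on two consecutive timesteps.

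Next, I would prove the concentration claim by induction on $j$. The base case $j=0$ is immediate since $\ket{\psi_0}=\ket{\psi}$ sits entirely on $\ket{t_0}$. For the inductive step, suppose $\ket{\psi_j}$ has mass at least $1 - 1/(m')^c$ on $\set{\ket{t_i},\ket{t_{i+1}}}$, and let $U_{j+1}$ act on the qubit pair $(q_1,q_2)$. The Gray-code structure of \Cref{def:timestep} implies that consecutive legal timesteps $t_k$ and $t_{k+1}$ differ exactly in the pair $(K_{k+1}, L_{f(k)})$ for a specific $L$-qubit $L_{f(k)}$ determined by $k \bmod 4$, while any two non-consecutive legal timesteps differ in more than two qubits by \Cref{obs:orthtime}. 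Hence a single $2$-qubit gate on a pair inside $KL$ that is not such a transition pair can only produce illegal clock amplitude out of $\ket{t_i}$ or $\ket{t_{i+1}}$. I then case-split on $(q_1,q_2)$: if it does not meet $KL$, the $KL$-marginals are preserved; if it is the transition pair for $t_i \leftrightarrow t_{i+1}$, $U_{j+1}$ rotates within $\set{\ket{t_i},\ket{t_{i+1}}}$ modulo leakage penalized by classes~(a) and~(b); if it is the transition pair for $t_{i+1}\leftrightarrow t_{i+2}$, the window shifts to $\set{i+1,i+2}$; analogously for $\set{i-1,i}$; any other pair, or any mixed pair with one qubit outside $KL$, generates only illegal-clock amplitude out of the window, which is bounded by $O(\eta_2)$ via the energy hypothesis. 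Aggregating these per-step bounds over the $m'$ gates and choosing $c$ sufficiently large relative to the inverse polynomial $\eta_2$ preserves the window bound $1/(m')^c$ throughout.

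For the second assertion, I would isolate the single matrix element of $U_{j+1}$ responsible for the increase $\Delta$ on $\ket{t_{i+2}}$. Since the transitions $t_i \to t_{i+1}$ and $t_{i+1} \to t_{i+2}$ use disjoint $K$-qubits ($K_{i+1}$ versus $K_{i+2}$), the timesteps $t_i$ and $t_{i+1}$ agree on both qubits of the pair $(K_{i+2},L_{f(i+1)})$; call this common value $(a,b)$ and the value on $t_{i+2}$ by $(a',b')$. The single scalar $\alpha := \bra{a'b'} U_{j+1} \ket{ab}$ simultaneously governs (i) the new amplitude gained on $\ket{t_{i+2}}$ from the $\chi\ket{t_{i+1}}$ component, so $\Delta \leq \abs{\alpha}^2$ since $\abs{\chi}^2 \leq 1$, and (ii) the amplitude that $U_{j+1}$ sends from $\zeta\ket{t_i}$ onto the clock basis state obtained from $t_i$ by overwriting $(K_{i+2},L_{f(i+1)})$ with $(a',b')$. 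That image state has $K_{i+1}=0$ and $K_{i+2}=1$, so it is penalized by the class~(a) term $\ketbra{01}{01}_{K_{i+1},K_{i+2}}$. Hence the mass $\abs{\zeta}^2\abs{\alpha}^2$ contributes directly to $\bra{\psi_{j+1}} H \ket{\psi_{j+1}} \leq \eta_2$, giving $\abs{\zeta}^2 \Delta \leq \abs{\zeta}^2 \abs{\alpha}^2 \in O(\eta_2)$.

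The main obstacle is quantitatively controlling the sliding-window case in the induction: when $U_{j+1}$ is a transition-pair gate, I must prevent pre-existing small leakage on farther timesteps from accumulating beyond $1/(m')^c$ over the $m' \in \poly(N)$ steps. I would handle this by arguing that each gate introduces at most $O(\eta_2)$ fresh illegal-clock mass and cannot create fresh mass on a timestep more than one position away from the current window, so the leakage stays inverse-polynomially small throughout; this fixes $c$ via the inverse-polynomial ratio between $\eta_2$ and the $\Omega(1)$ eigenvalue penalty of classes~(a) and~(b).
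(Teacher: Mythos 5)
Your proposal is correct and takes essentially the same approach as the paper. Both arguments induct on $j$ to maintain the two-consecutive-timestep window (using $k$-orthogonality of non-adjacent clock states plus the class-(a)/(b) energy penalties to suppress illegal-clock mass), and both hinge on the same key observation: the single matrix element $\alpha$ of the transition gate on the pair $(K_{i+2},L_{f(i+1)})$ that transfers weight from $t_{i+1}$ to $t_{i+2}$ necessarily also maps $\zeta\ket{t_i}$ onto an illegal clock state, so the energy bound $\bra{\psi_{j+1}}H\ket{\psi_{j+1}}\le\eta_2$ forces $\abs{\zeta}^2\abs{\alpha}^2\in O(\eta_2)$ and hence $\abs{\zeta}^2\Delta\in O(\eta_2)$.
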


\begin{proof}
    Recall that by definition, $\eta_1$ is exponentially small with respect to $m'$.
    We shall prove via induction over $0\leq j\leq m'$ that the following stronger claim holds: 
    Every intermediate state $\ket{\psi_j}$ can be written as in \Cref{eqn:write}
    but with 
    \begin{align}
        \sum_{\substack{x\in \{0,1\}^{|K|+|L|}}} |\gamma_x|^2 \le \frac{j}{(m')^{c+1}}.
    \end{align}
    The lemma will then immediately follow.

    The base case $j=0$ trivially holds, since (1) our starting state $\ket{\psi_0}=\ket{\psi}$ satisfies $\zeta=1$ and encodes $\ket{t_0}$ in $KL$, and (2) by \Cref{obs:orthtime}, $\ket{t_0}$ and $\ket{t_2}$ are at least $3$-orthogonal, and so the weight on $t_{i+2}$ remains $0$ after applying a single $2$-qubit unitary.
    So, consider the induction step $j \mapsto j+1$. 
    By definition, $\ket{\psi_{j+1}} = U\ket{\psi_{j}}$ for $2$-qubit unitary $U$. 
    If $U$ does not decrease the weight on $t_i$ plus the weight on $t_{i+1}$ by at least $1/(m')^{c+1}$, then the old decomposition gives the induction step. 
    So suppose $U$ decreases this weight, $|\zeta|^2+|\chi|^2$, by at least $\Delta:=1/(m')^{c+1}$.
    Since the set of strings comprising valid and invalid timesteps partition the set of all strings, this means the weight $\Delta$ must be split among the spans of invalid and valid timesteps.
    And since we assume $\bra{\psi_{j}}H\ket{\psi_{j}}\leq \eta_2$ and $\bra{\psi_{j+1}}H\ket{\psi_{j+1}}\leq \eta_2$, we conclude that the span of valid time steps must have its total weight increased by at least $\Delta-\eta_2$.
    
    Now comes the crux of the argument.
    Recall from \Cref{obs:orthtime} that any pair of timesteps $\ket{t_i}$ and $\ket{t_j}$ for $i<j$ are at least $\abs{j-i+1}$-orthogonal, and that 
    $\ket{t_i}$ and $\ket{t_{i+1}}$ are precisely $2$-orthogonal. 
    Thus, restricted to the span of valid timesteps, $U$ can only transfer weight from $t_i$ to $t_{i-1}$ and from $t_{i+1}$ to $t_{i+2}$, respectively.
    Without loss of generality, consider the case where at least $\Delta':=(\Delta-\eta_2)/2$ of the weight is transferred from $t_{i+1}$ to $t_{i+2}$.
    In this case, $U$ must act on qubit $B_{i+2}$ and a single qubit in $L$ (as per the rules of \Cref{def:timestep}, if $i+2\in \set{2k,2k-1}$ for even $k$ (respectively, odd $k$), then $U$ acts on $L_1$ (respectively, $L_2$)); without loss of generality, assume $U$ acts on $L_1$, and that the state of $B_{i+2}L_1$ is $\ket{0b}$ for $b\in\set{0,1}$.
    Then, if $U$ maps $\ket{0b}\mapsto (\delta\ket{1b'} +\cdots )$ where $\ket{1b'}_{B_{i+2}L_1}$ is the pair of bits consistent with $t_{i+2}$, a straightforward argument shows $\abs{\delta},\abs{\beta}\geq\min(\Delta'/4,\sqrt{\Delta'/2})=\Delta'/4$. 
    But by definition, $\ket{t_i}$ \emph{also} has $B_{i+2}L_1=\ket{0b}$.
    Moreover, due to the qubits on which $U$ acts, mapping $\ket{0b}$ to $\ket{1b'}$ in $\ket{t_i}$ necessarily yields an invalid time step.
    Since all intermediate states have energy at most $\eta_2$ by definition, it follows that $\abs{\zeta}^2\abs{\Delta'/4}\in O(\eta_2)$, implying $\abs{\zeta}^2\abs{\Delta}\in O(\eta_2)$, as claimed.

    Since $\Delta=1/(m')^{c+1}$, we now have $\zeta\in O(\eta_2)$. We conclude that the weight on all timesteps except $t_{i+1}$ and $t_{i+2}$ is at most 
    \begin{align}
        \frac{j}{(m')^{c+1}} + |\zeta|^2 + \eta_2 < \frac{j+1}{(m')^{c+1}},
    \end{align} 
    and we obtain decomposition 
        \begin{align}
            U\ket{\psi_j} = \zeta'\ket{t_{i+1}}\otimes \ket{\phi''} + \chi'\ket{t_{i+2}}\otimes \ket{\phi'''}  + \sum_{\substack{x\in \{0,1\}^{|K|+|L|} \\ x\ne t_{i+1}, t_{i+2}}} \gamma'_x\ket{x}\otimes \ket{\phi_x}.
        \end{align} 
        with 
        \begin{align}
            \sum_{\substack{x\in \{0,1\}^{|K|+|L|} \\ x\ne t_{i+1}, t_{i+2}}}  |\gamma'_x|^2 \le \frac{j+1}{(m')^{c+1}}.
        \end{align}
\end{proof}

By setting $\eta_2$ small enough, we can make $c$ an arbitrarily high constant. We note that for our analysis, $c=5$ does suffice.
 A consequence of this is that whenever we need to move weight from one timestep $t_i$ to a later timestep $t_j$, then we will have to have weight almost $1$ on all intermediate timesteps:

\begin{cor}\label{cor:highweightontimestep}
    If an intermediate state $\ket{\psi_k}$ has weight $\ge 1/3m'^4$ on timestep $t_i$ and we want to move weight $\ge 1/3m'^4 - 1/m'^5$ from $t_i$ to $t_j$ for $j>i$, then for all $i<l<j$, there exists intermediate state $\ket{\psi_{k_l}}$ with $k_l > k$ weight at least $1-1/3m'^4$ on timestep  $t_l$. Let $k_l$ be the first integer bigger than $k$ where this happens. Then for $i<l< l' <j$ with corresponding states $\ket{\psi_{k_l}}$ and $\ket{\psi_{k_{l'}}}$, it holds that $k<k_l<k_{l'}$. 
\end{cor}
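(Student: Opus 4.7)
The plan is to iterate \Cref{l:twotimesteps} across the sequence $\ket{\psi_k}, \ket{\psi_{k+1}}, \ldots$ and track how the weight on valid timesteps evolves. For each step $j$, let $p_j$ denote a ``pair index'' so that the pair $\set{t_{p_j}, t_{p_j+1}}$ carries all but at most $1/(m')^c$ of the weight of $\ket{\psi_j}$, as guaranteed by \Cref{l:twotimesteps}. The first task is to establish a discrete continuity statement: $\abs{p_{j+1} - p_j} \le 1$. This follows from \Cref{obs:orthtime}: any pair $t_r$ and $t_{r+2}$ are at least $3$-orthogonal, so a single $2$-qubit unitary $U_{j+1}$ cannot transfer amplitude between them. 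Consequently, the only valid-to-valid transfers $U_{j+1}$ can induce are between $t_r$ and $t_{r\pm 1}$, plus a bleed into invalid timesteps which is bounded by $O(\eta_2)$ per step by the class (a) and (b) clock penalties together with the low-energy hypothesis.

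Next I would show that a pair-index transition from $l-1$ to $l$ forces almost all weight onto the shared timestep $t_l$. Concretely, if $p_j = l-1$ and $p_{j+1} = l$, then by definition the weight on $\set{t_{l-1}, t_l}$ at step $j$ is $\ge 1 - 1/(m')^c$, and the weight on $t_{l-1}$ at step $j+1$ has fallen below $1/(m')^c$ (otherwise $p_{j+1}$ would still equal $l-1$). Since the displaced mass can only flow from $t_{l-1}$ to $t_l$ (a direct hop to $t_{l+1}$ or beyond is blocked by $3$-orthogonality, and any leakage into invalid timesteps is $O(\eta_2)$), we obtain
\begin{equation}
    w_{\psi_{j+1}}(t_l) \;\geq\; 1 - \tfrac{2}{(m')^c} - O(\eta_2) \;\geq\; 1 - \tfrac{1}{3(m')^4},
\end{equation}
provided $c$ is chosen large enough and $\eta_2 \le 1/(m')^{13}$ as in \Cref{sscn:GSCONconstruction}. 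Thus every pair-index advance across the boundary $l-1 \to l$ witnesses a state concentrated on $t_l$.

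The corollary then follows. Since $w_{\psi_k}(t_i) \ge 1/(3(m')^4) > 1/(m')^c$, we must have $p_k \in \set{i-1, i}$; since eventually the weight on $t_j$ rises above $1/(3(m')^4) - 1/(m')^5$, some later step has $p \in \set{j-1, j}$. By the one-step bound on $\abs{p_{j+1}-p_j}$, for every $l$ with $i < l < j$ there exists a first step $k_l > k$ at which $p$ transitions from $l-1$ to $l$, and by the concentration argument $w_{\psi_{k_l}}(t_l) \ge 1 - 1/(3(m')^4)$. For the ordering $k_l < k_{l'}$ when $l < l'$: to reach any step with $p \in \set{l'-1, l'}$ starting from $p \le i$, the trajectory of $p_\cdot$ must first attain the value $l-1$ (and hence trigger the $l-1 \to l$ transition) strictly before attaining $l'-1$, so $k_l < k_{l'}$.

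The main obstacle I anticipate is bookkeeping the $O(\eta_2)$ leakage into invalid timesteps cumulatively over up to $m'$ steps, so that the bound at the threshold $1 - 1/(3(m')^4)$ survives. This should reduce to picking the constant $c$ in \Cref{l:twotimesteps} large enough (the authors indicate $c=5$ suffices) so that both $m'/(m')^{c+1}$ and $m' \cdot O(\eta_2)$ are absorbed comfortably into $1/(3(m')^4)$, which is permitted by the choice $\eta_2 \le 1/(m')^{13}$ in \Cref{sscn:GSCONconstruction}. A minor additional subtlety is that the ``pair index'' $p_j$ provided by \Cref{l:twotimesteps} need not be unique when two valid timesteps are almost empty; I would fix the convention of picking $p_j$ minimal, which does not affect the one-step monotonicity argument since the ambiguity can only occur when the true pair weight is already close to trivial.
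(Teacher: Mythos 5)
Your overall strategy—iterating \Cref{l:twotimesteps} and tracking a ``pair index'' $p_j$, using a discrete intermediate-value argument to see that every value $l$ is attained—is faithful in spirit to what the paper does, but your concentration step has a genuine gap. You claim that whenever the pair index transitions from $l-1$ to $l$ between steps $j$ and $j+1$, the state $\ket{\psi_{j+1}}$ has weight $\ge 1 - 1/(3(m')^4)$ on $t_l$, on the grounds that the displaced mass from $t_{l-1}$ can only flow into $t_l$. But this only accounts for the influx into $t_l$ from $t_{l-1}$; it ignores the possibility that $U_{j+1}$ instead moves weight \emph{out of} $t_l$ into $t_{l+1}$. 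Concretely, with your minimality convention the following is consistent with all hypotheses: $\ket{\psi_j}$ has essentially all weight on $t_l$ (so $p_j = l-1$ by minimality, since $\set{t_{l-1},t_l}$ holds everything), $U_{j+1}$ acts on $K_{l+1}$ and the relevant $L$ qubit and transfers nearly all the weight from $t_l$ to $t_{l+1}$, so $\ket{\psi_{j+1}}$ has essentially all weight on $t_{l+1}$ (and $p_{j+1}=l$). Then $w_{\psi_{j+1}}(t_l) \approx 0$, directly contradicting your claimed bound. Note this trajectory never leaves the low-energy space, so nothing in the hypotheses rules it out.

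The conclusion of the corollary still survives in this scenario (take $k_l = j$ rather than $j+1$), and the repair is a short case split on which pair of qubits $U_{j+1}$ acts: if it is the $K_l$--$L$ pair (pulling weight $t_{l-1}\!\to t_l$), your argument applies and concentration is at $j+1$; if it is the $K_{l+1}$--$L$ pair (pushing weight $t_l \!\to t_{l+1}$), then the second part of \Cref{l:twotimesteps} (the $|\zeta|^2 \Delta \in O(\eta_2)$ bound) forces $w_{\psi_j}(t_{l-1})$ to already be negligible, so concentration is at step $j$. The paper's own proof avoids the issue by organizing the argument around the first step $U_{r+1}$ that transfers substantial weight from $t_l$ to $t_{l+1}$ (located via pigeonhole over the $m'$ steps), and then showing concentration on $t_l$ at the preceding step $\ket{\psi_r}$; this is the ``before'' step rather than the ``after'' step, which is exactly what the problematic case demands. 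You would also want to double-check that after the repair, the chosen $k_l$ are still strictly increasing (they are, since two adjacent timesteps cannot simultaneously carry weight $\ge 1 - 1/(3(m')^4)$).
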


\begin{proof}
    We need to move weight $1/3m'^4-1/m'^5$ from $t_i$ to $t_{j}$, and every intermediate state can have weight at most $\eta_2$ on illegal timesteps. This means we need to move weight at least 
    \begin{align}
        \frac{1}{3m'^4} - \frac{1}{m'^5} - m'\cdot\eta_2 \ge \frac{1}{4m'^4}
    \end{align} 
    from $t_i$ to $t_{j}$ through legal timesteps. Since we can do $m'$ steps in total, at some point, we will have to move weight $\ge 1/4m'^5$ from $t_{i+1}$ to  $t_{i+2}$, for some $d$. 
    Assume without loss of generality that this happens for the first time with the application of $U_{r+1}$. This implies $\ket{\psi_r}$ has weight at most 
    \begin{align}
        \frac{1}{(m')^c} + m'\cdot \frac{1}{(4m')^5} + m' \cdot \eta_2 
    \end{align}
        on $t_{i+2}$, where $m' \cdot \eta_2$ comes from weight moved through illegal timesteps and $m'\cdot \frac{1}{(4m')^5}$ comes from the fact that this is the first time we are moving weight $\ge 1/4m^{'5}$. By \Cref{l:twotimesteps}, $\ket{\psi_r}$ has at most weight $w$ on $t_i$ for 
    \begin{align}
        w \cdot \frac{1}{4m'^5} \in O(\eta_2).
    \end{align} 
    Since $\ket{\psi_r}$ had weight $\ge 1/m'^5$ on $t_{i+1}$, by \Cref{l:twotimesteps}, all the weight except for $\frac{1}{(m')^c}$ was on $t_{i+1}$ and a neighbouring timestep. 
    Thus, $\ket{\psi_{r}}$ has weight at least
    \begin{align}
        1 - w - \left(\frac{1}{(m')^c} + m'\frac{1}{4m'^5} + m' \cdot \eta_2 \right) - \frac{1}{(m')^c} \ge 1-\frac{1}{3(m')^{4}}
    \end{align} 
    on $t_{i+1}$. The same happens for the other timesteps $t_{l}$ for $i<l<j$.
\end{proof}

Now we can prove the main lemma (\Cref{l:mainlemma}). The main argument will go as follows: We can only prepare a proof qubit in a certain timestep. Since we need to be in the last timestep to turn on the GO register, at some point will have to be in each intermediate timesteps with weight close to $1$. This gives us control over what the amplification register must look like and how often we have to flip all the amplification qubits. This then gives a lower bound for the path length. For convenience, we first restate \Cref{l:mainlemma}.

\mainlemma*
\begin{proof}[Proof of \Cref{l:mainlemma}]
    Let $\ket{\psi}$ be the starting state and $U_1,\dots,U_{k}$ be a sequence of 2-qubit unitaries preparing some state $\ket{\phi'}$. Define the intermediate states $\ket{\psi_i}:= U_i\cdots U_1 \ket{\psi}$. We say that we prepare proof qubit $i$ in step $j$, if after gate $U_j$ we go from  
    
    \begin{align} 
    \ket{\psi_{j-1}} &= \alpha\cdot\ket{000}_{B_i,F_i,G_i}\otimes\ket{\eta_{000}} + \beta\ket{111}_{B_i,F_i,G_i}\otimes\ket{\eta_{111}} +\gamma\cdot\ket{\xi_{j-1}}\qquad\mapsto\\
    \ket{\psi_{j}} &= \alpha'\cdot\ket{000}_{B_i,F_i,G_i}\otimes\ket{\eta_{000}} +
    \beta'\ket{111}_{B_i,F_i,G_i}\otimes\ket{\eta_{111}} +
    \gamma'\cdot\ket{\xi_j},
    \end{align}
     such that  $|\beta'|^2-|\beta|^2 \ge 1/2m'^4$. Say we want to prepare proof qubit $i$ in step $j$. Since the gates are 2-local and $\ket{000}$ and $\ket{111}$ are $2$-orthogonal, that means there must have been weight at least $|\gamma|^2\ge 1/2m'^4$ on states orthogonal to $\ket{000}_{B_i,F_i,G_i}$ and $\ket{111}_{B_i,F_i,G_i}$ in $\ket{\psi_{j-1}}$. Because of the constraints of class (e), we have to be in timestep $t_{4(i-1)+1}$ with weight $\ge 1/2m'^4-\eta_2$. 
    Without loss of generality, after preparing a proof qubit $i$, we always have weight $\ge 1/2m'^3$ on $\ket{111}_{B_iF_iG_i}$, as otherwise, we would need to prepare proof qubit $i$ again later in order for $\trace(\ketbra{1}{1}_{B_i}\ketbra{\psi_j}{\psi_j})\ge 1/(m')^3$. 
    Because of \Cref{cor:highweightontimestep}, in order to have weight at least $1- \frac{1}{(m'^c)}$ on the last two timesteps, there will have to be weight $\ge 1-1/3m'^4$ first on timestep $t_{4(i-1)+2}$ and then $t_{4(i-1)+4}$. When we have weight $1-1/3m'^4$ on timestep $t_{4(i-1)+2}$ and proof qubit $i$ was already prepared, then by the union bound there is weight at least 
    \begin{align}
        y:=\left(1-1/3m'^4\right)-\left(1-\frac{1}{2m'^3}\right) = \frac{1}{2m'^3}-1/3m'^4
    \end{align}
    on $t_{4(i-1)+2} \otimes \ket{1}_{B_i}$, so also weight $\ge y-\eta_2$ on $t_{4(i-1)+2} \otimes \ket{1}_{B_i}\otimes\ket{1\ldots1}_{M}$. But as soon as we have weight $1-1/3m'^4$ on $t_{4(i-1)+4}$, we also have weight $1-1/3m'^4-\eta_2$ on $\ket{0\ldots 0}_{M}$. So we need to flip all amplification qubits while going from timestep $t_{4(i-1)+1}$ to timestep $t_{4(i-1)+4}$. Since we need to flip the amplification qubits separately for all qubits we prepare, we can prepare at most $g'/2$ many qubits in $m'$ steps (recalling that $m' := \frac{1}{2} g'\abs{W}$ and $\abs{M}=g'$). 
\end{proof}

Now we can finish the NO case. We are given the starting and final states 
    \begin{align}
        \ket\psi &=\ket{0}^{\otimes(n+q(n)+N)}\ket{0}^{\otimes3}\ket{0}^{\otimes(2n+4n+2+M)}\\
        \ket\phi &=\ket{0}^{\otimes(n+q(n)+N)}\ket{1}^{\otimes3}\ket{0}^{\otimes(2n+4n+2+M)}.\end{align} 
Suppose there is a path $U_1,\ldots,U_{m'}$ such that 
\begin{align}
        \enorm{U_{m'}\cdots U_1\ket\psi-\ket\phi}\le 1/4.
\end{align} 
We apply the Transversal Lemma to $\ket\psi$ and $\ket\phi$ to obtain $j\in[m']$ such that
\begin{equation}
    \bra{\psi_j}P\ket{\psi_j} \ge \left(\frac{1-\frac{1}{2}}{2m'}\right)^2 = \frac{1}{16{m'}^2}.
\end{equation}
By assumption,
\begin{align}
    \eta_2\geq \bra{\psi_j}P\otimes \ketbra{0}{0}_{K_{4n}}\ket{\psi_j}\geq \frac{1}{16m'}\trace\left(\frac{P\otimes I\ketbra{\psi_j}{\psi_j}P\otimes I}{\trace(P\otimes I\ketbra{\psi_j}{\psi_j}P\otimes I)}I\otimes \ketbra{0}{0}_{K_{4n}}\right),
\end{align}
from which we conclude $\ket{\psi_j}$ has weight at least $\beta/(32m')\geq 1/64m'^2$ on timestep $t_{4n}$. 
By \Cref{l:twotimesteps}, that means $\ket{\psi_j}$ has weight $\ge 1 - 1/m'^5$ on timesteps $4n$ and $4n-1$. Since we have applied at most $m'$ many 2-local unitaries, by \Cref{l:mainlemma}, there exists a set $S$ of size $\ge n-g'/2$ qubits with 
\begin{align}
    \bra{\psi_j}\ketbra{0}{0}_{E_i}\ket{\psi_j} \ge 1-\frac{1}{m'^3}
\end{align} 
for all $i\in S$. By the union bound, in the event measuring all the qubits in the GO register E and in the proof register B, the probability of measuring a state orthogonal to $\ket{000}$ and $\ket{111}$ in the GO register and simultaneously  measuring all 0s on qubits in $S$ greater or equal than 
\begin{align}
    1-\left(1-\frac{1}{16m'^2}\right)-\left(n-\frac{g'}{2}\right)\left(\frac{1}{m'^3}\right) 
    \ge \ \frac{1}{16m'^2} - \frac{1}{32}\frac{m'}{m'^3} 
    \ge \ \frac{1}{32m'^2}.
\end{align}
That means $\ket{\psi_j}$ has weight at least $\frac{1}{6}\cdot \frac{1}{32m'^2}$ on a state of the form
\begin{align}\label{eqn:2}
   \ket{\psi'} := \sum_{\substack{x}} \alpha_{x}\ket{x}_B\ket{z}_E\ket{\psi_{x}'},
\end{align}
where $\ket{z}$ is one of the 6 standard basis states on the GO register $E$ orthogonal to $\ket{000}$ and $\ket{111}$, and $\ket{x}$ is are strings of Hamming weight $\le g'/2$. 
Now we argue why it is sufficient to lower bound the energy of $\ket{\psi'}$ in order to lower bound the energy of $\ket{\psi_j}$. We can write $$\ket{\psi_j}= \sum_{\substack{x,y}} \alpha_{x,y}\ket{x}_B\ket{y}_E\ket{\psi_{x,y}},$$ where x and y go through all the standard basis states. Since the circuit $V$ does not act on the proof register except for CNOT gates in the beginning, we have that 
$${\bra{x}_B\bra{y}_E\bra{\psi_{x,y}}} (H_{kit}\otimes P) \ket{x'}_B\ket{y'}_E\ket{\psi_{x',y'}}=0$$
when $(x,y)\neq (x',y')$. This implies that in order to lower bound the total energy of $\ket{\psi_j}$, it is sufficient to lower bound the energy of the $\alpha_{x,y}\ket{x}_B\ket{y}_E\ket{\psi_{x,y}}$ terms. In our case, we will lower bound the energy of $\sum_{\substack{x'}} \alpha_{x',z}\ket{x'}_B\ket{z}_E\ket{\psi_{x',z}}$, where $z$ is a standard basis state orthogonal to $\ket{000}$ and $\ket{111}$ and the $x'$ will be low Hamming weight strings. This then yields the lower bound for the energy of $\ket{\psi_j}$.

Regarding $\ket{\psi'_{BCD}}$, to conclude the proof, we would ideally desire $\bra{\psi'}(\Hkit)_{BCD}\ket{\psi'}\geq \beta$, since we are in a NO case for QMSA. However, this is not necessarily true, since recall the QMSA circuit $V$ accepts proofs of Hamming weight $\geq g'$ in the NO case, and so $\Hkit\not\succeq \beta I$. To achieve approximately this bound, we thus finally use the parameter $\mu$ (the only part of our proof requiring it), introduced in defining $H:=(\mu(\hin+\hprop+\hstab)+\hout)_{BCD} \otimes P_E+(\Hamp)_{BEFGKLM}$. Namely, we use the Extended Projection Lemma (\Cref{l:kkr}) with $H_1=\mu(\hin+\hprop+\hstab)$ and $H_2=\hout$, where $\spa{S}$ denotes the null space of $\hin+\hprop+\hout$, which recall is precisely the span of all history states. Combining this with \Cref{l:GKgap} and the assumption that $\bra{\psi'}\mu(\hin+\hprop+\hstab)+\hout\ket{\psi'}\leq \beta$ (for otherwise we can skip the use of \Cref{l:kkr} and proceed directly to \Cref{eqn:1}), we conclude that there exists a history state $\ket{\psihist}$ such that 
\begin{align}
    \abs{\braket{\psi'}{\psihist}}^2\geq 1-\frac{2\pi^4}{\mu}.
\end{align}
By \Cref{eqn:enorm}, 
\begin{align}\label{eqn:trnorm}
    \trnorm{\ketbra{\psi'}{\psi'}-\ketbra{\psihist}{\psihist}}\leq \frac{2\sqrt{2}\pi^2}{\sqrt{\mu}}.
\end{align}

Combining these results, we finally have
\begin{align}
   \bra{\psi_j}H\ket{\psi_j}  &\ge  \bra{\psi_j}(\mu(\hin+\hprop+\hstab)+\hout)_{BCD} \otimes P_E\ket{\psi_j} \label{eqn:1}\\
   &\ge \frac{1}{6\cdot 32m'^2} \bra{\psi'}(P\otimes \Hkit)\ket{\psi'}\\
   &\ge \frac{1}{6\cdot 32m'^2} \cdot \left(\beta -\frac{2\sqrt{2}\pi^2}{\sqrt{\mu}}\right)\\
   &\ge \eta_2,
\end{align}
where the first inequality follows since $\Hamp\succeq 0$, the second by \Cref{eqn:2}, and the thrid by \Cref{eqn:trnorm}, the H\"{o}lder inequality and the fact that $\snorm{\hout}=1$. 
\end{proof}

\subsection{Approximation ratio}\label{sscn:approxgscon}

Fix any $\epsilon>0$. Recall that $N$ is the encoding size of the QMSA instance with ratio $g'/g\in \Theta(N^{1-\epsilon})$, and let $N'$ denote the encoding size of our constructed GSCON instance. We wish to show that $m'/m\in\Theta(N'^{1-\epsilon})$. For this, note first that since $\abs{W}\in\Theta(N)$, 
\begin{align}
    \frac{m'}{m}\in\Theta\left(\frac{\frac{1}{2} g'N}{4g+2g\abs{W}+16n+2N+2}\right)\in\Theta\left(\frac{\frac{1}{2}g'}{2g +\Theta(1)}\right)\in\Theta(N^{1-\epsilon}).
\end{align}
Thus, the claim will follow if $N'\in\Theta(N)$. This, in turn, is true because the total number of constraints (each of constant size) present in $H$ is $O(N)$. (Specifically, the $\Hkit$ terms contribute $O(N)$ terms, and each class from (a)-(e) contributes $\Theta(n)\in O(N)$ terms.)
\section{Hardness of approximation for GSE}\label{scn:GSE}
We next prove \Cref{thm:GSE}, i.e. hardness of approximation for GSE. We begin by formally defining the latter.

\begin{definition}[Ground State Entanglement(GSE)]\label{def:GSE}
    Fix an inverse polynomial $\Delta: \nats \mapsto \reals^+$. Denote the function for the von Neumann entropy as $\s$.
    \begin{itemize}
        \item Input:
        \begin{enumerate}
            \item \klh $H = \sum_i H_i$ acting on n qubits with $H_i \in \herm{(\complex^2)^{\otimes k}}$ satisfying $\norm{H_i} \le 1$.
            \item Thresholds $\eta_1$, $\eta_2$, $\eta_3$ and $\eta_4 \in \reals$ such that $\eta_2- \eta_1 \ge  \Delta$ and $\eta_4 - \eta_3 \ge \Delta$.
            \item A partition $(A,A')$ dividing the n qubits into two sets.
            \end{enumerate}
            \item Output:
            \begin{enumerate}
                \item If there exists a state $\ket\psi \in (\complex^2)^{\otimes n}$ with 
                    (low energy) $\bra{\psi} H\ket{\psi}\le \eta_1$ and  
                     (low entanglement) $\s(\ketbra{\psi}{\psi}_A) \le \eta_3$,
                then output YES.
                \item If for all states $\ket\psi \in (\complex^2)^{\otimes n}$, either
                (high energy) $\bra{\psi} H\ket{\psi}\ge \eta_2$ or  
                (high entanglement) $\s(\ketbra{\psi}{\psi}_A) \ge \eta_4$,
                then output NO.
            \end{enumerate}
    \end{itemize}
\end{definition}

We restate \Cref{thm:GSE} for convenience.
\thmGSE*


\noindent For its proof, we will need the Fannes inequality \cite{Fannes:1973ddo}, which converts trace distance bounds into entropy distance bounds.

\begin{lemma}\label{eqn:Fannes}(Fannes inequality~\cite{Fannes:1973ddo})
    Suppose $\rho$ and $\sigma$ are density operators such that the trace distance between them satisfies $\trnorm{\rho-\sigma} \le 1/e$. Then 
    \begin{align}
        \abs{\s(\rho)-\s(\sigma)} \le \trnorm{\rho-\sigma}\log(d) + \eta(\trnorm{\rho-\sigma}),
    \end{align}
    where $d$ is the dimension of the Hilbert space and $\eta(x):= -x\cdot \log(x)$.
\end{lemma}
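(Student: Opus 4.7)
The plan is to reduce the operator-level inequality to a scalar inequality over ordered eigenvalues, following Fannes' original strategy. First I would diagonalize $\rho$ and $\sigma$ and list their eigenvalues in non-increasing order, $r_1 \ge \cdots \ge r_d$ and $s_1 \ge \cdots \ge s_d$. Since the von Neumann entropy depends only on the spectrum, $\s(\rho) - \s(\sigma) = \sum_i [\eta(r_i) - \eta(s_i)]$, so the triangle inequality gives $\abs{\s(\rho) - \s(\sigma)} \le \sum_i \abs{\eta(r_i) - \eta(s_i)}$. I would then invoke the standard Ky Fan / Mirsky-type fact that the ordered eigenvalues of Hermitian operators are $1$-Lipschitz in trace norm, yielding $\sum_i \abs{r_i - s_i} \le \trnorm{\rho - \sigma}$. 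In particular, each $t_i := \abs{r_i - s_i}$ is at most $\trnorm{\rho-\sigma} \le 1/e$.

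Next comes the main scalar lemma: for $p, q \in [0,1]$ with $\abs{p - q} \le 1/e$, one has $\abs{\eta(p) - \eta(q)} \le \eta(\abs{p-q})$. Assume WLOG $p = q + \epsilon$ with $\epsilon \le 1/e$. Because $\eta$ is concave on $[0,1]$ with $\eta(0) = 0$, subadditivity gives $\eta(p) = \eta(q+\epsilon) \le \eta(q) + \eta(\epsilon)$, handling the case $\eta(p) \ge \eta(q)$. Otherwise $\eta(q) > \eta(p)$, which forces $p > 1/e$. I would handle the remaining cases via a mean-value bound on the decreasing branch $[1/e,1]$, where $\abs{\eta'}$ is uniformly bounded by a constant $C$ and $\eta(\epsilon) \ge C\epsilon$ for $\epsilon \le 1/e$: if also $q \ge 1/e$ the bound is direct, while if $q < 1/e < p$ one first observes $\eta(q) - \eta(p) \le \eta(1/e) - \eta(p)$ by the maximizer property at $1/e$ and then applies the mean-value bound since $p - 1/e < \epsilon$. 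Applying the lemma term-by-term yields $\abs{\s(\rho) - \s(\sigma)} \le \sum_i \eta(t_i)$.

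Finally, I would close with a log-sum-style estimate. Setting $T := \sum_i t_i$, the normalized distribution $\{t_i/T\}_i$ on $d$ outcomes has Shannon entropy at most $\log d$, so
\begin{align}
    \sum_i \eta(t_i) \;=\; -\sum_i t_i \log t_i \;=\; T\cdot\Bigl(-\sum_i \tfrac{t_i}{T}\log\tfrac{t_i}{T}\Bigr) + \eta(T) \;\le\; T\log d + \eta(T).
\end{align}
Since $x \mapsto x\log d + \eta(x)$ is non-decreasing on $[0, 1/e]$, replacing $T$ by the possibly larger quantity $\trnorm{\rho-\sigma} \le 1/e$ preserves the inequality, yielding exactly the claimed bound.

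The main obstacle I anticipate is the scalar lemma of Step 2: because $\eta$ is not monotone on $[0,1]$ and has unbounded derivative near $0$, separate arguments are needed depending on whether $p,q$ both lie below, both lie above, or straddle the maximizer $1/e$, with the concavity-subadditivity shortcut handling only one sign of $\eta(p) - \eta(q)$. A secondary technical point is the ordered-eigenvalue comparison $\sum_i \abs{r_i - s_i} \le \trnorm{\rho-\sigma}$, which is classical but non-trivial; I would cite Ky Fan's variational characterization of partial sums of ordered eigenvalues rather than rederive it.
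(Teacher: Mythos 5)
Your proposal is correct: it is the standard textbook proof of Fannes' inequality (Mirsky's ordered-eigenvalue comparison, the scalar lemma $\abs{\eta(p)-\eta(q)}\le\eta(\abs{p-q})$ handled by concavity/subadditivity plus a case split around the maximizer of $\eta$, and the log-sum normalization step), and each step checks out, including the final monotonicity of $x\mapsto x\log d+\eta(x)$ on $[0,1/e]$. The paper itself gives no proof — it cites Fannes directly — so there is nothing to compare against beyond noting that your reconstruction is the canonical argument and is complete.
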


We give the proof's construction in \Cref{sscn:gse_construction}, its completeness analysis in \Cref{sscn:gse_YES}, soundness analysis in \Cref{sscn:gse_NO}, and approximation ratio analysis in \Cref{sscn:approxgse}. 

\subsection{Construction}\label{sscn:gse_construction}


As in the proof of \Cref{thm:GSCON}, let $(V,g,g')$ be an instance of QMSA of size $N$, with $V$ taking in an $n$-qubit proof in register $B$ and with ancilla register $C$, and let $\LL=\abs{V}$.
Without loss of generality, the completeness and soundness parameters for $V$ are $1-\varepsilon$ and $\varepsilon$ for $\varepsilon \in \Theta(1/2^{N})$~\cite{gharibianHardnessApproximationQuantum2012}. We map $(V,g,g')$ to an instance of GSE as follows. 

We begin by introducing four registers $E,E',F$, and $F'$ of size $n$ each, set initially to the all zeroes state. These registers will be used for two operations: First, we copy the proof in $B$ to registers $E$ and $E'$. Second, we create Bell pairs on registers $F$ and $F'$, depending on whether the proof qubits in $B$ are $0$ or $1$. For these two operations, define $3$-qubit gates $S_1$ and $S_2$ with action
\begin{align}
    &S_1(\ket{000}) = \ket{000} \quad \text{ and }\quad S_1(\ket{100}) = \ket{111},\\
    &S_2(\ket{000}) = \ket{000} \quad \text{ and }\quad S_2(\ket{100}) = \frac{1}{\sqrt{2}}\ket{100} + \frac{1}{\sqrt{2}}\ket{111}. 
\end{align}
Note that both gates are controlled on the first qubit. So, define new circuit $W = W_1 \cdots W_{2(\LL+2n)}$ as follows. First, add $n$ $S_1$ gates, where each gate is controlled by one of the proof qubits $1\le i \le n$, and acts on registers $E_{i}$ and $E'_{i}$, i.e. on $B_iE_iE'_i$. Similarly, we create Bell pairs controlled by the proof register. For this we apply $n$ $S_2$ gates, again controlled by the proof qubits, acting on $F_i$ and $F'_i$, i.e. on $B_iF_iF'_i$. $W$ next runs $V$ on registers $B$ and $C$. Lastly, append to $W$ $2n+\LL$ identity gates. In total, our new circuit has $\LL':= 4n+2\LL$ many gates. 

The input for our GSE instance is now: $H:=\mu(\hin+\hprop+\hstab)+\hout$ (for $\mu$ to be set later) is the Kitaev Hamiltonian of $W$ (\Cref{l:kitaev}) acting on registers $B$ (proof), $C$ (ancilla), $D$ (clock).  Define $\Tilde{n}$ to be the number of qubits the new Hamiltonian acts on. and $EE'FF'$, $\eta_1 := \alpha$, $\eta_2 := \beta/2\Tilde{n}^3$, $\eta_3 := g+1$, $\eta_4 := \frac{1}{4}(g'-3)$. The partition $(A,A')$ contains registers $E$ and $F$ on one side and the remaining registers on the other.

\subsection{YES case}\label{sscn:gse_YES}
We have to show that there exists a state $\ket\psi$ with energy $\le \alpha$ against $H$ and with entanglement entropy $\le g+1$ across cut $(A,A')$. In the YES case for QMSA, we know that circuit $V$ accepts some proof $x^* \in \{0,1\}^n$ of Hamming weight at most $g$. Since $V$ accepts a monotone set by definition, without loss of generality, $x^*$ has weight exactly $g$. Since the circuit accepts the proof, by \Cref{l:kitaev} the history state
\begin{align}
    \ket\psi := \sum_{t=0}^{\LL'} W_t \cdots W_1 \ket{x^*}_B\ket{0\cdots 0}_C\ket{t}_D\ket{0\cdots 0}_{EE'FF'}=:\sum_{t=0}^{\LL'} \ket{\psi_t}_{BCD}\ket{\phi_t}_{EE'FF'}
\end{align}
has energy $\bra\psi H \ket\psi\le \alpha$.

We next show $\ket\psi$ has low entanglement entropy by looking at its Schmidt decomposition across $(A,A')$. Let $R_{x^*} = \{i \in \{1,\ldots,n\}: x^*_i = 1\}$ and $R_{x^*}' = \{x \in \{0,1\}^n: \{i\in [n]: x_i = 1\} \subseteq R_{x^*}\}$. After timestep $t \ge 2n$, 
\begin{align}
    W_t \cdots W_1 \ket{x^*}_B\ket{0\cdots 0}_C\ket{t}_D\ket{0\cdots 0}_{EE'FF'} = \ket{\psi_t}_{BCD}\ket{x^*}_E\ket{x^*}_{E'}\ket\phi_{FF'},
\end{align}
where 
\begin{align}
    \ket\phi = \sum\limits_{x \in R_{x^*}'}\frac{1}{\sqrt{2^g}} \ket{x}_F\ket{x}_{F'}.
\end{align}
Since we were able to write down $\ket\phi$ in its Schmidt decomposition, we are able to calculate its entanglement directly as
\begin{align}
    \s(\trace_{F'}(\ketbra{\phi}{\phi})) = -\sum\limits_{x \in R_{x^*}'}\frac{1}{2^g} \log\left(\frac{1}{2^g}\right) = -2^g \cdot \frac{1}{2^g} \log\left(\frac{1}{2^g}\right) = g.
\end{align}
This gives the entanglement entropy for $\ket\phi_{FF'}$, but we want the entropy for $\ket\psi$. For this, since the $S_1$ gates act simultaneously on registers $E_i$ and $E'_i$, we have
\begin{align}
    \ket\psi &= \sum_{t=0}^{\LL'} W_t \cdots W_1 \ket{x^*}_B\ket{0\cdots 0}_C\ket{t}_D\ket{0\cdots 0}_{EE'FF'} \\
    &= \sum_{t=0}^{\LL'} \ket{\psi_t} \ket{w_t}_{E}\ket{w_t}_{E'} \sum_{x \in \{0,1\}^n} a_{t,x}\ket{x}_{F}\ket{x}_{F'} 
    \quad\text{(for some $w_t\in \set{0,1}^n$ and amplitudes $a_{t,x}$)}
    \\
    &= \left(\sum_{t=0}^{n} \ket{\psi_t} \ket{w_t}_{E}\ket{w_t}_{E'} \sum_{x \in \{0,1\}^n} a_{t,x}\ket{x}_{F}\ket{x}_{F'}\right) + \left(\sum_{t=n+1}^{\LL'} \ket{\psi_t} \ket{x^*}_{E}\ket{x^*}_{E'} \sum_{x \in \{0,1\}^n} a_{t,x}\ket{x}_{F}\ket{x}_{F'}\right) \\
    &= \sum_{t=0}^{n} a_{t,0^n}\ket{\psi_t} \ket{w_t}_{E}\ket{w_t}_{E'}  \ket{0^n}_{F}\ket{0^n}_{F'} + \sum_{t=n+1}^{\LL'} \ket{\psi_t} \ket{x^*}_E\ket{x^*}_{E'} \sum_{x \in R_{x^*}'} a_{t,x}\ket{x}_{F}\ket{x}_{F'}.
\end{align}
To calculate the entanglement entropy via the Schmidt decomposition, we look at the weights on $\ket{y}_E\ket{y}_{E'}\ket{x}_F\ket{x}_{F'}$. 
Specifically, recall the $(A,A')$ cut partitions registers $EF$ versus all other registers. So we consider the Schmidt decomposition with basis elements of form $\ket{y}_E\ket{x}_{F'}$ for $x\in R'_{x^*}$ (all other $x$ correspond to terms of amplitude $0$, and thus can be ignored). 

Now, for any  $w_t \ne x^*$, we have only non-zero weight on $\ket{w_t}_E\ket{w_t}_{E'}\ket{0^n}_F\ket{0^n}_{F'}$. Since $x^*$ has Hamming weight $g$, there are at most $g$ such states. The only other states with non-zero weight have the form $\ket{x^*}_E\ket{x^*}_{E'}\ket{x}_F\ket{x}_{F'}$, for $x \in R_{x^*}'$. Since $|R_{x^*}'| \le 2^g$, there are only $2^g$ such states. In total, we have only $2^g + g\le 2^{g+1}$ non-zero amplitudes, so the entropy is at most $g+1$ across the $(A,A')$ cut.

\subsection{NO case}\label{sscn:gse_NO}

We have to show that every state with energy $\le \beta$ has entanglement entropy $\ge \frac{1}{4}(g'-7)$. Consider first the special case of any history state
\begin{align}
    \ket\psi := \sum_{t=0}^{\LL'} \frac{1}{\sqrt{\LL'}} W_t \cdots W_1 \ket{\gamma}_B\ket{0\cdots 0}_C\ket{t}_D\ket{0\cdots 0}_{EE'FF'},
\end{align}
where $\ket\gamma$ is a superposition of standard basis states with Hamming weight at least $g'$. Let $\gamma = \sum_{p \in \{0,1\}^n} b_p \ket{p}$ for amplitudes $b_p$. We can rewrite
\begin{align}
    \ket\psi &= 
    \sum_{t=0}^{\LL'} \frac{1}{\sqrt{\LL'}} \sum_{p\in \{0,1\}^n} {b_p}\cdot W_t \cdots W_1 \ket{p}_B\ket{0\cdots 0}_C\ket{t}_D\ket{0\cdots 0}_{EE'FF'} \\
    &= \frac{1}{\sqrt{\LL'}} \sum_{p\in \{0,1\}^n}  {b_p} \left(\sum_{x,\; p' \in \{0,1\}^n} \sum_{t=0}^{\LL'}  a_{t,p,p',x} \ket{\psi_{t,p}'} \ket{p'}_E\ket{p'}_{E'}\ket{x}_F\ket{x}_{F'}\right). 
\end{align}
    The $a_{t,p,p',x}$ can be interpreted as follows: Having a history state with $p$ in its proof register, at time $t$ we have weight $|a_{t,p,p',x}|^2$ on $\ket{p'}_E\ket{p'}_{E'}\ket{x}_F\ket{x}_{F'}$. 
    For fixed $p$ of Hamming weight $\HW(p)$, after a certain timestep we have an equal superposition over the $x \in R_p'$ on register $F$, which is a superposition over $2^{HW(p)}$ many elements. This certainly holds for the last half of the timesteps, since the last half of the gates in $W$ are identity gates. But at that point, all the weight on registers $EE'$ is on $\ket{p}_E\ket{p}_{E'}$. Therefore we have $\sum _{t=0}^{\LL'}  |a_{t,p,p,x}|^2 \ge \LL' \frac{1}{2} \frac{1}{2^{\HW(p)}}$. Let $j\in\set{1,\ldots, n}$ be the largest index such that $p_j = 1$. Define 
    \begin{align}
            M_p := \{x \in R_p' : x_j = 1\}.
    \end{align} 
    For $t \ge 2n$, we have an equal superposition over all the $2^{\HW(p)}$ elements in $R'_p$ on registers $F$ and $F'$. The elements in $M_p$ are exactly those where $|a_{t,p,p,x}|$ is always 0 or $1/2^{\HW(p)}$, depending on $t$. Therefore, for $x \in M_p$, we have 
    \begin{align}\label{GSEinequal}
        \frac{\LL' }{2^{\HW(p)+1}} \le \sum _{t=0}^{\LL'}  |a_{t,p,p,x}|^2 \le \LL' \frac{1}{2^{\HW(p)}}.
    \end{align}
Let $f(x):= -x\log(x)$. To lower bound the entanglement entropy of $\ket\psi$ across the $(A,A')$ cut, consider an $(A,A')$ Schmidt decomposition with Schmidt basis $\ket{p'}_E\ket{x}_F$ on $A$. 
Then, for any $p$ and $x$, the corresponding Schmidt coefficient is at least
\begin{align}\label{eqn:schmidt}
    \sqrt{\frac{\abs{b_p}^2\sum_{t}\abs{a_{t,p,p,x}}^2}{T'}},
\end{align}
where we have used the fact that the $\ket{\psi'_{t,p'}}$ are orthogonal, since (1) history states based on distinct proof strings $p$ are orthogonal, and (2) the clock states on register $D$ are orthogonal. We thus have
\begin{align}
    \s(\trace_{A'}(\ketbra{\psi}{\psi})) &\ge \sum_{p} \sum_{x \in \{0,1\}^n} f\left(\frac{\sum_{t}\abs{b_p}^2\abs{a_{t,p,p,x}}^2}{T'}\right) \\
    &\ge  \sum_{p} \sum_{x \in \set{0,1}^n} \abs{b_p}^2 \cdot f\left(\frac{\sum_{t}\abs{a_{t,p,p,x}}^2}{T'}\right) \\
    &\ge  \sum_{p} \sum_{x \in M_p}  \abs{b_p}^2 \cdot f\left(\frac{\sum_{t}\abs{a_{t,p,p,x}}^2}{T'}\right) \\
    &\ge  \sum_{p} \sum_{x \in M_p}  \abs{b_p}^2 \cdot f\left(\frac{1}{2^{\HW(p)+1}}\right) \\
    &=  \sum_p \abs{b_p}^2 \sum_{x \in M_p} \frac{1}{2^{\HW(p)+1}}\cdot (\HW(p)+1) \\ 
    &=  \sum_p \abs{b_p}^2 \cdot \frac{1}{4}\left(\HW(p)+1\right) \\
    &\ge  \sum_p \abs{b_p}^2 \cdot \frac{1}{4}\left(g'+1\right)  \\
    &\ge \frac{1}{4} (g'+1),
\end{align}
where the second inequality follows by concavity of entropy, the third because $f(x)\geq 0$ on $[0,1]$, the fourth by \Cref{eqn:schmidt} and since $f(x)$ is concave for small $x$, the fifth since $\abs{M_p}=2^{\HW(p)-1}$, the sixth since $\HW(p)\geq g'$ by assumption, and the last since the $\abs{b_p}^2$ form a distribution.

The analysis above holds for any history state. We now extend the analysis to general low-energy states by combining the Extended Projection Lemma~\Cref{l:kkr} with the Fannes inequality \Cref{eqn:Fannes}.
Specifically, suppose $\bra{\psi}\mu(\hin+\hprop+\hstab)+\hout\ket{\psi}\leq \eta_2 = \beta/2\Tilde{n}'^3$.
We again use the Extended Projection Lemma (\Cref{l:kkr}) with $H_1=\mu(\hin+\hprop+\hstab)$ and $H_2=\hout$, where $\spa{S}$ denotes the null space of $\hin+\hprop+\hout$. Combining this with \Cref{l:GKgap}, we conclude that there exists a history state $\ket{\psihist'}$ such that 
\begin{align}
    \abs{\braket{\psi}{\psihist'}}^2\geq 1-\frac{2\pi^4}{\mu 2\Tilde{n}^3}.
\end{align}
A history state with weight $a$ on low Hamming weight proofs will have energy $\ge a \cdot \beta$. We have $\eta_3 = 1/2\Tilde{n}^3 \cdot \beta$. Therefore, by choosing $\mu$ large enough polynomial, a low energy state must have overlap $\ge 1-1/\Tilde{n}^3$ with a history state $\ket{\psihist}$ with only high Hamming weight proofs in its proof register.
So by \Cref{eqn:enorm} we have
\begin{align}\label{eqn:trnorm2}
    \trnorm{\ketbra{\psi}{\psi}-\ketbra{\psihist}{\psihist}}\leq \frac{2}{\sqrt{\Tilde{n}^3}}.
\end{align}
By the Fannes inequality~\Cref{eqn:Fannes}, we conclude
\begin{align}
    \abs{S(\ketbra{\psi}{\psi})-S(\ketbra{\psihist}{\psihist})}\leq \frac{2\Tilde{n}}{\sqrt{\Tilde{n}^3}} - \frac{2}{\sqrt{\Tilde{n}^3}} \cdot \log\left({\frac{2}{\sqrt{\Tilde{n}^3}}}\right)\le 1.
\end{align}
This allow us to lower bound the entanglement of any low energy state as
$$S(\ketbra{\psi}{\psi}) \ge S(\ketbra{\psihist}{\psihist}) - 1 \ge \frac{1}{4}(g'+1)-1 = \frac{1}{4}(g'-3).$$
 (Note $N\in\poly(g')$ for $N$ the input size of the QMSA instance.)



\subsection{Approximation ratio}\label{sscn:approxgse}

Fix any $\epsilon>0$. Recall that $N$ is the encoding size of the QMSA instance with ratio $g'/g\in \Theta(N^{1-\epsilon})$, and let $N'$ denote the encoding size of our constructed GSE instance. We wish to show that $\eta_4/\eta_3\in\Theta(N'^{1-\epsilon})$. For this, note first that 
\begin{align}
    \frac{\eta_4}{\eta_3}=\frac{\frac{1}{4}(g'-3)}{g+1}\in\Theta\left(N^{1-\epsilon}\right).
\end{align}
Thus, the claim will follow if $N'\in\Theta(N)$. The latter holds because our construction maps the QMSA circuit $V$ to a new circuit of size $W$, to which it applies the Kitaev circuit-to-Hamiltonian construction. Recalling that  $\abs{W}=2T+4n$ for $T,n\in O(N)$, we conclude $N'\in\Theta(N)$.

\section{NP-hardness of approximation}

In this section, we derive a hardness of approximation result for the classical analogue of GSCON, Boolean Reconfiguration. Again, we do this by reduction from a Hamming weight problem that is known to be hard to approximate:

\begin{definition}[Monotone Minimum Satisfying Assignment (MMSA)~\cite{umansHardnessApproximatingSpl1999}]\label{def:MMSA}
Given a monotone classical circuit $C$ and two thresholds $g,g'$, decide whether there exists an input of Hamming weight $\ge g$ that is accepted, or whether all accepted inputs have Hamming weight $\ge g'$.
\end{definition}

\begin{theorem}[\cite{umansHardnessApproximatingSpl1999}]\label{def:MMSA_is_NPhard}
    MMSA is NP hard for $\frac{g'}{g} \in \Theta(N^{1/5-\epsilon})$.
\end{theorem}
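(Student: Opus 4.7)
The plan is to adapt Umans' disperser-based amplification technique (the same one that underlies the QMSA result already cited in \Cref{scn:preliminaries}), specialized to the classical monotone setting. Unlike a quantum PCP-style approach, the strategy avoids probabilistic checking and instead manufactures the Hamming-weight gap combinatorially by composing an NP-hard starting point with an explicit disperser.

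First, I would reduce from a convenient NP-complete problem such as SAT, or more naturally an intermediate problem like a ``covering'' variant where an accepting input is a set $T$ of indices covering certain constraints. The naive reduction yields a monotone circuit whose minimum accepting Hamming weight equals the minimum cover size, but with only a trivial (constant) gap between YES and NO instances. The purpose of the next step is to amplify this gap. For this, I would take an explicit disperser $D:[N]\times[D]\to[M]$ with min-entropy parameter $K$, meaning that for every $S\subseteq[N]$ with $|S|\ge K$, the image $D(S\times[D])$ covers at least $(1-\nu)M$ of the right side. The input bits of the monotone circuit are indexed by the left vertices $[N]$; the circuit accepts iff the set $T\subseteq[N]$ of indices set to $1$ ``hits,'' through the disperser's neighborhood structure, a prescribed target set of right-vertices determined by the starting NP instance.

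The YES/NO analysis then goes as follows. In the YES case, a satisfying assignment / small cover of the original instance translates directly into a small set $T$ of left-vertices whose image equals the required target set, so a small-weight input is accepted; this gives the upper bound $g$. In the NO case, one argues the contrapositive: if some $T$ with $|T|<g'$ were accepted, then $|T|\ge K$ combined with the disperser property forces the covered right-vertices to be so numerous that they would certify satisfiability (or a small cover) back in the original instance, contradicting unsatisfiability. Monotonicity is preserved throughout because setting additional input bits to $1$ only enlarges $T$ and hence its image, which can only help the acceptance test.

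The main obstacle is a quantitative balancing act: one must choose disperser parameters $(N,D,M,K)$ so that (i) $K$ (the NO-case threshold, essentially $g'$) is as large as possible relative to $g$, (ii) $N$ (the encoding size) stays polynomial, and (iii) explicit constructions of such dispersers are available. Umans' parameter calculation yields the stated ratio $g'/g\in\Theta(N^{1/5-\epsilon})$, and I would simply invoke the explicit disperser construction from his paper as a black box, checking only that the resulting circuit is indeed monotone and poly-size, which follows from the disperser being computable in $\poly(N)$ time. No Hamiltonian or quantum reasoning is needed; the entire construction is classical and combinatorial.
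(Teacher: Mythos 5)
The paper states this theorem purely as a black-box citation to Umans~\cite{umansHardnessApproximatingSpl1999} and gives no proof of its own, so there is no in-text argument to compare against. At the level of approach you have identified the right technique: Umans' MMSA hardness is indeed disperser-based rather than PCP-based, and the exponent $1/5$ is an artifact of the explicit disperser parameters available to him in 1999 --- which is exactly why the paper can immediately strengthen the ratio to $N^{1-\epsilon}$ in the very next theorem by substituting the later constructions of~\cite{a.ta-shmaLosslessCondensersUnbalanced2007}.

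However, the reduction as you describe it does not actually encode NP-hardness, and the soundness sketch is internally inconsistent. You say the circuit ``accepts iff the set $T$ of indices set to $1$ hits, through the disperser's neighborhood structure, a prescribed target set of right-vertices determined by the starting NP instance.'' If the target set is computed in polynomial time from the instance (as it must be in a reduction), then acceptance is a fixed set-cover test and the minimum accepting weight is a property of the disperser graph alone; the NP instance's satisfiability plays no role, so there is nothing NP-hard left. In Umans' construction the disperser is \emph{composed with} the base monotone circuit (the disperser-defined neighborhoods feed into the circuit's inputs), so that whether a small $T$ is accepted genuinely depends on whether the base instance has an accepting structure, not merely on the disperser's combinatorics. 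Your NO-case argument compounds this: you suppose an accepting $T$ has $|T| < g'$, then immediately invoke ``$|T| \ge K$ combined with the disperser property,'' a lower bound you have not established and which points in the wrong direction; worse, the contradiction you derive would say that \emph{no} accepting $T$ exists at all, which is impossible since the all-ones input is always accepted by a non-trivial monotone circuit. The correct soundness step lower-bounds the \emph{minimum} accepting weight by chaining the composed circuit's acceptance condition through the disperser's parameters $(K,\nu,D)$; getting that quantifier chase right is precisely where the $1/5$ exponent comes from, and it needs to be redone explicitly before this sketch can be trusted.
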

\noindent Combining this with the improved dispersers constructed in \cite{a.ta-shmaLosslessCondensersUnbalanced2007} immediately yields~\cite{umansHardnessApproximatingSpl1999} the stronger result:
\begin{theorem}
    MMSA is NP hard for $\frac{g'}{g} \in \Theta(N^{1-\epsilon})$.
\end{theorem}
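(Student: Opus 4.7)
The plan is to revisit Umans' NP-hardness reduction for MMSA, which underlies \Cref{def:MMSA_is_NPhard}, and to reinstantiate it with the improved lossless condensers/dispersers of Ta-Shma, Umans, and Zuckerman~\cite{a.ta-shmaLosslessCondensersUnbalanced2007}. The hardness ratio $g'/g$ produced by Umans' reduction is a monotone function of the underlying disperser's quality, so inserting a quantitatively stronger disperser yields, without any conceptual change, a quantitatively stronger MMSA hardness gap.

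More concretely, Umans' reduction starts from an NP-hard problem such as SAT on $n$ variables and uses a disperser $G = ([N],[D],E)$ with set-size threshold $K$ to construct a monotone circuit $C$ on $N$ input bits with the following properties: in the YES case there is an accepting input of Hamming weight $g = \Theta(n)$ whose support encodes a satisfying assignment via a combinatorially ``structured'' subset of $[N]$; in the NO case, every accepting input must correspond to a subset of $[N]$ large enough that the disperser forces neighborhood-coverage of essentially all of $[D]$, which in turn forces Hamming weight $g' = \Theta(K)$. Consequently $g'/g = \Theta(K/n)$, while the MMSA instance size is $\tilde{O}(N \cdot d)$ for $d$ the disperser degree. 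Plugging in the Ta-Shma-Umans-Zuckerman dispersers, which for any constant $\epsilon > 0$ achieve $K = N^{\epsilon}$ with $d = \polylog(N)$, and taking $N$ to be a polynomial in $n$, the instance size $N'$ is $\tilde{O}(N)$ and $g'/g = \Theta(N^{1-\epsilon}) = \Theta(N'^{1-\epsilon'})$ after absorbing polylog slack into the $\epsilon$-parameter.

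The only item requiring genuine attention (rather than a true obstacle) is bookkeeping: one must verify that the MMSA instance size $N'$ stays within a polylog factor of the disperser parameter $N$, since the claimed ratio is stated relative to $N'$ and not to $N$. This follows from the fact that Umans' reduction is edge-efficient (the circuit's size tracks the edge count of the disperser) together with the polylogarithmic degree of the new dispersers. Indeed, this strengthening is precisely the one anticipated by~\cite{umansHardnessApproximatingSpl1999} conditional on improved disperser constructions, subsequently supplied by~\cite{a.ta-shmaLosslessCondensersUnbalanced2007}.
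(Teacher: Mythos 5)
Your proposal matches the paper's approach exactly: the paper proves this theorem with a single sentence, citing that combining Umans' original MMSA reduction~\cite{umansHardnessApproximatingSpl1999} with the improved dispersers of~\cite{a.ta-shmaLosslessCondensersUnbalanced2007} immediately yields the stronger ratio. Your write-up simply fleshes out the bookkeeping (YES/NO Hamming weights, degree/edge-count accounting, instance-size slack) that the paper leaves implicit, but the underlying argument is identical.
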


\begin{definition}[Boolean Reconfiguration (BR)]\label{def:BR}
    Given a $k$-SAT formula $\phi$ with $n$ variables, two solutions $s,t \in \{0,1\}^n$ and thresholds $h$ and $h'$, does there exist a path of satisfying assignments $s=s_1,\ldots,s_h=t$, such that each $s_i$ can be obtained from $s_{i-1}$ by flipping a single bit, or does every such path have length $\ge h'$.
\end{definition}

\thmBool*

\begin{proof}
    Reduction from MMSA: We are given a circuit $C= C_1,\ldots,C_m$ acting on variables $x_1,\ldots,x_n$ and a thresholds $g$ and $g'$. We define $n+m$ variables $y_1, \ldots, y_{n+m}$, where the first $n$ variables correspond to the bits $C$ acts on, and the last $m$ variables correspond to the $m$ gates. We use the gates of $C$ to define constraints as follows:
    \begin{enumerate}\label{constraints_classical}
        \item For every NOT gate with input $y_i$ and output $y_j$ we put the constraint
        \begin{equation}\label{not_gates_formula} (y_i \vee y_j) \wedge (\neg y_i \vee \neg y_j). \end{equation}
        \item For an AND gate with inputs $y_i, y_j$ and output $y_k$ we define 
        \begin{equation} (\neg y_k \vee y_i) \wedge (\neg y_k \vee y_j) \wedge (\neg y_i \vee \neg y_j \vee y_k).\end{equation}
        \item For an OR gate with inputs $y_i, y_j$ and output $y_k$ we define
        \begin{equation}\label{OR_gates_formula} (y_k \vee \neg y_i) \wedge (y_k \vee \neg y_j) \wedge (y_i \vee y_j \vee y_k).\end{equation}
        \item For the output gate we put a single constraint \begin{equation}(y_{n+m}). \end{equation} 
    \end{enumerate}
    The constructed formula is satisfiable iff there exists an input that $C$ accepts. A satisfying assignment would be setting the variables $y_1,\ldots,y_n$ according to an input $C$ accepts, and setting $y_{n+1},\ldots,y_{n+m}$ according to the outputs of the gates in $C$. For the amplification of  we add variables and constraints analogously to the GSCON setting: We add three GO variables $e_1,e_2,e_3$, variables $f_1,\ldots,f_n$ and $g_1,\ldots,g_n$ to copy the input bits, clock variables $k_1,\ldots,k_{4n}$ and $l_1,l_2$ and amplification variables $a_1,\ldots, a_{m+n}$. The starting and target states are on the one hand the all 0's string, and on the other the all 0's string except bits $e_2$ and $e_3$ which are 1. To get from $(e_2,e_3) = (0,0)$ to $(1,1)$, we need to go to either (1,0) or (0,1). We define constraints to make this only possible when $e_1$ is flipped to 1:
    \begin{equation}(e_1 \vee e_2 \vee \neg e_3) \wedge (e_1 \vee \neg e_2 \vee e_3).\end{equation}
    now we modify the constraints in \ref{not_gates_formula}-\ref{OR_gates_formula} to only be active when $e_1 = 1$. We do this by adding a $\neg e_1$ to all constraints in \ref{not_gates_formula}-\ref{OR_gates_formula}. For example, for an OR gate this would mean
    \begin{align} &(y_k \vee \neg y_i) \wedge (y_k \vee \neg y_j) \wedge (y_i \vee y_j \vee y_k) \\
    \mapsto  &(\neg e_1 \vee y_k \vee \neg y_i) \wedge (\neg e_1 \vee y_k \vee \neg y_j) \wedge (\neg e_1 \vee y_i \vee y_j \vee y_k). \end{align}
    These constraints are the classical analogs to the constraints in $H_{Kit} \otimes P$ in the GSCON setting. We further substitute $P \otimes \ketbra{0}{0}_{K_{4n}}$ with $(\neg e_1 \vee k_{4n})$. Lastly, we substitute the constraints from a) to f). They all have the same from: Either $\ketbra{ab}{ab}$ or $I-\ketbra{ab}{ab}$, sometimes connected via an $\otimes$. We transform these into boolean constraints by mapping
    \begin{equation}
        \ketbra{00}{00} \mapsto (z_i \vee z_j)
    \end{equation}
    or
    \begin{equation}
        I-\ketbra{00}{00} \mapsto (\neg z_i \wedge \neg z_j),
    \end{equation}
    where $z_i$ and $z_j$ are the variables corresponding to the registers of $\ket{00}$. In case $a$ or $b$ is 1 rather than 0, change the constraint by negating the corresponding variables. If two hermitian constraints are connected by a $\otimes$, put a $\vee$ between the two boolean constraints. We do this transformation for all constraints a) to f) except b), as we have to account for the fact that we only can change on bit of the clock register at a time in the classical setting. In the GSCON setting, to go from a timestep $t_i$ to $t_{i+1}$, one needs to apply X gates one qubit in K and one in L simultaneously. In the classical setting, we will allow the verifier to first change the bit in register K, and then the one in register L. The new clock advances as follows:
    \begin{align}
&\{0,0,0,0,...0\}_K\{0,0\}_L\\
&\{1,0,0,0,...0\}_K \{0,0\}_L\\
&\{1,0,0,0,...0\}_K\{1,0\}_L\\
&\{1,1,0,0,...0\}_K\{1,0\}_L\\
&\{1,1,0,0,...0\}_K\{1,1\}_L\text{, etc}\ldots
\end{align}
This means if $\{\widetilde{i}\}_K\{0,0\}_L$ is a valid timestep, so is $\{\widetilde{i}\}_K\{0,1\}_L$, same for timesteps $\{\widetilde{j}\}_K\{1,0\}_L$ and $\{\widetilde{j}\}_K\{0,0\}_L$, and so on.  So rather than transforming
    \begin{equation}
        \ketbra{10}{10}_{K_{4i-1},K_{4i}} \otimes (I-\ketbra{00}{00})_L,
    \end{equation}\\
    into 
    \begin{equation}
        (\neg k_{4i-1} \vee k_{i}) \vee (\neg l_i \wedge \neg l_j),
    \end{equation}
    we transform it into
    \begin{equation}
        (\neg k_{4i-1} \vee k_{i}) \vee (\neg l_i \wedge \neg l_j) \vee (\neg l_{i} \wedge l_{j}),
    \end{equation}\\
    and do the analogous transformation with the constraints where $\ket{ab}_L \ne \ket{00}$. \\
    We set the thresholds 
    \begin{align}
    h := 2\cdot(3g + 2g(m+n) + 4n + 4n) + 4, 
    \qquad\text{and}\qquad
    h' := g' \cdot (m+n).
\end{align}

    Completeness and soundness for the construction follow completely analogously from the completeness of the GSCON construction and \ref{l:SoundnessGSCONXGates}: To go from $(e_2,e_3) = (0,0)$ to $(1,1)$, at some point $(e_2,e_3) = (1,0)$ or $(e_2,e_3) = (0,1)$. For this to happen, we need to flip $e_1$ to 1, which activates the constraints that check if the assignment $y_1,\ldots,y_{n+m}$ satisfies the constraints obtained from $C$. In the completeness case, preparing an accepted assignment on $y_1,\ldots,y_{n+m}$ takes $3g + 2g(m+n) + 4n + 4n$ many bit flips, as we need to prepare the three proof copies $g$ times, flip the amplification bits to 1 and back, and go through $4n$ timesteps on registers $K$ and $L$. To get from $(e_1,e_2,e_3) + (0,0,0)$ to $(0,1,1)$, we need to flip 4 bits, and then again $3g + 2g(m+n) + 4n + 4n$ many flips to uncompute everything. For soundness, in order to prepare an accepting proof, we need to flip all bits in the amplification register at least g' times, so preparing the target state is impossible in $h'$ many steps. 
\end{proof}

\subsection*{Acknowledgements}
We thank Bill Fefferman, Soumik Ghosh, and Umesh Vazirani for helpful discussions.
SG was supported by the DFG under grant numbers 432788384 and 450041824, the BMBF within the funding program “Quantum Technologies - from Basic Research to Market” via project PhoQuant (grant
number 13N16103), and the project “PhoQC” from the programme “Profilbildung 2020”, an initiative of the Ministry of Culture and Science of the State of Northrhine Westphalia.



\bibliographystyle{alpha}
\bibliography{literature.bib,Sev.bib}

\newcommand{\etalchar}[1]{$^{#1}$}
\providecommand{\noopsort}[1]{}
\begin{thebibliography}{B{\noopsort{oliveira filho}}V10}

\bibitem[Aar06]{aaronsonQuantumPCPManifesto2006}
Scott Aaronson.
\newblock The {{Quantum PCP Manifesto}}.
\newblock https://scottaaronson.blog/?p=139, 2006.

\bibitem[AAV13]{aharonovGuestColumnQuantum2013}
Dorit Aharonov, Itai Arad, and Thomas Vidick.
\newblock Guest {{Column}}: {{The Quantum PCP Conjecture}}.
\newblock {\em SIGACT News}, 44(2):47--79, 2013.

\bibitem[ABN23]{anshuNLTSHamiltoniansGood2023}
Anurag Anshu, Nikolas~P. Breuckmann, and Chinmay Nirkhe.
\newblock {{NLTS Hamiltonians}} from {{Good Quantum Codes}}.
\newblock In {\em Proceedings of the 55th {{Annual ACM Symposium}} on {{Theory}} of {{Computing}}}, {{STOC}} 2023, pages 1090--1096, New York, NY, USA, 2023. Association for Computing Machinery.

\bibitem[ACD07]{a.ta-shmaLosslessCondensersUnbalanced2007}
{A. Ta-Shma}, {C. Umans}, and {D. Zuckerman}.
\newblock Lossless {{Condensers}}, {{Unbalanced Expanders}}, and {{Extractors}}.
\newblock {\em Combinatorica}, 27(2):213--240, 2007.

\bibitem[AE11]{aharonovComplexityCommutingLocal2011}
Dorit Aharonov and Lior Eldar.
\newblock On the {{Complexity}} of {{Commuting Local Hamiltonians}}, and {{Tight Conditions}} for {{Topological Order}} in {{Such Systems}}.
\newblock In {\em Proceedings of the 2011 {{IEEE}} 52nd {{Annual Symposium}} on {{Foundations}} of {{Computer Science}}}, {{FOCS}} '11, pages 334--343, USA, 2011. IEEE Computer Society.

\bibitem[AGIK09]{aharonovPowerQuantumSystems2009}
Dorit Aharonov, Daniel Gottesman, Sandy Irani, and Julia Kempe.
\newblock The {{Power}} of {{Quantum Systems}} on a {{Line}}.
\newblock {\em Communications in Mathematical Physics}, 287(1):41--65, 2009.

\bibitem[AGKR24]{agarwalQuantumPolynomialHierarchies2024}
Avantika Agarwal, Sevag Gharibian, Venkata Koppula, and Dorian Rudolph.
\newblock Quantum {{Polynomial Hierarchies}}: {{Karp-Lipton}}, {{Error Reduction}}, and {{Lower Bounds}}.
\newblock In {\em 49th {{International Symposium}} on {{Mathematical Foundations}} of {{Computer Science}} ({{MFCS}} 2024)}. Schloss Dagstuhl -- Leibniz-Zentrum f{\"u}r Informatik, 2024.

\bibitem[AGM20]{anshuProductStateApproximations2020}
Anurag Anshu, David Gosset, and Karen Morenz.
\newblock Beyond {{Product State Approximations}} for a {{Quantum Analogue}} of {{Max Cut}}.
\newblock In Steven~T. Flammia, editor, {\em 15th {{Conference}} on the {{Theory}} of {{Quantum Computation}}, {{Communication}} and {{Cryptography}} ({{TQC}} 2020)}, volume 158 of {\em Leibniz {{International Proceedings}} in {{Informatics}} ({{LIPIcs}})}, pages 7:1--7:15, Dagstuhl, Germany, 2020. Schloss Dagstuhl--Leibniz-Zentrum f{\"u}r Informatik.

\bibitem[AKV18]{aharonovComplexityTwoDimensional2018}
Dorit Aharonov, Oded Kenneth, and Itamar Vigdorovich.
\newblock On the {{Complexity}} of {{Two Dimensional Commuting Local Hamiltonians}}.
\newblock In Stacey Jeffery, editor, {\em 13th {{Conference}} on the {{Theory}} of {{Quantum Computation}}, {{Communication}} and {{Cryptography}} ({{TQC}} 2018)}, volume 111 of {\em Leibniz {{International Proceedings}} in {{Informatics}} ({{LIPIcs}})}, pages 2:1--2:21, Dagstuhl, Germany, 2018. Schloss Dagstuhl--Leibniz-Zentrum fuer Informatik.

\bibitem[AN02]{aharonovQuantumNPSurvey2002}
Dorit Aharonov and Tomer Naveh.
\newblock Quantum {{NP}} - {{A Survey}}.
\newblock arXiv:quant-ph/0210077, 2002.

\bibitem[ASSZ16]{aradLinearTimeAlgorithm2016}
Itai Arad, Miklos Santha, Aarthi Sundaram, and Shengyu Zhang.
\newblock Linear {{Time Algorithm}} for {{Quantum 2SAT}}.
\newblock In Ioannis Chatzigiannakis, Michael Mitzenmacher, Yuval Rabani, and Davide Sangiorgi, editors, {\em 43rd {{International Colloquium}} on {{Automata}}, {{Languages}}, and {{Programming}} ({{ICALP}} 2016)}, volume~55 of {\em Leibniz {{International Proceedings}} in {{Informatics}} ({{LIPIcs}})}, pages 15:1--15:14, Dagstuhl, Germany, 2016. Schloss Dagstuhl--Leibniz-Zentrum fuer Informatik.

\bibitem[BBT09]{bansalClassicalApproximationSchemes2009}
Nikhil Bansal, Sergey Bravyi, and Barbara~M. Terhal.
\newblock Classical approximation schemes for the ground-state energy of quantum and classical ising spin hamiltonians on planar graphs.
\newblock {\em Quantum Information \& Computation}, 9(7):701--720, 2009.

\bibitem[BCO17]{bauschComplexityTranslationallyInvariant2017}
Johannes Bausch, Toby Cubitt, and Maris Ozols.
\newblock The {{Complexity}} of {{Translationally Invariant Spin Chains}} with {{Low Local Dimension}}.
\newblock {\em Annales Henri Poincar{\'e}}, 18(11):3449--3513, 2017.

\bibitem[{\noopsort{beaudrap}}BG16]{beaudrapLinearTimeAlgorithm2016}
Niel {\noopsort{beaudrap}}de Beaudrap and Sevag Gharibian.
\newblock A {{Linear Time Algorithm}} for {{Quantum}} 2-{{SAT}}.
\newblock In Ran Raz, editor, {\em 31st {{Conference}} on {{Computational Complexity}} ({{CCC}} 2016)}, volume~50 of {\em Leibniz {{International Proceedings}} in {{Informatics}} ({{LIPIcs}})}, pages 27:1--27:21, Dagstuhl, Germany, 2016. Schloss Dagstuhl--Leibniz-Zentrum fuer Informatik.

\bibitem[Ber23]{bergamaschiImprovedProductStateApproximation2023}
Thiago Bergamaschi.
\newblock Improved {{Product-State Approximation Algorithms}} for {{Quantum Local Hamiltonians}}.
\newblock In {\em 50th {{International Colloquium}} on {{Automata}}, {{Languages}}, and {{Programming}} ({{ICALP}} 2023)}. Schloss Dagstuhl -- Leibniz-Zentrum f{\"u}r Informatik, 2023.

\bibitem[BFG{\etalchar{+}}24]{boulandPublicKeyPseudoentanglementHardness2024}
Adam Bouland, Bill Fefferman, Soumik Ghosh, Tony Metger, Umesh Vazirani, Chenyi Zhang, and Zixin Zhou.
\newblock Public-{{Key Pseudoentanglement}} and the {{Hardness}} of {{Learning Ground State Entanglement Structure}}.
\newblock In {\em 39th {{Computational Complexity Conference}} ({{CCC}} 2024)}. Schloss Dagstuhl -- Leibniz-Zentrum f{\"u}r Informatik, 2024.

\bibitem[BGK23]{bittelOptimalDepthVariational2023}
Lennart Bittel, Sevag Gharibian, and Martin Kliesch.
\newblock The {{Optimal Depth}} of {{Variational Quantum Algorithms Is QCMA-Hard}} to {{Approximate}}.
\newblock In Amnon {Ta-Shma}, editor, {\em 38th {{Computational Complexity Conference}} ({{CCC}} 2023)}, volume 264 of {\em Leibniz {{International Proceedings}} in {{Informatics}} ({{LIPIcs}})}, pages 34:1--34:24, Dagstuhl, Germany, 2023. Schloss Dagstuhl -- Leibniz-Zentrum f{\"u}r Informatik.

\bibitem[BH16]{brandaoProductStateApproximationsQuantum2016}
Fernando G. S.~L. Brand{\~a}o and Aram~W. Harrow.
\newblock Product-{{State Approximations}} to {{Quantum States}}.
\newblock {\em Communications in Mathematical Physics}, 342(1):47--80, 2016.

\bibitem[B{\noopsort{oliveira filho}}V10]{brietPositiveSemidefiniteGrothendieck2010}
Jop Bri{\"e}t, Fernando~M{\'a}rio {\noopsort{oliveira filho}}{de Oliveira Filho}, and Frank Vallentin.
\newblock The {{Positive Semidefinite Grothendieck Problem}} with {{Rank Constraint}}.
\newblock In Samson Abramsky, Cyril Gavoille, Claude Kirchner, Friedhelm {Meyer auf der Heide}, and Paul~G. Spirakis, editors, {\em Automata, {{Languages}} and {{Programming}}}, pages 31--42, Berlin, Heidelberg, 2010. Springer.

\bibitem[Bra06]{bravyiEfficientAlgorithmQuantum2006}
Sergey Bravyi.
\newblock Efficient algorithm for a quantum analogue of 2-{{SAT}}.
\newblock arXiv:quant-ph/0602108, 2006.

\bibitem[BST08]{ben-aroyaQuantumExpandersMotivation2008}
Avraham {Ben-Aroya}, Oded Schwartz, and Amnon {Ta-Shma}.
\newblock Quantum {{Expanders}}: {{Motivation}} and {{Constructions}}.
\newblock In {\em 2008 23rd {{Annual IEEE Conference}} on {{Computational Complexity}}}, pages 292--303, 2008.

\bibitem[BV05]{bravyiCommutativeVersionLocal2005}
Sergey Bravyi and Mikhail Vyalyi.
\newblock Commutative version of the local {{Hamiltonian}} problem and common eigenspace problem.
\newblock {\em Quantum Information \& Computation}, 5(3):187--215, 2005.

\bibitem[CM16]{cubittComplexityClassificationLocal2016}
Toby Cubitt and Ashley Montanaro.
\newblock Complexity {{Classification}} of {{Local Hamiltonian Problems}}.
\newblock {\em SIAM Journal on Computing}, 45(2):268--316, 2016.

\bibitem[Coo71]{cookComplexityTheoremprovingProcedures1971}
Stephen~A. Cook.
\newblock The complexity of theorem-proving procedures.
\newblock In {\em Proceedings of the Third Annual {{ACM}} Symposium on {{Theory}} of Computing}, {{STOC}} '71, pages 151--158, New York, NY, USA, 1971. Association for Computing Machinery.

\bibitem[Fan73]{Fannes:1973ddo}
M.~Fannes.
\newblock {A continuity property of the entropy density for spin lattice systems}.
\newblock {\em Commun. Math. Phys.}, 31(4):291--294, 1973.

\bibitem[FGN23]{falorCollapsiblePolynomialHierarchy2023}
Chirag Falor, Shu Ge, and Anand Natarajan.
\newblock A {{Collapsible Polynomial Hierarchy}} for {{Promise Problems}}.
\newblock arXiv:2311.12228, 2023.

\bibitem[FH14]{freedmanQuantumSystemsNonkhyperfinite2014}
Michael~H. Freedman and Matthew~B. Hastings.
\newblock Quantum systems on non-k-hyperfinite complexes: A generalization of classical statistical mechanics on expander graphs.
\newblock {\em Quantum Information \& Computation}, 14(1-2):144--180, 2014.

\bibitem[GH24]{gheorghiuEstimatingEntropyShallow2024}
Alexandru Gheorghiu and Matty~J. Hoban.
\newblock On estimating the entropy of shallow circuit outputs.
\newblock arXiv.2002.12814, 2024.

\bibitem[Gha24]{gharibianGuestColumnFaces2024}
Sevag Gharibian.
\newblock Guest {{Column}}: {{The}} 7 faces of quantum {{NP}}.
\newblock {\em ACM SIGACT News}, 54(4):54--91, 2024.

\bibitem[GI09]{gottesmanQuantumClassicalComplexity2009}
Daniel Gottesman and Sandy Irani.
\newblock The {{Quantum}} and {{Classical Complexity}} of {{Translationally Invariant Tiling}} and {{Hamiltonian Problems}}.
\newblock In {\em 2009 50th {{Annual IEEE Symposium}} on {{Foundations}} of {{Computer Science}}}, pages 95--104, 2009.

\bibitem[GK12a]{gharibianApproximationAlgorithmsQMAComplete2012}
Sevag Gharibian and Julia Kempe.
\newblock Approximation {{Algorithms}} for {{QMA-Complete Problems}}.
\newblock {\em SIAM Journal on Computing}, 41(4):1028--1050, 2012.

\bibitem[GK12b]{gharibianHardnessApproximationQuantum2012}
Sevag Gharibian and Julia Kempe.
\newblock Hardness of {{Approximation}} for {{Quantum Problems}}.
\newblock In Artur Czumaj, Kurt Mehlhorn, Andrew Pitts, and Roger Wattenhofer, editors, {\em Automata, {{Languages}}, and {{Programming}}}, Lecture {{Notes}} in {{Computer Science}}, pages 387--398, Berlin, Heidelberg, 2012. Springer.

\bibitem[GKMP09]{gopalanConnectivityBooleanSatisfiability2009}
Parikshit Gopalan, Phokion~G. Kolaitis, Elitza Maneva, and Christos~H. Papadimitriou.
\newblock The {{Connectivity}} of {{Boolean Satisfiability}}: {{Computational}} and {{Structural Dichotomies}}.
\newblock {\em SIAM Journal on Computing}, 38(6):2330--2355, 2009.

\bibitem[GMV17]{gossetQCMAHardnessGround2017}
David Gosset, Jenish~C. Mehta, and Thomas Vidick.
\newblock {{QCMA}} hardness of ground space connectivity for commuting {{Hamiltonians}}.
\newblock {\em Quantum}, 1:16, 2017.

\bibitem[GN13]{gossetQuantum3SATQMA1Complete2013}
David Gosset and Daniel Nagaj.
\newblock Quantum 3-{{SAT Is QMA1-Complete}}.
\newblock In {\em Proceedings of the 2013 {{IEEE}} 54th {{Annual Symposium}} on {{Foundations}} of {{Computer Science}}}, {{FOCS}} '13, pages 756--765, USA, 2013. IEEE Computer Society.

\bibitem[GP19]{gharibianAlmostOptimalClassical2019}
Sevag Gharibian and Ojas Parekh.
\newblock Almost {{Optimal Classical Approximation Algorithms}} for a {{Quantum Generalization}} of {{Max-Cut}}.
\newblock In Dimitris Achlioptas and L{\'a}szl{\'o}~A. V{\'e}gh, editors, {\em Approximation, {{Randomization}}, and {{Combinatorial Optimization}}. {{Algorithms}} and {{Techniques}} ({{APPROX}}/{{RANDOM}} 2019)}, volume 145 of {\em Leibniz {{International Proceedings}} in {{Informatics}} ({{LIPIcs}})}, pages 31:1--31:17, Dagstuhl, Germany, 2019. Schloss Dagstuhl--Leibniz-Zentrum fuer Informatik.

\bibitem[GR23]{gharibianQuantumSpaceGround2023}
Sevag Gharibian and Dorian Rudolph.
\newblock Quantum {{Space}}, {{Ground Space Traversal}}, and {{How}} to {{Embed Multi-Prover Interactive Proofs}} into {{Unentanglement}}.
\newblock In Yael Tauman~Kalai, editor, {\em 14th {{Innovations}} in {{Theoretical Computer Science Conference}} ({{ITCS}} 2023)}, volume 251 of {\em Leibniz {{International Proceedings}} in {{Informatics}} ({{LIPIcs}})}, pages 53:1--53:23, Dagstuhl, Germany, 2023. Schloss Dagstuhl -- Leibniz-Zentrum f{\"u}r Informatik.

\bibitem[GS15]{gharibianGroundStateConnectivity2015}
Sevag Gharibian and Jamie Sikora.
\newblock Ground {{State Connectivity}} of {{Local Hamiltonians}}.
\newblock In Magn{\'u}s~M. Halld{\'o}rsson, Kazuo Iwama, Naoki Kobayashi, and Bettina Speckmann, editors, {\em Automata, {{Languages}}, and {{Programming}}}, Lecture {{Notes}} in {{Computer Science}}, pages 617--628, Berlin, Heidelberg, 2015. Springer.

\bibitem[GSS{\etalchar{+}}18]{gharibianQuantumGeneralizationsPolynomial2018}
Sevag Gharibian, Miklos Santha, Jamie Sikora, Aarthi Sundaram, and Justin Yirka.
\newblock Quantum {{Generalizations}} of the {{Polynomial Hierarchy}} with {{Applications}} to {{QMA}}(2).
\newblock In Igor Potapov, Paul Spirakis, and James Worrell, editors, {\em 43rd {{International Symposium}} on {{Mathematical Foundations}} of {{Computer Science}} ({{MFCS}} 2018)}, volume 117 of {\em Leibniz {{International Proceedings}} in {{Informatics}} ({{LIPIcs}})}, pages 58:1--58:16, Dagstuhl, Germany, 2018. Schloss Dagstuhl--Leibniz-Zentrum fuer Informatik.

\bibitem[GY19]{gharibianComplexitySimulatingLocal2019}
Sevag Gharibian and Justin Yirka.
\newblock The complexity of simulating local measurements on quantum systems.
\newblock {\em Quantum}, 3:189, 2019.

\bibitem[GY24]{grewalEntangledQuantumPolynomial2024}
Sabee Grewal and Justin Yirka.
\newblock The {{Entangled Quantum Polynomial Hierarchy Collapses}}.
\newblock In {\em 39th {{Computational Complexity Conference}} ({{CCC}} 2024)}. Schloss Dagstuhl -- Leibniz-Zentrum f{\"u}r Informatik, 2024.

\bibitem[HNN13]{hallgrenLocalHamiltonianProblem2013}
Sean Hallgren, Daniel Nagaj, and Sandeep Narayanaswami.
\newblock The local {{Hamiltonian}} problem on a line with eight states is {{QMA-complete}}.
\newblock {\em Quantum Information \& Computation}, 13(9-10):721--750, 2013.

\bibitem[HO24]{hiraharaProbabilisticallyCheckableReconfiguration2024}
Shuichi Hirahara and Naoto Ohsaka.
\newblock Probabilistically {{Checkable Reconfiguration Proofs}} and {{Inapproximability}} of {{Reconfiguration Problems}}.
\newblock In {\em Proceedings of the 56th {{Annual ACM Symposium}} on {{Theory}} of {{Computing}}}, {{STOC}} 2024, pages 1435--1445, New York, NY, USA, 2024. Association for Computing Machinery.

\bibitem[IDH{\etalchar{+}}11]{itoComplexityReconfigurationProblems2011}
Takehiro Ito, Erik~D. Demaine, Nicholas J.~A. Harvey, Christos~H. Papadimitriou, Martha Sideri, Ryuhei Uehara, and Yushi Uno.
\newblock On the complexity of reconfiguration problems.
\newblock {\em Theoretical Computer Science}, 412(12):1054--1065, 2011.

\bibitem[IJ23]{iraniCommutingLocalHamiltonian2023}
Sandy Irani and Jiaqing Jiang.
\newblock Commuting {{Local Hamiltonian Problem}} on {{2D}} beyond qubits.
\newblock arXiv:2309.04910, 2023.

\bibitem[Kin23]{kingImprovedApproximationAlgorithm2023}
Robbie King.
\newblock An {{Improved Approximation Algorithm}} for {{Quantum Max-Cut}} on {{Triangle-Free Graphs}}.
\newblock {\em Quantum}, 7:1180, 2023.

\bibitem[KKR06]{kempeComplexityLocalHamiltonian2006}
Julia Kempe, Alexei Kitaev, and Oded Regev.
\newblock The {{Complexity}} of the {{Local Hamiltonian Problem}}.
\newblock {\em SIAM Journal on Computing}, 35(5):1070--1097, 2006.

\bibitem[KR03]{kempejulia3localHamiltonianQMAcomplete2003}
{Kempe, Julia} and {Regev, Oded}.
\newblock 3-local {{Hamiltonian}} is {{QMA-complete}}.
\newblock {\em Quantum Information \& Computation}, 3(3):258--264, 2003.

\bibitem[KSV02]{kitaevClassicalQuantumComputation2002}
A.~Kitaev, A.~Shen, and M.~Vyalyi.
\newblock {\em Classical and {{Quantum Computation}}}, volume~47 of {\em Graduate {{Studies}} in {{Mathematics}}}.
\newblock American Mathematical Society, 2002.

\bibitem[Lee22]{leeOptimizingQuantumCircuit2022}
Eunou Lee.
\newblock Optimizing quantum circuit parameters via {{SDP}}.
\newblock arXiv.2209.00789, 2022.

\bibitem[{Leo}73]{leonidlevinUniversalSearchProblems1973}
{Leonid Levin}.
\newblock Universal search problems.
\newblock {\em Problems of Information Transmission}, 9(3):265--266, 1973.

\bibitem[LG17]{lockhartQuantumStateIsomorphism2017}
Joshua Lockhart and Carlos E.~Gonz{\'a}lez Guill{\'e}n.
\newblock Quantum {{State Isomorphism}}.
\newblock arXiv:1709.09622, 2017.

\bibitem[Mah18]{mahadevClassicalVerificationQuantum2018}
Urmila Mahadev.
\newblock Classical {{Verification}} of {{Quantum Computations}}.
\newblock In {\em 2018 {{IEEE}} 59th {{Annual Symposium}} on {{Foundations}} of {{Computer Science}} ({{FOCS}})}, pages 259--267, 2018.

\bibitem[Nis18]{nishimuraIntroductionReconfiguration2018}
Naomi Nishimura.
\newblock Introduction to {{Reconfiguration}}.
\newblock {\em Algorithms}, 11(4):52, 2018.

\bibitem[Ohs23]{ohsakaGapPreservingReductions2023}
Naoto Ohsaka.
\newblock Gap {{Preserving Reductions Between Reconfiguration Problems}}.
\newblock In {\em 40th {{International Symposium}} on {{Theoretical Aspects}} of {{Computer Science}} ({{STACS}} 2023)}. Schloss Dagstuhl -- Leibniz-Zentrum f{\"u}r Informatik, 2023.

\bibitem[PT22]{parekhOptimalProductStateApproximation2022}
Ojas Parekh and Kevin Thompson.
\newblock An {{Optimal Product-State Approximation}} for 2-{{Local Quantum Hamiltonians}} with {{Positive Terms}}.
\newblock arXiv:2206.08342, 2022.

\bibitem[Reg09]{regevLatticesLearningErrors2009}
Oded Regev.
\newblock On lattices, learning with errors, random linear codes, and cryptography.
\newblock {\em J. ACM}, 56(6):34:1--34:40, 2009.

\bibitem[Sch78]{schaeferComplexitySatisfiabilityProblems1978}
Thomas~J. Schaefer.
\newblock The complexity of satisfiability problems.
\newblock In {\em Proceedings of the Tenth Annual {{ACM}} Symposium on {{Theory}} of Computing}, {{STOC}} '78, pages 216--226, New York, NY, USA, 1978. Association for Computing Machinery.

\bibitem[Sch11]{schuchComplexityCommutingHamiltonians2011}
Norbert Schuch.
\newblock Complexity of commuting {{Hamiltonians}} on a square lattice of qubits.
\newblock {\em Quantum Information \& Computation}, 11(11-12):901--912, 2011.

\bibitem[SM24]{s.InapproximabilityReconfigurationProblems2024}
Karthik~C. S. and Pasin Manurangsi.
\newblock On {{Inapproximability}} of {{Reconfiguration Problems}}: {{PSPACE-Hardness}} and some {{Tight NP-Hardness Results}}.
\newblock arXiv:2312.17140, 2024.

\bibitem[TRZ{\etalchar{+}}23]{takahashiSUSymmetricSemidefinite2023}
Jun Takahashi, Chaithanya Rayudu, Cunlu Zhou, Robbie King, Kevin Thompson, and Ojas Parekh.
\newblock An {{SU}}(2)-symmetric {{Semidefinite Programming Hierarchy}} for {{Quantum Max Cut}}.
\newblock arXiv:2307.15688, 2023.

\bibitem[Uma99]{umansHardnessApproximatingSpl1999}
C.~Umans.
\newblock Hardness of approximating /spl {{Sigma}}//sub 2//sup p/ minimization problems.
\newblock In {\em 40th {{Annual Symposium}} on {{Foundations}} of {{Computer Science}} ({{Cat}}. {{No}}.{{99CB37039}})}, pages 465--474, 1999.

\bibitem[Wat02]{watrousLimitsPowerQuantum2002}
J.~Watrous.
\newblock Limits on the power of quantum statistical zero-knowledge.
\newblock In {\em The 43rd {{Annual IEEE Symposium}} on {{Foundations}} of {{Computer Science}}, 2002. {{Proceedings}}.}, pages 459--468, 2002.

\bibitem[WBG23]{watsonComplexityTranslationallyInvariant2023}
James~D. Watson, Johannes Bausch, and Sevag Gharibian.
\newblock The {{Complexity}} of {{Translationally Invariant Problems Beyond Ground State Energies}}.
\newblock In Petra Berenbrink, Patricia Bouyer, Anuj Dawar, and Mamadou~Moustapha Kant{\'e}, editors, {\em 40th {{International Symposium}} on {{Theoretical Aspects}} of {{Computer Science}} ({{STACS}} 2023)}, volume 254 of {\em Leibniz {{International Proceedings}} in {{Informatics}} ({{LIPIcs}})}, pages 54:1--54:21, Dagstuhl, Germany, 2023. Schloss Dagstuhl -- Leibniz-Zentrum f{\"u}r Informatik.

\bibitem[WCE{\etalchar{+}}24]{wattsRelaxationsExactSolutions2024}
Adam~Bene Watts, Anirban Chowdhury, Aidan Epperly, J.~William Helton, and Igor Klep.
\newblock Relaxations and {{Exact Solutions}} to {{Quantum Max Cut}} via the {{Algebraic Structure}} of {{Swap Operators}}.
\newblock {\em Quantum}, 8:1352, 2024.

\bibitem[Wil13]{10.5555/2505455}
Mark~M. Wilde.
\newblock {\em Quantum Information Theory}.
\newblock Cambridge University Press, USA, 1st edition, 2013.

\bibitem[WZ06]{wocjanSeveralNaturalBQPComplete2006}
Pawel Wocjan and Shengyu Zhang.
\newblock Several natural {{BQP-Complete}} problems.
\newblock arXiv:quant-ph/0606179, 2006.

\bibitem[Yam02]{yamakamiQuantumNPQuantum2002}
Tomoyuki Yamakami.
\newblock Quantum {{NP}} and a {{Quantum Hierarchy}}.
\newblock In Ricardo {Baeza-Yates}, Ugo Montanari, and Nicola Santoro, editors, {\em Foundations of {{Information Technology}} in the {{Era}} of {{Network}} and {{Mobile Computing}}}, {{IFIP}} --- {{The International Federation}} for {{Information Processing}}, pages 323--336. Springer US, Boston, MA, 2002.

\end{thebibliography}

\end{document}